\newcommand{\lcm}{{\rm lcm}}
\newcommand{\ord}{{\mathrm{ord}}}
\newcommand{\Z}{\mathbb{{Z}}}
\newcommand{\gf}{{\mathrm{GF}}}
\newcommand{\C}{{\mathcal{C}}}
\newtheorem{theorem}{Theorem}
\newtheorem{lemma}[theorem]{Lemma}
\newtheorem{proposition}[theorem]{Proposition}
\newtheorem{problem}{Open Problem}
\newtheorem{definition}{Definition}
\newtheorem{example}{Example}
\newcommand{\Fp}{\gf(p)}
\newcommand{\Fq}{\gf(q)}
\newcommand{\Fqm}{\gf(q^m)}
\newcommand{\Ft}{\gf(3)}
\newcommand{\Ftm}{\gf(3^m)}
\newcommand{\Fthm}{\gf(3^{\frac{m}{2}})}
\newcommand{\al}{\alpha}
\newcommand{\be}{\beta}
\newcommand{\de}{\delta}
\newcommand{\sig}{\sigma}
\newcommand{\ep}{\epsilon}
\newcommand{\ol}{\overline}
\newcommand{\Tr}{\mathrm{Tr}}
\newcommand{\Trtmt}{\text{Tr}^{3^m}_3}
\newcommand{\ra}[1]{\renewcommand{\arraystretch}{#1}}
\newcommand{\ul}{\underline}
\newcommand{\z}{\zeta_3}
\newcommand{\hm}{\frac{m}{2}}
\newcommand{\mpo}{\frac{m+1}{2}}
\newcommand{\mmo}{\frac{m-1}{2}}
\newcommand{\mpt}{\frac{m+3}{2}}
\newcommand{\mmt}{\frac{m-3}{2}}
\newcommand{\sq}{\sqrt{-1}}
\begin{document}

\title{Narrow-Sense BCH Codes over $\gf(q)$ with Length $n=\frac{q^m-1}{q-1}$\thanks{
The research of C. Ding was supported by the Hong Kong Research Grants Council, under Grant No. 16300415.
The research of M. Xiong was supported by RGC grant number 609513 from Hong Kong. The research of G. Ge was supported by the National Natural Science Foundation of China under Grant Nos. 11431003 and 61571310.}}

\author{Shuxing~Li, Cunsheng~Ding, Maosheng Xiong, and~Gennian Ge   
\thanks{S. Li is with the School of Mathematical Sciences, Zhejiang University, Hangzhou 310027, Zhejiang, China (e-mail: sxli@zju.edu.cn).}
\thanks{C. Ding is with the Department of Computer Science and Engineering, The Hong Kong University of Science and Technology,
Clear Water Bay, Kowloon, Hong Kong, China (e-mail: cding@ust.hk).}
\thanks{M. Xiong is with the Department of Mathematics, The Hong Kong University of Science and Technology, Clear Water Bay, Kowloon, Hong Kong (e-mail: mamsxiong@ust.hk).}
\thanks{G. Ge is with the School of Mathematical Sciences, Capital Normal University, Beijing 100048, China (e-mail: gnge@zju.edu.cn). He is also with Beijing Center for Mathematics and Information Interdisciplinary Sciences, Beijing, 100048, China.}
}

\maketitle

\begin{abstract}
Cyclic codes are widely employed in communication systems, storage devices and consumer electronics, as they have efficient encoding and decoding algorithms. BCH codes, as a special subclass of cyclic codes, are in most cases among the best cyclic codes. A subclass of good BCH codes are the narrow-sense BCH codes
over $\gf(q)$ with length $n=(q^m-1)/(q-1)$. Little is known about this class of BCH codes when $q>2$. The objective of this paper is to study some of the codes within this class. In particular, the dimension, the minimum distance, and the weight distribution of some ternary BCH codes with length $n=(3^m-1)/2$ are
determined in this paper. A class of ternary BCH codes meeting the Griesmer bound is identified. An application of some of the BCH codes in secret sharing is also investigated.
\end{abstract}

\begin{IEEEkeywords}
BCH codes, Bose distance, cyclic codes, minimum distance, quadratic forms, secret sharing, weight distribution.
\end{IEEEkeywords}

\section{Introduction}\label{sec-intro}

Throughout this paper, let $q$ be a power of a prime $p$. An $[n,k,d]$ linear code $\C$ over $\gf(q)$ is a $k$-dimensional subspace of $\gf(q)^n$ with minimum Hamming distance $d$. If in addition of being a linear code, it satisfies the condition that any $(c_0,c_1, \cdots, c_{n-1}) \in \C$ implies $(c_{n-1}, c_0, c_1, \cdots, c_{n-2}) \in \C$, then $\C$ is called a {\em cyclic} code.

By identifying each vector $(c_0,c_1, \cdots, c_{n-1}) \in \gf(q)^n$ with a polynomial
$$
c_0+c_1x+c_2x^2+ \cdots + c_{n-1}x^{n-1} \in R:=\gf(q)[x]/(x^n-1),
$$
a linear code $\C$ of length $n$ over $\gf(q)$ corresponds to a $\gf(q)$-submodule of the ring $R$. Moreover, $\C$ is cyclic if and only if the corresponding submodule is an ideal of $R$.

Note that every ideal of $R$ is principal. Let $\C \subset R$ be a cyclic code of length $n$ over $\gf(q)$, then $\C=\langle g(x) \rangle$ where $g(x) \in \gf(q)[x]$ may be chosen to be monic and to have the smallest degree among all the generators of $\C$. This $g(x)$ is unique and  satisfies $g(x)|(x^n-1)$ and is called the {\em generator polynomial}, and $h(x):=(x^n-1)/g(x)$ is called the {\em parity-check} polynomial of $\C$. If the generator polynomial $g(x)$ (resp. the parity-check polynomial $h(x)$) can be factored into a product of $s$ irreducible polynomials over $\gf(q)$, then $\C$ is called a cyclic code with $s$ zeroes (resp. $s$ nonzeroes). In this paper, we consider only cyclic codes of length $n$ over $\gf(q)$ with $\gcd(n,q)=1$, which implies that the generator polynomial of the code does not have repeated roots.

Let $n$ be a positive integer. Let $m=\ord_n(q)$, that is, $m$ is the smallest positive integer such that $n | q^m-1$. Let $\alpha$ be a generator of $\gf(q^m)^*:=\gf(q^m)\setminus \{0\}$, and put $\beta=\alpha^{(q^m-1)/n}$. Then $\beta$ is a primitive $n$-th root of unity in $\gf(q^m)$. For each $i$ where $0 \leq i \leq n-1$, let $m_i(x)$ denote the minimal polynomial of $\beta^i$ over $\gf(q)$. For each $2 \leq \delta \le n$, define
$$
g_{(n,q,m,\delta)}(x)=\lcm(m_{1}(x), m_{2}(x), \cdots, m_{\delta-1}(x)),
$$
where $\lcm$ denotes the least common multiple of the polynomials. We also define
$$
\tilde{g}_{(n,q,m,\delta)}(x)=(x-1)g_{(n,q,m,\delta)}(x).
$$
Let $\C_{(n,q, m, \delta)}$ and $\tilde{\C}_{(n,q, m, \delta)}$ denote the cyclic codes of length $n$ with generator
polynomials $g_{(n,q, m,\delta)}(x)$ and $\tilde{g}_{(n,q,m,\delta)}(x)$, respectively. Then $\C_{(n,q, m, \delta)}$ is called a \emph{narrow-sense BCH code} with \emph{designed distance} $\delta$, and $\tilde{\C}_{(n,q, m, \delta)}$ is the even-like subcode of $\C_{(n,q, m, \delta)}$. Clearly, we have
$$
\dim(\tilde{\C}_{(n,q, m, \delta)})=\dim(\C_{(n,q, m, \delta)})-1.
$$
Since $g_{(n,q,m,\delta)}(x)$ has $\delta-1$ consecutive roots $\beta^i$ for all $1 \leq i \leq \delta -1$, and $\tilde{g}_{(n,q,m,\delta)}(x)$ has $\delta$ consecutive roots $\beta^i$ for all $0 \leq i \leq \delta -1$, it follows from the BCH bound that the minimum distances of $\C_{(n,q, m, \delta)}$ and $\tilde{\C}_{(n,q, m, \delta)}$ are at least $\delta$ and $\delta +1$, respectively. Due to this fact, $\delta$ is called the designed distance of the code $\C_{(n,q, m, \delta)}$.

It is well known that for two different design distances $\delta$ and $\delta'$, the codes $\C_{(n,q, m, \delta)}$ and $\C_{(n,q, m, \delta')}$ may be the same. The largest designed distance of
$\C_{(n,q, m, \delta)}$ is called the \emph{Bose distance} \cite[p. 205]{MS77}
and is denoted by $d_B$. Since the Bose distance of $\C_{(n,q, m, \delta)}$ serves as the largest lower bound on the minimum distance of the code among all designed distances, it is useful to determine the Bose distance of the BCH code, if the minimum distance cannot be obtained.

The cyclic codes $\C_{(n,q, m, \delta)}$ are treated in almost every book on coding theory. When $n=q^m-1$, the codes
$\C_{(n,q, m, \delta)}$ are called \emph{narrow-sense primitive BCH codes} and have been extensively studied in the literature \cite{AKS,ACS92,AS94,Berlekamp,CC,Charp90,Charp98,YH96,DDZ15,KL72,KLP,KN,Man,Mann,Peter,YF,YZ}. The reader is referred to \cite{DDZ15} for a recent summary of various results on narrow-sense primitive BCH codes. When $n=\frac{q^m-1}{q-1}$, the codes $\C_{(n, q, m, \delta)}$ and $\tilde{\C}_{(n,q, m, \delta)}$ are also interesting, however, very little is known about them when $q>2$.

The objective of this paper is to study some special classes of the narrow-sense BCH codes of length $\frac{q^m-1}{q-1}$ for $q >2$. We employ various tools, including cyclotomic cosets, locator polynomials, nondecreasing sequence decompositions, exponential sums and the theory of quadratic forms over finite fields to study important parameters such as dimensions, Bose distances, minimum distances and weight distributions of these codes. A class of ternary BCH codes meeting the Griesmer bound is presented. Moreover, an application of some of the BCH codes presented in this paper is also investigated.

As will be shown, some narrow-sense BCH codes of length $\frac{q^m-1}{q-1}$ have optimal parameters. To investigate the optimality of some of the codes studied in this paper, we compare them with the tables of the best known linear codes maintained by Markus Grassl at http://www.codetables.de, which are called the {\em Database} later in this paper. In some cases, we also use the tables of the best cyclic codes given in the monograph \cite{Dingbk15} as benchmarks.

\section{Preliminaries}\label{sec-pre}

In this section, we present some necessary background concerning cyclotomic cosets, coset leaders, nondecreasing sequence  decompositions, quadratic forms over finite fields and exponential sums. Some known results about BCH codes are also recalled.

\subsection{Cyclotomic cosets}

Let $n$ be a positive integer such that $\gcd(n,q)=1$. Let $\Z_n$ denote the ring of integers modulo $n$. Let $s$ be an integer with $0 \leq s <n$. The \emph{$q$-cyclotomic coset of $s$ modulo $n$\index{$q$-cyclotomic coset modulo $n$}} is defined by
$$
C_s=\{s, sq, sq^2, \cdots, sq^{\ell_s-1}\} \bmod n \subseteq \Z_n,
$$
where $\ell_s$ is the smallest positive integer such that $q^{\ell_s}s \equiv s \pmod{n}$. $\ell_s$ is also the size of the $q$-cyclotomic coset $C_s$. The smallest nonnegative integer in $C_s$ is called the \emph{coset leader\index{coset leader}} of $C_s$.
Let $\Gamma_{(n,q)}$ be the set of all the coset leaders. Then we have $C_s \cap C_t = \emptyset$ for any distinct elements $s$ and $t$ in  $\Gamma_{(n,q)}$, and
\begin{eqnarray}\label{eqn-cosetPP}
\bigcup_{s \in  \Gamma_{(n,q)} } C_s = \Z_n.
\end{eqnarray}
Namely, the $q$-cyclotomic cosets modulo $n$ form a partition of $\Z_n$.

Suppose $m=\ord_{n}(q)$ and $\beta$ is a primitive $n$-th root of unity in $\gf(q^m)$. Then, the minimal
polynomial $m_{s}(x)$ of $\beta^s$ over $\gf(q)$ is the monic polynomial of the smallest degree over
$\gf(q)$ with $\beta^s$ as a zero. By Galois theory, this polynomial is irreducible over $\gf(q)$ and is given by
\begin{eqnarray*}
m_{s}(x)=\prod_{i \in C_s} (x-\beta^i) \in \gf(q)[x].
\end{eqnarray*}
It follows from (\ref{eqn-cosetPP}) that
\begin{eqnarray*}
x^n-1=\prod_{s \in  \Gamma_{(n,q)}} m_{s}(x),
\end{eqnarray*}
which is the factorization of $x^n-1$ into irreducible factors over $\gf(q)$. Thus, for a cyclic code
$$
\C=\langle g(x) \rangle \subset \gf(q)[x]/(x^n-1),
$$
the generator polynomial $g(x)$ is a product of some $m_{s}(x)$'s. Hence, the degree of $g(x)$ and the dimension of $\C$ can be determined by the size of cyclotomic cosets associated with $g(x)$.

Regarding the size of $q$-cyclotomic cosets modulo $n$, we have the following lemma.

\begin{lemma}{\rm \cite[Theorem 4.1.4]{HP03}}
The size $\ell_s$ of each $q$-cyclotomic coset $\C_s$ is a divisor of $\ord_{n}(q)$, which is the size $\ell_1$ of $C_1$.
\end{lemma}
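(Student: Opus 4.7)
The plan is to show this by a minimality/division-algorithm argument applied to the set of positive integers $k$ for which $q^k s \equiv s \pmod{n}$.

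First I would handle the easy identification: by the definition of $\ell_1$, it is the smallest positive integer with $q^{\ell_1}\cdot 1 \equiv 1 \pmod{n}$, which is exactly $\ord_n(q)$. So the claim reduces to showing $\ell_s \mid \ell_1$ for every coset leader $s$.

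Next I would consider the set $S_s := \{k \in \mathbb{Z}_{>0} : q^k s \equiv s \pmod n\}$. The key observation is that $\ell_1 \in S_s$, because $q^{\ell_1} \equiv 1 \pmod n$ implies $q^{\ell_1} s \equiv s \pmod n$. By definition, $\ell_s = \min S_s$. To conclude, I would show $S_s$ is closed under taking differences of its elements (when positive): if $a,b \in S_s$ with $a > b$, then $q^{a-b} s \equiv q^{-b}\cdot q^a s \equiv q^{-b} s \equiv s \pmod n$, where $q^{-b}$ makes sense because $\gcd(q,n)=1$ (recalled in Section~\ref{sec-pre}). Writing $\ell_1 = t \ell_s + r$ with $0 \le r < \ell_s$ via the division algorithm, a standard iterative application of this closure property forces $r \in S_s \cup \{0\}$; minimality of $\ell_s$ excludes $r>0$, leaving $r=0$, hence $\ell_s \mid \ell_1$.

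I do not expect any real obstacle; the result is essentially a restatement of the fact that in the cyclic group $\langle q \rangle \subseteq (\mathbb{Z}/n\mathbb{Z})^{*}$, the stabilizer subgroup of the element $s \bmod n$ has order dividing the order of the group, and $\ell_s$ is the index of this stabilizer. One could alternatively just quote this group-theoretic fact directly. The only small care needed is invoking $\gcd(n,q)=1$ to invert $q$ modulo $n$ in the closure step.
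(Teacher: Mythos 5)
Your argument is correct: identifying $\ell_1$ with $\ord_n(q)$, observing that $\ell_1\in S_s$, and then using closure of $S_s$ under positive differences (valid since $\gcd(n,q)=1$ makes $q$ invertible modulo $n$) together with the division algorithm is exactly the standard stabilizer/orbit-size proof. The paper itself offers no proof of this lemma, merely the citation to \cite[Theorem~4.1.4]{HP03}, and your argument matches the textbook one; the only cosmetic slip is that you restrict to coset leaders $s$ at one point, though nothing in your argument uses that restriction and the statement is for arbitrary $s$.
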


The following lemma says when $s$ is small, the size of the $q$-cyclotomic coset $\C_s \subset \Z_n$ is always equal to $\ord_n(q)$.

\begin{lemma}{\rm \cite[Lemma 8]{AKS}}
Suppose $q^{\lfloor m/2 \rfloor}<n \leq q^m-1$, where
$m=\ord_n(q)$. Then the $q$-cyclotomic coset $C_s$ has cardinality
$m$ for all $s$ in the range $1 \leq s \leq n q^{\lceil m/2 \rceil}/(q^m-1)$.
\end{lemma}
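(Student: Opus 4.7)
The plan is to argue by contradiction, combining the divisibility constraint on $\ell_s$ coming from the previous lemma with a size estimate derived from the congruence defining the coset.

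First, I would suppose toward contradiction that $\ell_s < m$. Since the previous lemma guarantees $\ell_s \mid m$, this forces $\ell_s \le \lfloor m/2 \rfloor$, and therefore $q^{\ell_s}\le q^{\lfloor m/2\rfloor}$. By the very definition of $\ell_s$, we have $q^{\ell_s}s\equiv s\pmod{n}$, i.e. $n\mid (q^{\ell_s}-1)s$. Since $s\ge 1$ and $\ell_s\ge 1$, the integer $(q^{\ell_s}-1)s$ is strictly positive, hence it is at least $n$. This yields the lower bound
\[
s\;\ge\;\frac{n}{q^{\ell_s}-1}\;\ge\;\frac{n}{q^{\lfloor m/2\rfloor}-1}.
\]

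Next, I would combine this with the hypothesis $s\le n\,q^{\lceil m/2\rceil}/(q^m-1)$. Cross-multiplying gives
\[
q^m-1\;\le\;q^{\lceil m/2\rceil}\bigl(q^{\lfloor m/2\rfloor}-1\bigr)
\;=\;q^{\lceil m/2\rceil+\lfloor m/2\rfloor}-q^{\lceil m/2\rceil}
\;=\;q^m-q^{\lceil m/2\rceil},
\]
using $\lceil m/2\rceil+\lfloor m/2\rfloor=m$ whether $m$ is even or odd. This simplifies to $q^{\lceil m/2\rceil}\le 1$, which is impossible for $q\ge 2$ and $m\ge 1$. The contradiction shows $\ell_s=m$, completing the proof.

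The argument is essentially elementary, and no step looks genuinely hard. The only place where one must pay a little attention is the parity split in the exponent arithmetic: one has to notice that $\lceil m/2\rceil+\lfloor m/2\rfloor=m$ in both cases, so that the upper bound on $s$ matches exactly the lower bound forced by $\ell_s\le\lfloor m/2\rfloor$. The hypothesis $n>q^{\lfloor m/2\rfloor}$ is not actually needed for the contradiction itself, but it is the natural regime in which the range for $s$ is non-empty (otherwise the claimed bound $n\,q^{\lceil m/2\rceil}/(q^m-1)$ would be less than $1$ and the statement would be vacuous).
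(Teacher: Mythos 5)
Your argument is correct, and since the paper simply cites this result from [AKS, Lemma 8] without reproducing a proof, there is nothing to diverge from: your contradiction via $\ell_s \mid m \Rightarrow \ell_s \le \lfloor m/2\rfloor$ together with $n \mid (q^{\ell_s}-1)s$ forcing $s \ge n/(q^{\lfloor m/2\rfloor}-1)$ is exactly the standard argument given in that reference. The only cosmetic caveat is that the case $m=1$ should be noted as vacuous (so that $q^{\lfloor m/2\rfloor}-1>0$ when you divide), but this is immediate since $\ell_s\ge 1$ rules out $\ell_s<m$ there.
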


As a direct consequence, the dimension of some BCH codes can be easily obtained.

\begin{theorem}{\rm \cite[Theorem 10]{AKS}}\label{thm-dim}
Suppose $q^{\lfloor m/2 \rfloor}<n \leq q^m-1$, where
$m=\ord_n(q)$. Then the narrow-sense BCH code $\C_{(n,q,m,\delta)}$ with $\delta$ in the range $2 \leq \delta \leq \min\{\lfloor n q^{\lceil m/2 \rceil}/(q^m-1) \rfloor, n\}$ has dimension
$$
k=n-m\lceil (\delta -1)(1-1/q) \rceil.
$$
\end{theorem}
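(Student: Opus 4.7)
The dimension is $k=n-\deg g_{(n,q,m,\delta)}(x)$, and $\deg g=|C_1\cup\cdots\cup C_{\delta-1}|$ since distinct minimal polynomials correspond to disjoint cyclotomic cosets. The previous lemma guarantees that each $C_s$ with $1\le s\le\delta-1$ has size exactly $m$, so $\deg g=mN$ where $N$ is the number of distinct cosets appearing among $C_1,\ldots,C_{\delta-1}$. Since $\lceil(\delta-1)(1-1/q)\rceil=(\delta-1)-\lfloor(\delta-1)/q\rfloor$ equals the number of integers in $\{1,\ldots,\delta-1\}$ that are not divisible by $q$, it suffices to establish that $s\in\{1,\ldots,\delta-1\}$ is the coset leader of $C_s$ if and only if $q\nmid s$.

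The ``only if'' direction is quick: writing $s=qt$ with $1\le t<s$, the identity $q^m\equiv 1\pmod n$ gives $sq^{m-1}=tq^m\equiv t\pmod n$, so $t\in C_s$ is a strictly smaller representative. Conversely, assume $q\nmid s$ and, for contradiction, take $s'\in C_s$ with $1\le s'<s$; write $s'\equiv sq^j\pmod n$ for some $1\le j\le m-1$. Since $s'<s\le sq^j$, the reduction is nontrivial and $sq^j=s'+kn$ with $k\ge 1$. For the mirror relation $s\equiv s'q^{m-j}\pmod n$, split into two subcases. If $s'q^{m-j}<n$ then $s=s'q^{m-j}$ in $\Z$, which forces $q\mid s$, a contradiction. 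Otherwise $s'q^{m-j}=s+\ell n$ with $\ell\ge 1$, and eliminating $s,s'$ between the two equations via $q^m\equiv 1\pmod n$ yields
\[
s=\frac{n(\ell+kq^{m-j})}{q^m-1},\qquad s'=\frac{n(k+\ell q^j)}{q^m-1}.
\]

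Set $M=\lceil m/2\rceil$. The hypothesis $\delta\le\lfloor nq^M/(q^m-1)\rfloor$ yields the strict bound $s,s'\le\delta-1<nq^M/(q^m-1)$, so the integer expressions above satisfy $\ell+kq^{m-j}\le q^M-1$ and $k+\ell q^j\le q^M-1$. Using $k,\ell\ge 1$, these reduce to $q^{m-j}<q^M$ and $q^j<q^M$, equivalently $m-M<j<M$. This open interval has length $2M-m\in\{0,1\}$, so it contains no integer, the desired contradiction. Consequently $N=\lceil(\delta-1)(1-1/q)\rceil$ and $k=n-m\lceil(\delta-1)(1-1/q)\rceil$. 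The main obstacle is precisely this last step: the sharp threshold $\delta\le\lfloor nq^{\lceil m/2\rceil}/(q^m-1)\rfloor$ must be used in both congruences simultaneously to trap $j$ in an integer-free interval, and a less careful bookkeeping of the strict inequality would leave room for an integer $j=m/2$ to survive.
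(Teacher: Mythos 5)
The paper does not prove this theorem at all --- it is quoted from \cite[Theorem 10]{AKS} and presented as a direct consequence of the preceding lemma on coset sizes --- and your argument is correct and is exactly the intended route: every coset $C_s$ with $s\le \delta-1$ has size $m$ by that lemma, and the coset leaders among $1,\dots,\delta-1$ are precisely the integers not divisible by $q$, which you establish by trapping the shift index $j$ in the integer-free interval $(m-\lceil m/2\rceil,\,\lceil m/2\rceil)$. One cosmetic remark: ``an open interval of length $0$ or $1$ contains no integer'' is only valid because the endpoints $m-M$ and $M$ are themselves integers (an open unit interval with non-integer endpoints does contain an integer), so it is worth stating that explicitly; with that noted, the proof is complete.
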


The following proposition illustrates the close relation between coset leaders and the Bose distance of the narrow-sense BCH code $\C_{(n,q,m,\delta)}$.

\begin{proposition}\label{prop-Bose}
The Bose distance $d_B$ of the code $\C_{(n,q,m,\delta)}$ is a coset leader of a $q$-cyclotomic coset modulo $n$. Moreover, if $\de$ is a coset leader, then $d_B=\de$.
\end{proposition}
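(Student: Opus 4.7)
The plan is to reformulate everything in terms of the \emph{defining set} of the code and then exploit the fact that this defining set is, by construction, a union of $q$-cyclotomic cosets modulo $n$.

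Let $T \subseteq \Z_n$ denote the defining set of $\C_{(n,q,m,\delta)}$, i.e., $T=\{i\in\Z_n : g_{(n,q,m,\delta)}(\beta^i)=0\}$. Since $g_{(n,q,m,\delta)}(x)=\lcm(m_1(x),\ldots,m_{\delta-1}(x))$ and each $m_j(x)$ is the minimal polynomial of $\beta^j$, we have
$$
T=\bigcup_{j=1}^{\delta-1}C_j,
$$
which is automatically closed under multiplication by $q$ modulo $n$. By the definition of the Bose distance, $d_B$ is the largest integer $\delta'$ for which $\C_{(n,q,m,\delta')}=\C_{(n,q,m,\delta)}$; this is equivalent to saying that $\{1,2,\ldots,\delta'-1\}\subseteq T$. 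Hence $d_B$ is characterized by the two conditions
$$
\{1,2,\ldots,d_B-1\}\subseteq T \quad\text{and}\quad d_B\notin T.
$$
I would take this as the working definition of $d_B$ for the rest of the argument.

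For the first assertion, I would argue by contradiction. Suppose $d_B$ is not the coset leader of $C_{d_B}$, and let $\ell$ be the coset leader of $C_{d_B}$. Since $d_B\geq 2$, the coset $C_{d_B}$ is not the trivial coset $\{0\}$, so $1\leq \ell<d_B$. Then $\ell\in\{1,2,\ldots,d_B-1\}\subseteq T$, and because $T$ is a union of $q$-cyclotomic cosets we get $C_{d_B}=C_\ell\subseteq T$, in particular $d_B\in T$, contradicting the second characterizing property of $d_B$.

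For the second assertion, assume $\delta$ is a coset leader. By definition of the designed distance we already have $d_B\geq\delta$, so it suffices to show $\delta\notin T$. If $\delta\in T=\bigcup_{j=1}^{\delta-1}C_j$, then $\delta\in C_j$ for some $1\leq j\leq \delta-1$, hence $C_j=C_\delta$ and $j$ is a smaller element of $C_\delta$ than $\delta$, contradicting the hypothesis that $\delta$ is the coset leader of $C_\delta$. Therefore $\delta\notin T$, which forces $d_B\leq\delta$, and combined with the reverse inequality we conclude $d_B=\delta$. There is no serious technical obstacle here; the only thing to be careful about is the trivial boundary case $\ell=0$ in Step~2, which is ruled out by $d_B\geq 2$.
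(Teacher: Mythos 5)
Your proof is correct and takes essentially the same approach as the paper's, merely recast in the language of the defining set $T$ rather than roots of the generator polynomial $g_{(n,q,m,\delta)}(x)$: both arguments hinge on $T$ being a union of $q$-cyclotomic cosets and derive the identical contradictions. Your explicit exclusion of the boundary case $\ell=0$ via $d_B\geq 2$ is a small extra care that the paper leaves implicit.
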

\begin{proof}
Suppose $d_B$ is the Bose distance of $\C_{(n,q,m,\delta)}$, then by definition, $\beta^i$ is a root of $g_{(n,q,m,\delta)}(x)$ for each $1 \le i \le d_B-1$. Assume that $d_B$ is not a coset leader. Then there exists some $1 \le j \le d_B-1$, such that $d_B \in C_j$. Therefore, $\beta^{d_B}$ is also a root of $g_{(n,q,m,\delta)}(x)$. This implies $\C_{(n,q,m,\delta)}=\C_{(n,q,m,d_B+1)}$ and $\C_{(n,q,m,\delta)}$ has a designed distance $d_B+1>d_B$. This contradicts to the definition of the Bose distance.

Moreover, if $\de$ is a coset leader, then,
\[\de \not \in \bigcup_{1 \le j \le \de -1} C_j.\]
By the definition of $g_{(n,q,m,\delta)}(x)$, $\be^{\de}$ is not a root of $g_{(n,q,m,\delta)}(x)$. Hence, $\de$ is the largest designed distance of $\C_{(n,q,m,\delta)}$ and we have $d_B=\de$.
\end{proof}

\subsection{Nondecreasing sequence decompositions and coset leaders}

There have been some interesting results on coset leaders of $q$-cyclotomic cosets when $n=q^m-1$ (\cite{DDZ15,Man,YF}). Particularly, in \cite{YF}, the concept of nondecreasing sequence decompositions was proposed and its close relation with coset leaders modulo $q^m-1$ was discussed. In this subsection, we show that this concept is also helpful in determining some special $q$-cyclotomic coset leaders modulo $\frac{q^m-1}{q-1}$. Here we always assume that $q >2$ is a prime power.

Suppose $\ul{v}=(v_{n-1},v_{n-2},\ldots,v_0)$ is a sequence of length $n$ with each component $v_i$ satisfying $0 \le v_i \le q-1$. The sequence $\ul{v}$ is called a {\rm nondecreasing sequence} (NDS) if $v_{i+1} \le v_{i}$ for $0 \le i \le n-2$. Every sequence has a unique {\rm nondecreasing sequence decomposition} as a concatenation $\ul{V_1}\ul{V_2}\ldots\ul{V_r}$ where the $\ul{V_i}$'s are NDSs and $r$ is minimal. Let $\ul{V}=(v_{l},\ldots,v_0)$ and $\ul{W}=(u_{k},\ldots,u_0)$ be two NDSs. We say $\ul{V}=\ul{W}$ if $l=k$ and $v_{l-i}=w_{l-i}$ for $0 \le i \le l$. We say $\ul{V}>\ul{W}$, if either $l>k$ and $v_{l-i}=w_{k-i}$ for $0 \le i \le k$ or there is an integer $0 \le j \le \min\{l,k\}$ such that $v_{l-j} > w_{k-j}$ and $v_{l-i}=w_{k-i}$ for $0 \le i \le j-1$. For two sequences of the same length with the NDS decomposition given by $\ul{v}=\ul{V_1}\ul{V_2}\ldots\ul{V_r}$ and $\ul{w}=\ul{W_1}\ul{W_2}\ldots\ul{W_s}$, we say $\ul{v}>\ul{w}$ if there is an integer $1 \le j \le \min\{r,s\}$ such that $\ul{V_{j}} > \ul{W_{j}}$ and $\ul{V_{i}}=\ul{W_{i}}$ for $1 \le i \le j-1$. We remark that the notion ``$\ul{v}>\ul{w}$'' described above is consistent with the natural inequality $\sum_i v_iq^i>\sum_i w_i q^i$ of the corresponding integers.

Given a positive integer $s$, we may assume that $0<s<n$. Suppose the unique $q$-adic expansion of $s$ is $\sum_{i=0}^{m-1}s_iq^i$,
where $0 \le s_i \le q-1$. This defines the sequence $\ol{s}=(s_{m-1},s_{m-2},\ldots,s_0)$. Denote by $E(s)$ the NDS decomposition
of $\ol{s}$. Conversely, let $\ul{V_1}\ul{V_2}\ldots\ul{V_r}$ be the NDS decomposition of $\ol{s}$. Then define
$$
E^{-1}(\ul{V_1}\ul{V_2}\ldots\ul{V_r})=s.
$$
The coset leader of $C_s$ modulo $n$ is denoted by $s^*$.

When $n=q^m-1$ and $\ol{s}=(s_{m-1},s_{m-2},\ldots,s_0)$, noting that $\ol{n}=(q-1,q-1,\cdots,q-1)$, it is clear that the sequence $\ol{q^is}$ corresponding to $q^is \bmod{n}$ is
\begin{eqnarray} \label{2:xi1}
\ol{q^is}=(s_{m-1-i},\ldots,s_0,s_{m-1},\ldots,s_{m-i}), \  1 \le i \le m-1.
\end{eqnarray}

That is, when $n=q^m-1$, multiplying a power of $q$ simply corresponds to a cyclic shift of the original sequence $\ol{s}$. This key fact is fundamental for the important results presented in \cite{YF}. When $n=\frac{q^m-1}{q-1}$, the situation is more complicated: while (\ref{2:xi1}) is still true, however, when $\ol{q^is}>\ol{n}$, noting that $\ol{n}=(1,1,\ldots,1)$, we need to subtract $\ol{q^is}$ by some multiple of $(1,1,\ldots,1)$ so that the resulting sequence lies in between $\ol{0}$ and $\ol{n}$.
With this observation, we can translate easily some results of \cite{YF} into the new context, where the modulus is $n=\frac{q^m-1}{q-1}$. From now on, we always consider $q$-cyclotomic cosets modulo $n=\frac{q^m-1}{q-1}$.

\begin{lemma}\label{lem-NDS1}
Let $0 \le s \le \frac{q^m-1}{q-1}-1$ be an integer such that the components of $\ol{s}$ are either $0$ or $1$. Suppose $E(s)=\ul{V_1} \ul{V_2} \ldots \ul{V_r}$. Then we have the following.
\begin{itemize}
\item[i)] $E(s^*)=\ul{V_j} \ul{V_{j+1}} \ldots \ul{V_r} \ul{V_1}\ldots\ul{V_{j-1}}$ for some $j$ where $\ul{V_j} \le \ul{V_i}$ for each $1 \le i \le r$.
\item[ii)] If $\ul{V_1}=\ul{V_2}=\cdots=\ul{V_r}$ or $\ul{V_1}=\ul{V_2}=\cdots=\ul{V_j}<\ul{V_k}$ for some $j$ and for all $k>j$, then $s=s^*$.
\item[iii)] If $r=1$, then $s=s^*$.
\end{itemize}
\end{lemma}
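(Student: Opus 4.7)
My plan is to translate the lemma into a statement about cyclic shifts of the $m$-digit $\{0,1\}$-sequence $\ol{s}$. First, since $q^m\equiv 1\pmod n$, multiplication by $q$ modulo $n$ sends the $\{0,1\}$-sequence $\ol{s}$ to its cyclic left shift by one position — no further reduction is needed because any $\{0,1\}$-sequence of length $m$ represents an integer in $[0,n-1]$. Therefore $C_s$ is exactly the orbit of $\ol{s}$ under cyclic shifts, and $s^*$ is the rotation of minimal integer value; since all digits lie in $\{0,1\}$, this coincides with being lex-minimum when read most-significant-first.

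Next I would record the shape of the NDS decomposition in this $\{0,1\}$ setting. Each block $\ul{V_i}$ has the form $0^{a_i}1^{b_i}$, and minimality of $r$ forces $b_i\ge 1$ for $1\le i\le r-1$ and $a_i\ge 1$ for $2\le i\le r$ (equivalently, every internal block-boundary is a ``$1$-then-$0$'' descent, so every block except possibly $\ul{V_1}$ begins with a $0$). A short case check translates the paper's ordering on NDSs into the concrete rule that a block parametrised by $(a,b)$ is smaller than one parametrised by $(a',b')$ iff $a>a'$, or $a=a'$ and $b<b'$.

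For part (i), I would argue first that the lex-minimum rotation must be aligned with the start of some block: a start strictly inside the $0$-prefix of a block $\ul{V_i}$ discards leading zeros, while a start inside or at the $1$-run begins with a $1$, whereas (provided $r\ge 2$) the rotation aligned to any $\ul{V_j}$ with $j\ge 2$ begins with a $0$ and does strictly better. Then, comparing the block-aligned rotations $\ul{V_j}\ul{V_{j+1}}\cdots\ul{V_{j-1}}$ digit-by-digit using the $(a,b)$-rule above together with the fact that every block except possibly $\ul{V_1}$ starts with a $0$, I would identify the starting block that minimises the rotation and show it is a minimum in the NDS ordering, yielding (i).

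Parts (ii) and (iii) are then immediate corollaries: (iii) holds because $r=1$ leaves only one block-aligned rotation, namely $s$ itself; and for (ii), if all $\ul{V_i}$ are equal then every block-aligned rotation is the same sequence so $s=s^*$, while if $\ul{V_1}=\cdots=\ul{V_j}$ is a strict minimum among the blocks, any cyclic shift past the initial copies inserts a strictly larger block at an earlier position of the rotated sequence, strictly increasing the value. The main subtlety, where I expect the write-up to require the most care, is that the concatenation $\ul{V_j}\ul{V_{j+1}}\cdots\ul{V_{j-1}}$ is always a valid NDS decomposition of $\ol{s^*}$ but need not be minimal when the wrap-around between $\ul{V_{j-1}}$ and $\ul{V_j}$ fails to be a ``$1$-then-$0$'' descent — so the identity in (i) must be read as one of underlying sequences (equivalently integer values) — and the digit-by-digit comparison has to track the case where the chosen $\ul{V_j}$ is $\ul{V_r}$, so that the next block in the rotation is $\ul{V_1}$, which unlike the others may begin with a $1$.
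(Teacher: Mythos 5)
Your first step coincides exactly with the paper's: the paper observes that since $\ol{n}=(1,1,\ldots,1)$ and $\ol{s}$ has only $0,1$ entries, every $\ol{q^is}$ stays strictly between $\ol{0}$ and $\ol{n}$, so multiplication by $q$ modulo $n$ is a pure cyclic shift and no subtraction of $(1,\ldots,1)$ ever occurs. At that point the paper stops and cites \cite[Theorem 2.2]{YF} for the combinatorial statement about cyclic shifts versus NDS decompositions, whereas you propose to reprove that statement from scratch. That is a legitimate, more self-contained route, and your reduction to block-aligned rotations (a start inside a $0$-prefix only discards leading zeros; a start inside a $1$-run begins with a $1$ and, when $r\ge 2$, loses to the rotation aligned with any $\ul{V_j}$, $j\ge 2$) is correct.

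The step that would fail as written is your block-comparison rule ``$(a,b)<(a',b')$ iff $a>a'$, or $a=a'$ and $b<b'$.'' It is wrong precisely when one of the blocks is all zeros, i.e.\ $b=0$, which can happen for $\ul{V_r}$. Under the paper's ordering, a longer NDS that agrees with a shorter one on all of the shorter one's leading components is the \emph{larger} of the two, so $(0)<(0,0,1)$ even though $a=1<2=a'$; in fact an all-zeros block is strictly smaller than every block containing a $1$, regardless of the numbers of leading zeros. Your rule predicts the opposite, and this matters: for $\ol{s}=(0,0,1,0)$ one has $\ul{V_1}=(0,0,1)$ and $\ul{V_2}=(0)$; the true minimal block is $\ul{V_2}$ and $s^*=1$, realized by the rotation $(0,0,0,1)$, whereas your rule would declare $\ul{V_1}$ minimal and yield $s^*=s$, which is false. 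The fix is local --- add the clause that a block with $b=0$ precedes every block with $b'\ge 1$ --- and with that correction, together with the two caveats you already flag (the concatenation in (i) is an identity of underlying sequences rather than of minimal decompositions, and when $\ul{V_j}=\ul{V_r}$ the next block in the rotation is $\ul{V_1}$, which may begin with a $1$), the rest of your plan goes through.
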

\begin{proof}
Since $\ol{n}=(1,1,\cdots,1)$ and the components of $\ol{s}$ are either $0$ or $1$, the components of $\ol{q^is}$ are also either $0$ or $1$. Thus $0<\ol{q^is}<\ol{n}$ for all $i$. Therefore the process of subtracting $\ol{q^is}$ by multiples of $\ol{n}$ is not involved. The proof is exactly the same as that of \cite[Theorem 2.2]{YF} for $n=q^m-1$. \end{proof}

Let $\ul{v}=(v_{l-1},v_{l-2},\ldots,v_0)$ be an NDS. Define the {\rm truncating operator} $T_k$ as
$$
T_k(\ul{v})=(v_{l-1},v_{l-2},\ldots,v_{l-k}), \ 1 \le k \le l.
$$
We now define the {\rm successor operator} $S$ in the way that $S(\ul{v})$ is the smallest NDS that is larger than $\ul{v}$.
In particular, if $v_0<q-1$, we have
$$
S(\ul{v})=(v_{l-1},v_{l-2},\ldots,v_0+1),
$$
which corresponds to the successor of the integer $\sum_{i=0}^{l-1}v_iq^i$. The following Lemma provides information about the coset leaders modulo $\frac{q^m-1}{q-1}$ in a special case.

\begin{lemma}\label{lem-NDS2}
Let $0 \le s \le \frac{q^m-1}{q-1}-1$ be an integer. Suppose $E(s)=\ul{V_1} \ul{V_2} \ldots \ul{V_r}$, where $\ul{V_1}>\ul{V_2}$. Suppose $\ul{V_1}$ has length $l$ and has components either $0$ or $1$.  Let $M(s)$ be the smallest coset leader greater than or equal to $s$. Write $m=al+b$, where $0 \le b \le l-1$. If $b=0$, we have
$$
M(s)=E^{-1}(\underbrace{\ul{V_1}\ul{V_1}\ldots\ul{V_1}}_{a}).
$$
If $1 \le b \le l-1$, we have
$$
M(s) \ge E^{-1}(\underbrace{\ul{V_1}\ul{V_1}\ldots\ul{V_1}}_{a}S(T_b(\ul{V_1}))).
$$
In particular, if the last component of $S(T_b(\ul{V_1}))$ is $1$, then the equality holds.
\end{lemma}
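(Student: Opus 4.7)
The plan is to adapt the proof of the analogous result for modulus $q^m-1$ given in \cite[Theorem 2.2]{YF}, now accounting for the extra subtractions of $\ol{n}=(1,1,\ldots,1)$ that arise because we work modulo $n=\frac{q^m-1}{q-1}$ rather than $q^m-1$. Both Case 1 ($b=0$) and Case 2 ($1\leq b\leq l-1$) will be handled by the same three-step template: produce an explicit candidate $t^*$ of the prescribed shape, verify it is a coset leader satisfying $t^*\geq s$, and then rule out any coset leader in $[s,t^*)$ by exhibiting a strictly smaller element in its $q$-cyclotomic coset.

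For the construction I would set $t^*=E^{-1}(\ul{V_1}^a)$ in Case 1 and $t^*=E^{-1}(\ul{V_1}^a\,S(T_b(\ul{V_1})))$ in Case 2. In Case 1 the $q$-adic expansion of $t^*$ has components in $\{0,1\}$ and its NDS decomposition is $a$ copies of $\ul{V_1}$, so Lemma~\ref{lem-NDS1}(ii) immediately yields that $t^*$ is a coset leader. In Case 2, the hypothesis that the last component of $S(T_b(\ul{V_1}))$ equals $1$ forces $T_b(\ul{V_1})$ to be all zero, so $S(T_b(\ul{V_1}))=(0,\ldots,0,1)$ and the full expansion of $t^*$ again lies in $\{0,1\}$; the NDS decomposition consists of $a$ copies of $\ul{V_1}$ followed by a tail block strictly less than $\ul{V_1}$, and Lemma~\ref{lem-NDS1}(ii) once more gives the coset leader property. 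The inequality $t^*\geq s$ follows in either case from the NDS-to-integer consistency noted just before Lemma~\ref{lem-NDS1}: both $\ol{t^*}$ and $\ol{s}$ begin with the block $\ul{V_1}$, and the next NDS block of $s$ is $\ul{V_2}<\ul{V_1}$, while the next block of $t^*$ is $\ul{V_1}$.

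For minimality, assume $u$ is a coset leader with $s\leq u<t^*$. Integer comparison forces the leading $l$ digits of $\ol{u}$ to coincide with $\ul{V_1}$, so $\ol{u}=\ul{V_1}\,\ol{u'}$ for some tail $\ol{u'}$ of length $m-l$. Applying the cyclic shift formula (\ref{2:xi1}) to $q^l u$ produces the sequence $\ol{u'}\,\ul{V_1}$, which after possibly subtracting an appropriate multiple of $\ol{n}$ represents $q^l u\bmod n$. A careful NDS analysis, driven by the $0$--$1$ pattern of $\ul{V_1}$ and by the shape of the subsequent NDS blocks of $u$, then shows that this representative is strictly smaller than $u$, contradicting the coset leader property of $u$. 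The same contradiction scheme yields the general lower bound $M(s)\geq t^*$ in Case 2 (without the hypothesis on the last component), and the candidate-is-a-coset-leader argument above upgrades this to equality when that hypothesis holds. The main obstacle will be precisely the bookkeeping of the $\ol{n}$-subtractions in this cyclic-shift step: unlike in \cite{YF} where the modulus is $q^m-1$ and no such subtraction occurs, we must verify that the subtraction neither cancels nor reverses the strict inequality $q^l u\bmod n<u$, and this is likely to require separating subcases according to whether the components of $\ul{V_1}$ are all $0$, all $1$, or mixed.
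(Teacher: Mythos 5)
Your overall template --- construct the candidate $t^*$, certify it as a coset leader via Lemma~\ref{lem-NDS1}, and rule out coset leaders in $[s,t^*)$ by exhibiting a smaller element of the coset --- is the same template the paper relies on, since its proof simply defers to \cite[Theorem 2.5]{YF}. But two points keep the proposal from being a proof. First, the decisive minimality step is not carried out: you reduce it to ``a careful NDS analysis'' and explicitly leave the bookkeeping of the $\ol{n}$-subtractions as ``the main obstacle.'' The paper's one substantive observation is that this obstacle is illusory. For the candidate $t^*$ the digits are all $0$ or $1$, so every cyclic shift $\ol{q^{i}t^*}$ already lies strictly between $\ol{0}$ and $\ol{n}$ and no subtraction ever occurs; and for a putative coset leader $u$ with $s\le u<t^*$ one only needs an \emph{upper} bound on $q^{j}u\bmod n$, which is at most the cyclic-shift value $q^{j}u\bmod(q^m-1)$ because $n\mid q^m-1$ --- reduction modulo $n$ can only shrink the representative, so a strict inequality $q^{j}u\bmod n<u$ read off from the shifted digit pattern is never endangered by the subtraction. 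Without this remark (or an actual execution of the subcase analysis you anticipate), the heart of the lemma remains unproved.

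Second, a directional error: when the last component of $S(T_b(\ul{V_1}))$ is $1$, you correctly deduce that $T_b(\ul{V_1})$ is all zero and $S(T_b(\ul{V_1}))=(0,\ldots,0,1)$, but this block is strictly \emph{greater} than $\ul{V_1}$ in the block order (the two agree in their first $b-1$ positions and differ at the $b$-th, where $1>0$), not ``strictly less.'' This matters: Lemma~\ref{lem-NDS1}(ii) certifies a coset leader only when the trailing blocks are strictly \emph{larger} than the repeated leading block, so if the tail were genuinely smaller the candidate would fail to be a coset leader (a cyclic shift bringing the small block to the front would be smaller) and your appeal to Lemma~\ref{lem-NDS1}(ii) would be vacuous. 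As it happens the correct inequality points the right way, so the conclusion survives, but the argument as written is internally inconsistent. A small further caveat: when the first component of $S(T_b(\ul{V_1}))$ is not a strict descent from the last component of $\ul{V_1}$ (e.g.\ $b=1$, giving the tail $(1)$), that tail merges with the preceding copy of $\ul{V_1}$ in the minimal decomposition, so the block structure is $a-1$ copies of $\ul{V_1}$ followed by one longer block; Lemma~\ref{lem-NDS1}(ii) still applies, but not with the decomposition you state.
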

\begin{proof}
The proof is the same as that of \cite[Theorem 2.5]{YF} for $n=q^m-1$, because in (\ref{2:xi1}) no subtracting of $(1,1,\ldots,1)$ from $\ol{q^is}$ is involved.
\end{proof}

\subsection{Gauss and exponential sums related to quadratic forms}

In this subsection, we list two results on Gauss sums and exponential sums related to quadratic forms over finite fields. Interested readers may refer to \cite[Chapter 5]{LN} for other properties of Gauss sums and to \cite[Chapter 6]{LN} and \cite{LF} for the general theory of quadratic forms over finite fields.

From what follows, we assume that $p$ is an odd prime and $q=p^s$. Denote by $\Tr$ the standard trace map from $\Fq$ to $\Fp$.

\begin{definition}
Let $\chi$ be a multiplicative character over $\Fq$. The Gauss sum $G(\chi)$ is defined to be
$$
G(\chi)=\sum_{x \in \Fq} \chi(x)\zeta_p^{\Tr(x)},
$$
where $\zeta_p:=\exp(2 \pi \sqrt{-1}/p)$ is a $p$-th complex root of unity.
\end{definition}

Let $\eta$ be the quadratic character of $\gf(q)$.

\begin{lemma}{\rm \cite[Theorem 5.15]{LN}}\label{lemmagauss} The quadratic Gauss sum $G(\eta)$ satisfies
$$
G(\eta)=\begin{cases}
      (-1)^{s-1}\sqrt{q} & \text{if $p\equiv 1 \bmod 4$},\\
      (-1)^{s-1}(\sq)^s\sqrt{q} & \text{if $p\equiv 3 \bmod 4$}.
\end{cases}
$$
\end{lemma}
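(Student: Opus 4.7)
The plan is to prove this by first establishing the formula over the prime field $\Fp$ and then lifting it to $\Fq=\Fp^s$ via the Hasse--Davenport relation. Let $\eta_0$ denote the quadratic character of $\Fp$ and $\eta$ the quadratic character of $\Fq$. The classical evaluation of the quadratic Gauss sum over $\Fp$ (due to Gauss) asserts that
\[
G(\eta_0)=\sum_{x\in\Fp}\eta_0(x)\zeta_p^x=\begin{cases} \sqrt{p} & \text{if } p\equiv 1\bmod 4,\\ \sq\,\sqrt{p} & \text{if } p\equiv 3\bmod 4.\end{cases}
\]
First I would record this as the base case; the cleanest route uses Gauss's original argument via evaluating $|G(\eta_0)|^2=p$ to fix the magnitude and then determining the sign/phase via the Chowla--Mordell or the Schur-type evaluation of $\prod_{k=1}^{p-1}\sin(\pi k^2/p)$. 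Since the lemma is cited from \cite{LN}, I would simply quote this classical fact.

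Next I would invoke the Hasse--Davenport lifting relation: for any nontrivial multiplicative character $\chi$ on $\Fp$, if $\chi'=\chi\circ N_{\Fq/\Fp}$ denotes the lift to $\Fq$ through the norm, then
\[
G(\chi')=(-1)^{s-1}\,G(\chi)^{s}.
\]
The quadratic character $\eta$ on $\Fq$ factors as $\eta=\eta_0\circ N_{\Fq/\Fp}$, because an element $x\in\Fq^*$ is a square in $\Fq^*$ exactly when its norm is a square in $\Fp^*$ (both conditions amount to $x$ lying in the unique index-$2$ subgroup). Therefore
\[
G(\eta)=(-1)^{s-1}G(\eta_0)^{s}.
\]

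Now I would substitute the base-case values. When $p\equiv 1\bmod 4$, one has $G(\eta_0)^{s}=p^{s/2}=\sqrt{q}$, giving $G(\eta)=(-1)^{s-1}\sqrt{q}$. When $p\equiv 3\bmod 4$, one has $G(\eta_0)^{s}=(\sq)^{s}p^{s/2}=(\sq)^{s}\sqrt{q}$, giving $G(\eta)=(-1)^{s-1}(\sq)^{s}\sqrt{q}$. This matches the statement exactly.

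The main obstacle is the Hasse--Davenport relation itself, whose standard proof proceeds through factoring the $L$-function $L(t,\chi)=\prod_{f}(1-\chi(f)t^{\deg f})^{-1}$ over monic polynomials or, equivalently, by comparing the characteristic polynomials of Frobenius acting on the two cyclotomic field extensions; this is a genuinely nontrivial input. Since the lemma is attributed to \cite[Theorem 5.15]{LN}, I would cite Hasse--Davenport rather than reprove it, and present only the reduction described above together with the case analysis on $p\bmod 4$.
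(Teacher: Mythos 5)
Your argument is correct: the factorization $\eta=\eta_0\circ N_{\gf(q)/\gf(p)}$ holds because $N(x)=x^{(q-1)/(p-1)}$, so $\eta_0(N(x))=x^{(q-1)/2}=\eta(x)$; the Hasse--Davenport relation then gives $G(\eta)=(-1)^{s-1}G(\eta_0)^s$, and substituting Gauss's classical values $G(\eta_0)=\sqrt{p}$ or $\sq\,\sqrt{p}$ yields exactly the stated formula. Note, however, that the paper itself offers no proof of this lemma at all --- it is quoted verbatim from Lidl--Niederreiter as a known fact --- so there is nothing internal to compare against. The route you chose also differs from the one taken in the cited source: the proof of Theorem 5.15 in Lidl--Niederreiter is a self-contained evaluation over $\gf(q)$ obtained by computing the eigenvalues of the matrix $\bigl(\zeta_p^{\Tr(\alpha_i\alpha_j)}\bigr)$ indexed by the elements of $\gf(q)$ (a generalization of Schur's proof of the sign of the quadratic Gauss sum), which determines both the modulus and the fourth root of unity in one stroke without any lifting. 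Your reduction is shorter on the page but outsources two substantial inputs (Gauss's sign determination over $\gf(p)$ and Hasse--Davenport); the textbook proof is longer but needs neither. Either is a legitimate justification for a result the paper treats as a citation.
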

The following identity also holds (see \cite[Theorem 5.33]{LN}):
\begin{eqnarray} \label{pri:1}
\sum_{x \in \gf(q)^*} \zeta_p^{\Tr(ax^2)}=\eta(a)G(\eta)-1, \quad \forall \, a \in \gf(q)^*.
\end{eqnarray}
\begin{lemma}{\rm \cite[Lemma 1]{LF}}\label{lemmaquad}
Let $Q(x)$ be a quadratic form from $\gf(q^m)$ to $\gf(q)$ with rank $r$. Then
$$
\sum_{x \in \gf(q^m)} \zeta_p^{\Tr(Q(x))}=
\begin{cases}
\pm q^{m-r/2} & \mbox{if $q\equiv 1 \bmod 4$}, \\
\pm (\sq)^r q^{m-r/2} & \mbox{if $q\equiv 3 \bmod 4$}.
\end{cases}
$$
Moreover, for $a \in \gf(q)^*$, we have
$$
\sum_{x \in \gf(q^m)} \zeta_p^{\Tr(aQ(x))}=\eta(a^r)\sum_{x \in \gf(q^m)} \zeta_p^{\Tr(Q(x))}.
$$
\end{lemma}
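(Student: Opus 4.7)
The plan is to use the structure theory of quadratic forms over finite fields to reduce to a diagonal form, and then evaluate the resulting one-variable exponential sums via the quadratic Gauss sum formula in Lemma \ref{lemmagauss}.

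First, I would identify $\gf(q^m)$ with an $m$-dimensional $\gf(q)$-vector space and invoke the diagonalization theorem for quadratic forms over $\gf(q)$ (valid since $p$ is odd): there exists an invertible $\gf(q)$-linear change of variables $x\mapsto y$ on $\gf(q^m)$ under which $Q(x)$ becomes $Q'(y)=a_1y_1^2+a_2y_2^2+\cdots+a_ry_r^2$ for some $a_1,\ldots,a_r\in\gf(q)^*$, where $y_1,\ldots,y_m$ are the coordinates of $y$ in a chosen $\gf(q)$-basis and $r$ is the rank of $Q$. Since this change of variables is a bijection of $\gf(q^m)$, it does not change the value of the exponential sum.

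Next, the sum factorizes over the $m$ coordinates. The contribution of the $m-r$ free coordinates is $q^{m-r}$, while the contribution of each quadratic coordinate is $\sum_{y_i\in\gf(q)}\zeta_p^{\Tr(a_iy_i^2)}$, which by identity (\ref{pri:1}) equals $\eta(a_i)G(\eta)$ (the $-1$ from (\ref{pri:1}) cancels when the $y_i=0$ term is reinstated). Putting this together yields
$$
\sum_{x\in\gf(q^m)}\zeta_p^{\Tr(Q(x))}=q^{m-r}\,\eta(a_1a_2\cdots a_r)\,G(\eta)^r.
$$
Now I would evaluate $G(\eta)^r$ using Lemma \ref{lemmagauss}. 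When $q\equiv1\bmod 4$, that lemma gives $G(\eta)=\pm\sqrt{q}$, which is real, so $G(\eta)^r=\pm q^{r/2}$, producing the bound $\pm q^{m-r/2}$. When $q\equiv 3\bmod 4$ (which forces $p\equiv 3\bmod 4$ and $s$ odd), the lemma gives $G(\eta)=\pm(\sq)^s\sqrt{q}$, so $G(\eta)^r=\pm(\sq)^{rs}q^{r/2}$; since $s$ is odd, $(\sq)^{rs}$ equals $(\sq)^r$ up to a sign, giving $\pm(\sq)^rq^{m-r/2}$, as required.

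For the second assertion, replacing $Q$ by $aQ$ in the diagonalization replaces each $a_i$ by $aa_i$. The prefactor then becomes
$$
\eta\!\left(\prod_{i=1}^r aa_i\right)=\eta(a^r)\,\eta\!\left(\prod_{i=1}^r a_i\right),
$$
while the factor $G(\eta)^r$ is unchanged, so the ratio $\sum_x\zeta_p^{\Tr(aQ(x))}/\sum_x\zeta_p^{\Tr(Q(x))}$ equals exactly $\eta(a^r)$. The only step that requires care is the bookkeeping of powers of $\sq$ and signs when $q\equiv 3\bmod 4$, but since the target formula absorbs all signs into the symbol $\pm$, no delicate case analysis is needed; the sole non-elementary inputs beyond (\ref{pri:1}) and Lemma \ref{lemmagauss} are the diagonalization of a quadratic form over $\gf(q)$ in odd characteristic and the determinant-free fact that a linear bijection preserves the value of a sum over $\gf(q^m)$.
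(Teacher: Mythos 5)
Your proof is correct. Note first that the paper itself offers no proof of this lemma: it is quoted verbatim from Luo and Feng \cite[Lemma 1]{LF}, so there is no in-paper argument to compare against. Your route --- diagonalize $Q$ over $\gf(q)$ in odd characteristic to $a_1y_1^2+\cdots+a_ry_r^2$, factor the sum into $q^{m-r}$ from the free coordinates times $\prod_i\eta(a_i)G(\eta)$ from the quadratic ones via (\ref{pri:1}), and then read off the $\pm$ and $(\sq)^r$ behaviour from Lemma \ref{lemmagauss} --- is the standard proof and is essentially the one in the cited reference. The second assertion also falls out cleanly, since replacing $Q$ by $aQ$ multiplies the prefactor $\eta(a_1\cdots a_r)$ by $\eta(a^r)$ and leaves $G(\eta)^r$ untouched. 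The only place you compress slightly is the case $q\equiv 1\bmod 4$ with $p\equiv 3\bmod 4$ (so $s$ even), where Lemma \ref{lemmagauss} gives $G(\eta)=(-1)^{s-1}(\sq)^s\sqrt{q}$ and one must observe that $(\sq)^s=(-1)^{s/2}$ is real, so that $G(\eta)=\pm\sqrt{q}$ still holds; this is exactly what your statement asserts, but it deserves the one extra line. Likewise your identification of the rank $r$ (defined in the paper via the radical of the associated bilinear form) with the number of nonzero diagonal entries after diagonalization is valid in odd characteristic and worth stating explicitly. Neither point is a gap.
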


\subsection{Some known results concerning BCH codes}

Locator polynomials are very useful in the study of BCH codes \cite{ACS92,AS94}. We first review the definition.

\begin{definition}\label{def-locator}
Let $c=(c_0,c_1,\ldots,c_{n-1}) \in \Fq^n$ be a vector with nonzero components $c_{i_1},c_{i_2},\ldots,c_{i_w}$. Then
$$
X_1=\be^{i_1},\ldots,X_w=\be^{i_w},
$$
are called the {\rm locators} of $c$, where $\beta$ is a primitive $n$-th root of unity in $\gf(q^m)$. The {\rm locator polynomial} of $c$ is
\begin{align*}
\sig(z) = \prod_{i=1}^w(1-X_iz)
        = \sum_{i=0}^w \sig_i z^i,
\end{align*}
where $\sig_0=1$. The coefficients $\sig_i$ are the elementary symmetric functions of $X_i$:
\begin{align*}
\sig_1&=-(X_1+\cdots+X_w),\\
\sig_2&=X_1X_2+X_1X_3+\cdots+X_{w-1}X_w,\\
      &\vdots\\
\sig_w&=(-1)^wX_1 \cdots X_w.
\end{align*}
\end{definition}

The following lemma suggests a method to find a codeword of prescribed weight in the BCH code $\C_{(n, q, m, \de)}$ by using locator polynomials.

\begin{lemma}{\rm \cite[Ch. 9, Lemma 4]{MS77}}\label{lem-locator}
Let
$$
\sig(z)=\sum_{i=0}^w \sig_iz^i
$$
be a polynomial over $\Fqm$. Then $\sig(z)$ is the locator polynomial of a codeword $c \in \C_{(n, q, m, \de)}$ with only $0$ and $1$ components if and only if the following two conditions hold.
\begin{itemize}
\item[i)] The zeroes of $\sig(z)$ are distinct $n$-th roots of unity.
\item[ii)] $\sig_i=0$ for all $1 \le i \le \de-1$ with $p \nmid i$, where $p$ is the characteristic of $\Fq$.
\end{itemize}
\end{lemma}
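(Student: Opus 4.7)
The plan is to reformulate everything in terms of power sums and then invoke Newton's identities. First I would unpack condition (i): if $\sigma(z)$ factors into distinct linear factors $1 - X_i z$ with each $X_i$ an $n$-th root of unity, writing $X_k = \be^{i_k}$ determines a unique $0/1$ vector $c \in \gf(q)^n$ of Hamming weight $w$ whose $1$-positions are $i_1, \ldots, i_w$. Conversely, any $0/1$ vector gives such a $\sigma(z)$. So condition (i) is precisely the statement that $\sigma(z)$ comes from some $0/1$ codeword at all, and what remains is to characterise, in terms of the coefficients $\sig_i$, when this codeword lies in $\C_{(n,q,m,\de)}$.

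For that, observe that for each $j \ge 1$,
$$
c(\be^j) \;=\; \sum_{k=1}^w \be^{j\,i_k} \;=\; \sum_{k=1}^w X_k^{\,j} \;=:\; p_j,
$$
so $c \in \C_{(n,q,m,\de)}$ iff the power sums vanish: $p_1 = p_2 = \cdots = p_{\de-1} = 0$. I would then recall Newton's identities which, in the normalisation $\sig(z) = \prod_{i=1}^w (1 - X_i z) = \sum_i \sig_i z^i$ used in the paper, read
$$
p_k + \sig_1 p_{k-1} + \sig_2 p_{k-2} + \cdots + \sig_{k-1} p_1 + k\,\sig_k = 0, \qquad 1 \le k \le w,
$$
and $p_k + \sum_{i=1}^w \sig_i p_{k-i} = 0$ for $k > w$. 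The equivalence of (ii) with the vanishing of the $p_k$ now becomes a direct computation.

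For the forward direction, assuming $p_1 = \cdots = p_{\de-1} = 0$, each Newton identity with $1 \le k \le \de - 1$ collapses to $k\,\sig_k = 0$ in $\Fqm$, forcing $\sig_k = 0$ whenever $p \nmid k$; for $k > w$ the coefficient $\sig_k$ is zero by convention. For the reverse direction, assuming (ii), I would prove $p_k = 0$ for $1 \le k \le \de - 1$ by induction on $k$. The inductive step reduces, via Newton, to $p_k = -k\,\sig_k$; if $p \nmid k$ then $\sig_k = 0$ by hypothesis, while if $p \mid k$ then $k = 0$ in $\Fqm$, so $p_k = 0$ regardless of $\sig_k$. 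The only genuinely delicate point — and where the technical content of condition (ii) really lives — is exactly this bookkeeping in characteristic $p$: the condition omits the indices $i$ with $p \mid i$ because Newton's identity already makes $k\,\sig_k$ vanish on its own when $p \mid k$, so no extra constraint on $\sig_k$ is needed. Once both directions are in hand, the lemma follows by combining the characterisation of $0/1$ codewords via condition (i) with the characterisation of the BCH membership via condition (ii).
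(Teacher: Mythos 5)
Your proof is correct: the reduction of membership in $\C_{(n, q, m, \de)}$ to the vanishing of the power sums $p_1,\dots,p_{\de-1}$, followed by Newton's identities in the normalisation $\sig(z)=\prod_i(1-X_iz)$ with the characteristic-$p$ bookkeeping explaining why indices divisible by $p$ are exempt, is exactly the classical argument. The paper itself gives no proof (it cites \cite[Ch.~9, Lemma~4]{MS77}), and your write-up faithfully reconstructs the proof given in that source.
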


The following lemma says in some cases, the minimum distance equals the designed distance.

\begin{lemma}{\rm \cite[Theorem 4.3.13]{BBFKKW}}\label{lem-mineqdes}
For the BCH code $\C_{(n, q, m, \de)}$, if $\de \mid n$, then the minimum distance $d=\de$.
\end{lemma}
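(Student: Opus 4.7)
The BCH bound gives $d\ge\delta$ automatically, so the task reduces to exhibiting a single codeword of Hamming weight exactly $\delta$. The hypothesis $\de\mid n$ is precisely what makes this construction possible: setting $k=n/\delta\in\mathbb{Z}_{>0}$ and $\gamma=\beta^k$, the element $\gamma$ has order $\delta$ inside the cyclic group of $n$-th roots of unity in $\gf(q^m)$, so the powers $\gamma^0,\gamma^1,\ldots,\gamma^{\delta-1}$ are $\delta$ distinct $n$-th roots of unity.

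I plan to invoke Lemma~\ref{lem-locator} with the locator polynomial
$$
\sigma(z)=\prod_{i=0}^{\delta-1}(1-\gamma^i z)=1-z^\delta,
$$
where the second equality follows because both sides have constant term $1$, have degree $\delta$, and vanish on the $\delta$-th roots of unity. Condition (i) of Lemma~\ref{lem-locator} is immediate since the zeros $\gamma^{-i}$ are distinct $\delta$-th roots of unity (hence distinct $n$-th roots of unity), and condition (ii) is trivial because every intermediate coefficient $\sigma_1,\ldots,\sigma_{\delta-1}$ equals $0$. The lemma then produces a codeword $c\in\C_{(n,q,m,\delta)}$ with $0/1$ components whose support is $\{0,k,2k,\ldots,(\delta-1)k\}$, so $\wt(c)=\delta$, which together with the BCH bound forces $d=\delta$.

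As a sanity check, one may verify directly that $c(x)=1+x^k+x^{2k}+\cdots+x^{(\delta-1)k}$ lies in $\C_{(n,q,m,\delta)}$: for $1\le i\le\delta-1$ one has $c(\beta^i)=\sum_{j=0}^{\delta-1}(\beta^{ik})^j$, a geometric sum whose ratio $\beta^{ik}$ satisfies $(\beta^{ik})^\delta=\beta^{in}=1$ but $\beta^{ik}\ne 1$ (because $\delta\nmid i$ in this range), so the sum vanishes. There is no genuine obstacle in the proof; the only point worth emphasising is that the divisibility $\delta\mid n$ is used in two equivalent ways, either to guarantee that $\gamma=\beta^{n/\delta}$ has order exactly $\delta$ so that $1-z^\delta$ is a valid locator polynomial, or equivalently to ensure that the evenly spaced support $\{jk\}_{j=0}^{\delta-1}$ makes $c(\beta^i)$ a nontrivial geometric sum of $\delta$-th roots of unity for each $1\le i\le\delta-1$.
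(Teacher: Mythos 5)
Your proof is correct. There is nothing in the paper to compare it against: the lemma is quoted verbatim from \cite[Theorem 4.3.13]{BBFKKW} and no proof is given in the text, so any complete argument you supply is by definition a ``different route'' from the paper's (which is a citation). Your argument is the standard one and is sound: the BCH bound gives $d\ge\delta$, and the divisibility $\delta\mid n$ lets you place a weight-$\delta$ codeword on the coset $\{0,k,2k,\ldots,(\delta-1)k\}$ with $k=n/\delta$. I would only point out that the two halves of your write-up are the same proof in two costumes, and the ``sanity check'' is actually the leaner version: verifying directly that $c(x)=1+x^k+\cdots+x^{(\delta-1)k}$ satisfies $c(\beta^i)=0$ for $1\le i\le\delta-1$ (a geometric sum with ratio $\beta^{ik}\ne 1$ whose $\delta$-th power is $1$) already shows $g_{(n,q,m,\delta)}(x)\mid c(x)$ and needs neither Lemma~\ref{lem-locator} nor the identity $\prod_{i=0}^{\delta-1}(1-\gamma^iz)=1-z^\delta$. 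That identity is itself fine (two monic-at-the-constant-term polynomials of degree $\delta$ with the same $\delta$ distinct roots and equal nonzero constant terms coincide), and condition (ii) of Lemma~\ref{lem-locator} holds trivially since every intermediate coefficient vanishes; one should just note that $\gcd(n,q)=1$ forces $p\nmid\delta$, so $1-z^\delta$ genuinely has $\delta$ distinct roots. Either presentation is complete.
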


\section{The narrow-sense BCH codes with large dimensions}\label{sec-largedim}

For the rest of this paper, unless otherwise stated, we will always assume that $n=\frac{q^m-1}{q-1}$. Therefore $\ord_n(q)=m$. We use $\al$ to denote a primitive element of $\Fqm$ and $\be=\al^{q-1}$. In this section, we consider the narrow-sense BCH codes $\C_{(n,q,m,\delta)}$ and $\tilde{\C}_{(n,q,m,\delta)}$ with few zeroes, and thus they have large dimensions. The narrow-sense BCH code with designed distance $2$ has only one zero. The parameters of these codes are known.


\begin{theorem}\label{thm-delta001}
The code $\C_{(n, q, m, 2)}$ has parameters $[(q^m-1)/(q-1), (q^m-1)/(q-1)-m, d]$, where
$$
d=\begin{cases}
  3 & \text{if $\gcd(m,q-1)=1$,} \\
  2 & \text{if $\gcd(m,q-1)> 1$.}
\end{cases}
$$
\end{theorem}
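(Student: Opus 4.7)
The plan is to handle the dimension and the minimum distance separately. The dimension is immediate from the cyclotomic coset structure: the generator polynomial of $\C_{(n,q,m,2)}$ is the minimal polynomial $m_1(x)$ of $\be$ over $\Fq$, whose degree equals the size of the $q$-cyclotomic coset $C_1$ modulo $n$. Since $\ord_n(q)=m$, we have $|C_1|=m$, so $\dim \C_{(n,q,m,2)}=n-m$, confirming the claimed length and dimension.

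For the minimum distance I would work directly with the parity-check matrix $H=(1,\be,\be^2,\ldots,\be^{n-1})$, where each entry is regarded as a length-$m$ column over $\Fq$. The minimum distance $d$ is the smallest number of columns of $H$ that are $\Fq$-linearly dependent. No column is zero, so $d\ge 2$. A weight-$2$ codeword exists iff two columns are $\Fq$-proportional, equivalently $\be^k\in\Fq^*$ for some $1\le k\le n-1$. Since $\Fq^*=\{x\in\Fqm^*:x^{q-1}=1\}$ and $\be$ has order $n$, this is in turn equivalent to $n\mid k(q-1)$; the smallest positive such $k$ is $n/\gcd(n,q-1)$, which lies in $[1,n-1]$ iff $\gcd(n,q-1)>1$. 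The key reduction is $n=\sum_{i=0}^{m-1}q^i\equiv m\pmod{q-1}$, which gives $\gcd(n,q-1)=\gcd(m,q-1)$. Hence $d=2$ precisely when $\gcd(m,q-1)>1$.

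It remains to handle the case $\gcd(m,q-1)=1$, where the BCH bound together with the above gives $d\ge 3$ and one must actually exhibit a codeword of weight $3$. In this case the columns of $H$ are pairwise $\Fq$-nonproportional, and since their number $n=(q^m-1)/(q-1)$ coincides with the number of points of $\PG(m-1,q)$, the columns form a complete system of projective-point representatives; this is the main structural observation, and it identifies $\C_{(n,q,m,2)}$ (up to monomial equivalence) with the $q$-ary Hamming code of redundancy $m$. Any projective line contains $q+1\ge 3$ such representatives, so three columns of $H$ are $\Fq$-dependent, forcing $d\le 3$, hence $d=3$. The only step requiring a bit of care is the projective identification of the columns; once it is in place, both cases of $d$ drop out immediately, and no machinery beyond elementary linear algebra and the coset-size formula is needed.
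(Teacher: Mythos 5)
Your proposal is correct and follows essentially the same route as the paper: dimension from $|C_1|=m$, the reduction $\gcd(n,q-1)=\gcd(m,q-1)$, two proportional columns of $H$ giving $d=2$ when this gcd exceeds $1$, and identification with the $q$-ary Hamming code giving $d=3$ otherwise. The only cosmetic difference is that you spell out the projective-point/line argument for $d\le 3$, where the paper simply cites the standard Hamming-code fact.
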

\begin{proof}
The dimension follows from the fact that $|C_1|=m$. Since $\C_{(n, q, m, 2)}$ has only one zero $\beta$, its parity-check matrix is
$$
H=(1,\be,\ldots,\be^{n-1}),
$$
where the $i$-th column is a vector in $\Fq^m$ corresponding to $\be^{i-1}$. Notice that
\[\gcd(n,q-1)=\gcd(q^{m-1}+q^{m-2}+\cdots+q+1,q-1) =\gcd(m,q-1).\]
If $\gcd(m,q-1)=1$, then $\gcd(n,q-1)=1$. Suppose $H$ contains two columns which are linearly dependent over $\gf(q)$. Let these two columns be the $i$-th column and the $j$-th column where $1 \le i < j \le n$. Then we have $\be^{j-1}=\theta\be^{i-1}$ for some $\theta \in \gf(q)^*$. Thus $\be^{j-i}=\theta \in \Fq^*$, which implies that $\be^{(j-i)(q-1)}=1$. Since $\beta$ is a primitive $n$-th root of unity and $\gcd(n,q-1)=1$, we have $n|(j-i)$. This is impossible since $1 \le j-i \le n-1$. Therefore every two distinct columns of $H$ are linearly independent over $\gf(q)$ if $\gcd(m,q-1)=1$. In this case the code $\C_{(n, q, m, 2)}$ is the Hamming code and we have $d=3$ \cite[p. 29-30]{HP03}. If $\gcd(m, q-1) =l> 1$, from the above argument it is easy to see that the first column and the $(\frac{n}{l}+1)$-th column of $H$ are linearly dependent over $\gf(q)$. That is, the code $\C_{(n, q, m, 2)}$ has minimum distance $d=2$.
\end{proof}

\begin{theorem}\label{thm-delta002}
The code $\tilde{\C}_{(n, q, m, 2)}$ has parameters $[(q^m-1)/(q-1), (q^m-1)/(q-1)-m-1, d]$, where $3 \le d \le 4$.
\end{theorem}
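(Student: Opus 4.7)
The length and dimension are easy: the identity $\dim(\tilde{\C}_{(n,q,m,\de)}) = \dim(\C_{(n,q,m,\de)}) - 1$ from Section \ref{sec-intro}, combined with Theorem \ref{thm-delta001}, gives dimension $n - m - 1$. The lower bound $d \ge 3$ is immediate from the BCH bound, since $\tilde{g}_{(n,q,m,2)}(x) = (x-1)m_1(x)$ has two consecutive roots $\beta^0$ and $\beta^1$.

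The substantive work is showing $d \le 4$, which I plan to do by exhibiting a codeword of small weight via a pigeonhole argument on the parity-check matrix
$$
\tilde{H} = \begin{pmatrix} 1 & 1 & \cdots & 1 \\ 1 & \beta & \cdots & \beta^{n-1} \end{pmatrix}.
$$
A codeword of weight $w$ corresponds to $w$ columns of $\tilde{H}$ admitting a nontrivial $\gf(q)$-linear dependence whose coefficients also sum to $0$ (coming from the top row). For each unordered pair $\{i,j\}$ with $0 \le i < j \le n-1$, the difference $\beta^i - \beta^j$ is a nonzero element of $\gf(q^m)$, and its class in the quotient $\gf(q^m)^*/\gf(q)^*$ lies in a set of exactly $n = (q^m-1)/(q-1)$ classes. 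In the relevant range $q \ge 3$, $m \ge 2$, we have $n \ge q+1 \ge 4$, hence $\binom{n}{2} > n$; by pigeonhole there exist distinct pairs $\{a_1, a_2\}$ and $\{b_1, b_2\}$ with $\beta^{a_1} - \beta^{a_2} = c(\beta^{b_1} - \beta^{b_2})$ for some $c \in \gf(q)^*$. Rewriting gives the desired vanishing relation
$$
1 \cdot \beta^{a_1} + (-1)\beta^{a_2} + (-c)\beta^{b_1} + c\,\beta^{b_2} = 0,
$$
whose coefficients $1, -1, -c, c$ sum to $0$, so the corresponding word lies in $\tilde{\C}_{(n,q,m,2)}$ and has weight at most $4$.

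A short case check finishes the argument: if the two pairs are disjoint, the support has size $4$; if they share exactly one position, the coefficients consolidate to give a weight-$3$ word; and the degenerate scenarios that would further collapse the weight force either $c = 1$ or $c = -1$, both of which in turn force the two pairs to coincide, contradicting distinctness. I do not expect any real obstacle here; the main thing to watch is simply keeping the case analysis tidy so that no accidental coefficient cancellation is overlooked — any stray weight-$2$ word would, in any event, contradict the BCH lower bound $d \ge 3$ just established, so the pigeonhole construction necessarily produces a word of weight $3$ or $4$ and yields $d \le 4$.
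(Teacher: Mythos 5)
Your proposal is correct, but it establishes the upper bound $d\le 4$ by a genuinely different route from the paper. The paper's proof is non-constructive there: it simply invokes the sphere-packing bound for an $[n,n-m-1]$ code over $\gf(q)$ (for $d\ge 5$ one would need $1+n(q-1)+\binom{n}{2}(q-1)^2\le q^{m+1}$, which fails since $n\ge q+1\ge 4$), and then upgrades the BCH lower bound by checking that no two columns of $\widetilde H$ are $\gf(q)$-dependent, giving $3\le d\le 4$. Your pigeonhole on the $\binom{n}{2}>n$ difference classes $\beta^i-\beta^j$ in $\gf(q^m)^*/\gf(q)^*$ is really the same counting in constructive clothing: it produces an explicit codeword of weight $3$ or $4$, and your case analysis is sound --- in the disjoint case all four coefficients $1,-1,-c,c$ are nonzero, and in the shared-position case the only possible collapse is $1+c=0$, which forces $\beta^{a_1}=\beta^{b_2}$ and hence equality of the two pairs, contradicting their distinctness (your closing remark that a weight-$2$ word is anyway ruled out by $d\ge 3$ makes the argument robust against any oversight). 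What your approach buys is an explicit low-weight codeword and a self-contained argument that avoids quoting the sphere-packing bound; what the paper's approach buys is brevity. The lower bound is handled essentially identically in both (the BCH bound on the two consecutive roots $\beta^0,\beta^1$ of $\widetilde g_{(n,q,m,2)}$ already gives $d\ge 3$, as the introduction notes), so the only substantive divergence is the constructive versus volume-counting derivation of $d\le 4$.
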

\begin{proof}
$\tilde{\C}_{(n, q, m, 2)}$ is the even-like subcode of $\C_{(n, q, m, 2)}$ with dimension $(q^m-1)/(q-1)-m-1$. The minimum distance $2 \le d \le 4$, where the lower bound follows from the BCH bound and the upper bound follows from the sphere packing bound. Note that the parity-check matrix is
$$
\widetilde{H}=\begin{pmatrix}
 1 & 1 & 1 & \cdots & 1 \\
 &&&&\\
   &   & H &        &   \\
 &&&&
 \end{pmatrix},
$$
where $H=(1,\be,\ldots,\be^{n-1})$ is the parity-check matrix of the code $\C_{(n,q,m,2)}$ in Theorem~\ref{thm-delta001}. We claim that every two columns of $\widetilde{H}$ are linearly independent over $\gf(q)$. For any $1 \le i < j \le n$, assume that the $i$-th and $j$-th columns are linearly dependent over $\gf(q)$. Then we must have $\be^{i-1}=\be^{j-1}$, that is, $\be^{j-i}=1$. This is impossible since $\be$ is a primitive $n$-th root of unity and $1 \le j-i \le n-1$. Hence the claim is proved. Therefore the minimum distance $d$ of $\tilde{\C}_{(n, q, m, 2)}$ is strictly larger than two and we have $3 \le d \le 4$.
\end{proof}

For the narrow-sense BCH codes with designed distance $3$, the parameters can be determined in some cases and are described as follows.

\begin{theorem}\label{thm-delta0031}
Let $q \ge 3$. The code $\C_{(n, q, m, 3)}$ has parameters $[(q^m-1)/(q-1), (q^m-1)/(q-1)-2m, d]$, where $d$ can be determined in the following cases:

(1). If
\begin{itemize}
\item[i)] $q \equiv 1 \bmod 3$ and $3 \mid m$, or
\item[ii)] $q \equiv 2 \bmod 3$ and $2 \mid m$,
\end{itemize}
then $d=3$.

(2). If
\begin{itemize}
\item[iii)] $q \equiv 1 \bmod 4$ and $4 \mid m$, or
\item[iv)] $q \equiv 3 \bmod 4$ and $2 \mid m$,
\end{itemize}
then $3 \le d \le 4$.

(3). If $q=3$ and $2 \mid m$, then $d=4$.
\end{theorem}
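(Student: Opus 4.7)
The plan is to use Theorem~\ref{thm-dim} for the dimension and the locator-polynomial framework of Lemma~\ref{lem-locator} to produce or rule out low-weight codewords. The stated dimension $\frac{q^m-1}{q-1} - 2m$ follows from Theorem~\ref{thm-dim} applied with $\delta = 3$, since $\lceil 2(1-1/q)\rceil = 2$ for every $q \ge 3$. The lower bound $d \ge 3$ in all three cases is the BCH bound, so the task reduces to producing an explicit low-weight codeword in cases (1) and (2), and to ruling out weight-$3$ codewords in case (3).

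For case (1) I would construct a $\{0,1\}$-codeword of weight $3$; by Lemma~\ref{lem-locator} this amounts to finding three distinct $n$-th roots of unity $X_1,X_2,X_3$ with $\sigma_1 = \sigma_2 = 0$, which is equivalent to their being three distinct cube roots of a common value. An elementary mod-$3$ computation on $n = 1 + q + \cdots + q^{m-1}$ shows $3 \mid n$ under either hypothesis (i) or (ii); setting $\gamma = \beta^{n/3}$, the locator polynomial
$$
\sigma(z) = (1-z)(1-\gamma z)(1-\gamma^2 z) = 1 - z^3
$$
satisfies $\sigma_1 = \sigma_2 = 0$ and yields the weight-$3$ codeword, hence $d = 3$. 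Case (2) is handled in the same spirit: a parallel mod-$4$ computation gives $4 \mid n$ under hypothesis (iii) or (iv); setting $\iota = \beta^{n/4}$ (a primitive fourth root of unity in $\langle\beta\rangle$), the locator polynomial $(1-z)(1+z)(1-\iota z)(1+\iota z) = 1 - z^4$ again has $\sigma_1 = \sigma_2 = 0$ and produces a weight-$4$ codeword, giving $d \le 4$ and hence $3 \le d \le 4$.

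The main obstacle is case (3), where one must rule out weight-$3$ codewords for $q = 3$; the difficulty is that codewords over $\gf(3)$ can take the value $-1$, so Lemma~\ref{lem-locator} does not apply directly. I would consider, up to a global sign, the two possible nonzero patterns $(+,+,+)$ and $(+,+,-)$. The all-$+$ pattern reduces exactly to the locator-polynomial setup, and $\sigma_1 = \sigma_2 = 0$ forces the three locators to be distinct cube roots of $\sigma_3$, which requires $3 \mid n$; but for $q=3$ one has $n = (3^m-1)/2$, which is coprime to $3$, so no such codeword exists. For the $(+,+,-)$ pattern the parity-check equations read $X_1 + X_2 - X_3 = 0$ and $X_1^2 + X_2^2 - X_3^2 = 0$; substituting $X_3 = X_1 + X_2$ into the second equation and using $-2 \equiv 1 \pmod 3$ collapses to $X_1 X_2 = 0$, contradicting that locators are nonzero. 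This gives $d \ge 4$, and combined with the upper bound $d \le 4$ from case (iv), yields $d = 4$.
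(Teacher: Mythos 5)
Your proposal is correct, and for parts (1) and (2) it is essentially the paper's argument: the paper likewise takes the locators to be the cube roots of unity $1,\beta^{n/3},\beta^{2n/3}$ (resp.\ $\pm 1,\pm\beta^{n/4}$), whose locator polynomial is exactly your $1-z^3$ (resp.\ $1-z^4$), and verifies $\sigma_1=\sigma_2=0$ via Lemma~\ref{lem-locator}. The genuine divergence is in part (3). The paper disposes of the lower bound $d\ge 4$ in one line: when $q=3$ the cyclotomic coset $C_1$ contains $3$, so the generator polynomial $\lcm(m_1(x),m_2(x))$ already has the three consecutive roots $\beta,\beta^2,\beta^3$, and the BCH bound itself gives $d\ge 4$ --- no codeword analysis is needed. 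You instead rule out weight-$3$ codewords directly by splitting into the sign patterns $(+,+,+)$ and $(+,+,-)$ over $\gf(3)$; both branches are handled correctly (in the first, $\sigma_1=\sigma_2=0$ forces the locators to be distinct cube roots of a common value, impossible since $\gcd(3,n)=1$ --- or, even more simply, since cubing is injective in characteristic $3$; in the second, $-2\equiv 1$ collapses the equations to $X_1X_2=0$). Your route is longer but self-contained and has the side benefit of making explicit \emph{why} no weight-$3$ codeword exists, including the non-$\{0,1\}$ patterns that Lemma~\ref{lem-locator} cannot see; the paper's route buys brevity by noticing the coset fact, which is also the cleaner takeaway (for $q=3$ the designed distance $3$ code coincides with the designed distance $4$ code). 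Both are valid, and your deduction of the upper bound $d\le 4$ from case (iv) matches the paper's.
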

\begin{proof}
The dimension of $\C_{(n, q, m, 3)}$ follows from the fact that $|C_1|=|C_2|=m$. By the BCH bound, the minimum distance $d$ is at least three.

We first prove $d=3$ when i) or ii) holds. Since $\de=3$, by Lemma~\ref{lem-locator}, it suffices to find a polynomial of the form $\sig(z)=\sum_{i=0}^3 \sig_iz^i$ such that all roots of $\sig(z)$ belong to the cyclic group $\langle \be \rangle$ and $\sig_1=\sig_2=0$. Notice that both i) and ii) imply that $3 \mid \frac{q^m-1}{q-1}$. Taking $X_1=1,X_2=\be^{\frac{q^m-1}{3(q-1)}},X_3=\be^{\frac{2(q^m-1)}{3(q-1)}}$, we find that $X_1,X_2,X_3 \in \langle \be \rangle$ are distinct and satisfy
$$
\left\{\begin{array}{c}
           X_1+X_2+X_3=0, \\
           X_1X_2+X_1X_3+X_2X_3=0,
       \end{array}\right.
$$
Thus the desired polynomial can be chosen as $\sig(z)=\prod_{i=1}^3 (1-X_iz)$. Hence, there exists a codeword of weight three and we have $d=3$.

Next, we prove $3 \le d \le 4$ when iii) or iv) holds. By Lemma~\ref{lem-locator}, it suffices to find a polynomial of the form $\sig^{\prime}(z)=\sum_{i=0}^4 \sig_i^{\prime}z^i$ such that all roots of $\sig^{\prime}(z)$ belong to the cyclic group $\langle \be \rangle$ and $\sig_1^{\prime}=\sig_2^{\prime}=0$. Notice that both iii) and iv) imply that $4 \mid \frac{q^m-1}{q-1}$ and $-1 \in \langle \be \rangle$. Thus taking $X_1^{\prime}=1,X_2^{\prime}=-1,X_3^{\prime}=\be^{\frac{q^m-1}{4(q-1)}},
X_4^{\prime}=-\be^{\frac{q^m-1}{4(q-1)}}$, we find that $X_1^{\prime},X_2^{\prime},X_3^{\prime},X_4^{\prime} \in \langle \be \rangle$ are distinct and satisfy
$$
\left\{\begin{array}{c}
           X_1^{\prime}+X_2^{\prime}+X_3^{\prime}+X_4^{\prime}=0, \\
           X_1^{\prime}X_2^{\prime}+X_1^{\prime}X_3^{\prime}+X_1^{\prime}X_4^{\prime}+
           X_2^{\prime}X_3^{\prime}+X_2^{\prime}X_4^{\prime}+X_3^{\prime}X_4^{\prime}=0,
       \end{array}\right.
$$
Thus the desired polynomial can be chosen as $\sig^{\prime}(z)=\prod_{i=1}^4 (1-X_i^{\prime}z)$. Hence, there exists a codeword of weight four and we have $3 \le d \le 4$.

As for (3), if $q=3$, we have $d \ge 4$ by the BCH bound. On the other hand, when $q=3$ and $2 \mid m$, we have shown there is a codeword of weight four. Therefore we have $d=4$.
\end{proof}

We make a remark here. Theoretically, the sphere packing bound gives the following restriction on the parameters of $\C_{(n, q, m, 3)}$:
$$
\sum_{i=0}^{\lfloor \frac{d-1}{2} \rfloor} (q-1)^i \binom{n}{i} \le q^{2m}.
$$
Suppose $m \ge 2$ and $d \ge 7$, then the above inequality leads to
$$
q^m+\frac{(q^m-1)(q^m-q)}{2}+\frac{(q^m-1)(q^m-q)(q^m-2q+1)}{6} \le q^{2m},
$$
After a direct simplification, we have
$$
q^{2m-1}-3q^{m}-2q^{m-1}+2q-4 \le 0.
$$
When $m \ge 2$, this inequality does not hold, except for $(q,m)=(3,2)$. In addition, by Theorem~\ref{thm-delta0031}, the code has minimum distance four when $(q,m)=(3,2)$. Hence, for $q \ge 3$ and $m \ge 2$, we have shown $\C_{(n, q, m, 3)}$ has minimum distance $d \le 6$. Together with the BCH bound, we obtain the following restriction on the minimum distance $d$ of $\C_{(n, q, m, 3)}$, where $m \ge 2$:
$$
\begin{cases}
  4 \le d \le 6  & \mbox{if $q=3$,} \\
  3 \le d \le 6  & \mbox{if $q>3$.}
\end{cases}
$$
On the other hand, Magma examples (for $3 \leq m \leq 6$ and $3 \leq q \leq 5$) suggest that $d=4$ when $q=3$ and $3 \le d \le 4$ when $q>3$. These experimental results have been partially explained in Theorem~\ref{thm-delta0031}. However, we are not sure if the method employing locator polynomials can be applied to the remaining cases, since this approach can only find the codeword whose components are either $0$ or $1$.

\begin{example}
The code $\C_{(n, q, m, 3)}$ has parameters $[40,32,4]$ when $(q, m)=(3,4)$,
$[1365,1353,3]$ when $(q, m)=(4,6)$, and
$[156,148,3]$ when $(q, m)=(5,4)$. The ternary code with parameters $[40,32,4]$
is an optimal cyclic code according to {\rm \cite[p. 313]{Dingbk15}}.
\end{example}

\section{The narrow-sense BCH codes with small dimensions}\label{sec-smalldim}

In this section, we study narrow-sense BCH codes with small dimensions. Our task is to find the first few largest coset leaders modulo $n=(q^m-1)/(q-1)$. By Proposition~\ref{prop-Bose}, the Bose distance of a narrow-sense BCH code must be a coset leader. The knowledge of these coset leaders provides information on the Bose distance and dimension of narrow-sense BCH codes whose zeroes include all roots of $x^n-1$ except those corresponding to the first few largest coset leaders. We denote the first and the second largest coset leader modulo $n$ by $\de_1$ and $\de_2$, respectively. It seems a hard problem to determine
$\delta_1$ and $\delta_2$ for all $q > 2$. For the rest of this section, we only deal with the case $q=3$.

\subsection{The two coset leaders $\de_1$ and $\de_2$}

\begin{lemma}\label{lem-delta1}
Let $q=3$ and $m \ge 2$. The first largest coset leader modulo $n=(q^m-1)/(q-1)$ is
$$
\delta_1=q^{m-1}-1-\frac{q^{\lfloor (m-1)/2 \rfloor}-1}{q-1}
$$
and
\begin{eqnarray*}
|C_{\delta_1}|=\left\{ \begin{array}{ll}
m & \mbox{ if $m$ is odd, } \\
\frac{m}{2}  & \mbox{ if $m$ is even.}

\end{array}
\right.
\end{eqnarray*}
The second largest coset leader modulo $n$ is
$$
\delta_2=q^{m-1}-1-\frac{q^{\lfloor (m+1)/2 \rfloor}-1}{q-1}
$$
and $|C_{\delta_2}|=m$.
\end{lemma}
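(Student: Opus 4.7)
The plan is to compute the base-$q$ digit expansions $\overline{\delta_1}$ and $\overline{\delta_2}$ explicitly and verify, by direct manipulation of these tuples, that they are the largest and second largest coset leaders modulo $n$. The key tool is the observation made after equation (\ref{2:xi1}): for $s<n$, the number $qs \bmod n$ equals the left cyclic shift of $\overline{s}$ (viewed as a length-$m$ base-$q$ tuple), followed if necessary by a single subtraction of $\overline{n}=(1,1,\ldots,1)$, because the shifted value lies below $q^m-1=2n$ when $q=3$. A direct calculation first gives
\[
  \overline{\delta_1} = (0, \underbrace{2,\ldots,2}_{a_1}, \underbrace{1,\ldots,1}_{b_1}), \qquad \overline{\delta_2} = (0, \underbrace{2,\ldots,2}_{a_2}, \underbrace{1,\ldots,1}_{b_2}),
\]
with $(a_1,b_1)=(t,t)$ and $(a_2,b_2)=(t-1,t+1)$ when $m=2t+1$, and $(a_1,b_1)=(t,t-1)$ and $(a_2,b_2)=(t-1,t)$ when $m=2t$.

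To verify that $\delta_1$ is a coset leader, I would for each $i\in\{1,\ldots,m-1\}$ read off the left cyclic shift of $\overline{\delta_1}$ by $i$ positions, case-split according to where the shift cuts through the two blocks, perform the (at most one) subtraction of $\overline{n}$ when the shifted value is $\ge n$, and compare the resulting tuple with $\overline{\delta_1}$ lexicographically. In each case the outcome is shown to be $\ge\delta_1$. Next, for every $s$ with $\delta_1<s<n$, I would exhibit an $i$ with $q^i s\bmod n<s$. Since $2q^{m-1}>n$, the top digit of $\overline{s}$ lies in $\{0,1\}$, which splits the analysis: when the top digit is $1$, a right cyclic shift (i.e.\ multiplication by $q^{m-1}$) moves $s_0$ into the top position, and sub-cases on the values of $s_0$ and $s_{m-2}$ produce a smaller element; when the top digit is $0$, the hypothesis $s>\delta_1$ forces the lower $m-1$ digits of $\overline{s}$ to strictly dominate those of $\overline{\delta_1}$, and the specific block structure of $\overline{\delta_1}$ then dictates which power of $q$ to apply. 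The analogous analysis for $\delta_2$ follows the same pattern, with the added check that no coset leader other than $\delta_1$ lies in $(\delta_2,n)$.

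For the sizes of the cosets, when $m=2t$ I would use the factorization $n=(q^t-1)(q^t+1)/(q-1)$ and compute modulo $q^t+1$, where $q^t\equiv -1$ and hence $q^{2t-1}\equiv -q^{t-1}$; a short calculation yields $2\delta_1\equiv 0\pmod{q^t+1}$, so $\delta_1$ is a multiple of $(q^t+1)/(q-1)$ and $q^t\delta_1\equiv\delta_1\pmod n$, giving $|C_{\delta_1}|\mid t$. Ruling out each proper divisor $\ell$ of $t$ by checking that the $\ell$-fold left cyclic shift of $\overline{\delta_1}$, even after subtracting $\overline{n}$, does not equal $\overline{\delta_1}$ then yields $|C_{\delta_1}|=m/2$. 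For $m$ odd the same divisibility approach excludes every $\ell\mid m$ with $\ell<m$, so $|C_{\delta_1}|=m$; the same method applied to $\overline{\delta_2}$, whose block pattern admits no nontrivial rotational symmetry modulo $n$, gives $|C_{\delta_2}|=m$ in both parities.

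The main obstacle is the case analysis in the second paragraph: because $\overline{\delta_1}$ and $\overline{\delta_2}$ contain the digit $2$, Lemma~\ref{lem-NDS1} does not apply, and one must track by hand the single subtraction of $\overline{n}$ that may occur after each cyclic shift. Enumerating all configurations of $\overline{s}$ for $\delta_1<s<n$ (or $\delta_2<s<\delta_1$) and producing the correct witness $i$ in each configuration is where the proof requires the most care.
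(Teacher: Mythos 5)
Your overall strategy coincides with the paper's: represent residues modulo $n$ as base-$3$ tuples of length $m$, use the fact that multiplication by $q$ is a left cyclic shift followed by at most one subtraction of $\overline{n}=(1,1,\ldots,1)$, write down the candidate tuples $(0,2,\ldots,2,1,\ldots,1)$, and check by explicit shifting that they are coset leaders. Those parts are correct, and your computation modulo $3^{t}+1$ for even $m=2t$, giving $2\delta_1\equiv0\pmod{3^t+1}$ and hence $q^{t}\delta_1\equiv\delta_1\pmod n$, is a valid (and slightly different) route to $|C_{\delta_1}|=m/2$; the paper instead reads this off from the shift identity $\overline{q^{u}\delta_1}=(0,2^{v+1},1^{u-1})=\overline{\delta_1}$ when $u=v+1$, together with the observation that every other shift begins with the digit $1$. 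One concrete step of your plan fails as stated, however: when the top digit of $\overline{s}$ is $1$, a single right cyclic shift with sub-cases on $s_0$ and $s_{m-2}$ need not produce a smaller element. For $m=5$ and $\overline{s}=(1,1,0,1,1)$ (i.e.\ $s=112$, which lies in $(\delta_1,n)=(76,121)$), the right shift gives $(1,1,1,0,1)=118>112$. The correct move, used in the paper, is to note that $s<n$ together with top digit $1$ forces some digit of $\overline{s}$ to be $0$, and rotating that $0$ to the top yields a value less than $3^{m-1}\le s$ with no reduction modulo $n$ needed.

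The more serious gap is that the heart of the lemma --- showing that no $s$ with $\delta_2<s<n$ and $s\ne\delta_1$ is a coset leader --- is only promised (``the block structure dictates which power of $q$ to apply''), and you yourself identify it as the part requiring the most care. The paper avoids enumerating all $s$ in the interval by constraining coset leaders structurally: any coset leader whose tuple begins $(0,2,\ldots)$ can contain none of the adjacent digit pairs $00$, $01$, $12$. The first two are excluded by rotating the offending pair to the front; for $12$ one rotates it to the front, observes that the resulting value exceeds $n$, and subtracts $(1,1,\ldots,1)$ once to obtain a tuple beginning $(0,b)$ with $b\le1$, which is smaller than $(0,2,\ldots)$. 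This forces every such coset leader to be a concatenation of blocks $(0,2^{u_i},1^{v_i})$; a single block is a coset leader iff $u_1\le v_1+1$ (so $u_1\le\lfloor m/2\rfloor$), while two or more blocks force $u_1\le u_2$ and hence $u_1\le m/2-1$. The two largest admissible values are then exactly $\delta_1$ and $\delta_2$ with no further enumeration. If you pursue your interval-by-interval plan you will in effect have to rediscover these forbidden-pair facts in order to organize the configurations of $\overline{s}$ and name a witness shift for each; I would recommend proving them up front, as the paper does, rather than leaving the existence of a witness in ``each configuration'' unverified.
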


\begin{proof}
When $2 \le m \le 3$, the desired conclusions can be verified directly. Below, we consider the case that $m \ge 4$.
Suppose $0< \de <n$ is a coset leader of the form $\ol{\de}=(a_{m-1},a_{m-2},\ldots,a_0)$. Since $\de <n$, we have $0 \le a_{m-1} \le 1$. We first observe that $a_{m-1}=0$. Otherwise, assume $a_{m-1} =1$. Then, there is a component $a_i$ such that $a_i=0$. We can take a cyclic shift $\ol{q^j \de}$ for some $j$ (see (\ref{2:xi1})) so that $a_i=0$ becomes the first component. This implies $\ol{q^j \de} <\ol{\de}$, which is a contradiction to the assumption that $\de$ is a coset leader.

Next we assume that the coset leader $\de$ is of the form $\ol{\de}=(0,2,a_{m-3},\ldots,a_0)$. By the same argument as before, $00$ and $01$ cannot appear in the sequence $\ol{\de}$. Moreover, if $12$ appears, by taking a suitable cyclic shift we have $\ol{q^j\de}=(1,2,\ldots, 0,2,\ldots)$ for some $j$. It is easy to see that
\[0<\ol{q^j\de}-\ol{n}=(1,2,\ldots, 0,2,\ldots)-(1,1,\ldots,1)=(0,b_{m-1},\ldots,b_0)=\ol{v},\]
where
$0\le b_{m-1} \le 1$, hence $\ol{v}<\ol{\de}$, a contradiction to the assumption that $\de$ is a coset leader. So $12$ does not appear in $\ol{\de}$.

Thirdly, let
\begin{eqnarray} \label{2:xi2} \ol{\de}=(0,\underbrace{2,\ldots,2}_{u},\underbrace{1,\ldots,1}_{v}),\end{eqnarray} where $u,v \ge 1$ and $u+v+1=m$. We can check that the sequences corresponding to $q^i \de \bmod{n}$ are given by
\begin{align*}
\ol{q^i\de}&=(\underbrace{1,\ldots,1}_{u-i},0,\underbrace{2,\ldots,2}_{v+1},\underbrace{1,\ldots,1}_{i-1}), \ 1 \le i \le u, \\
\ol{q^{u+1+i}\de}&=(\underbrace{1,\ldots,1}_{v-i},0,\underbrace{2,\ldots,2}_{u},\underbrace{1,\ldots,1}_{i}), \ 0 \le i \le v-1.
\end{align*}
Therefore, the $\de$ of the form (\ref{2:xi2}) is a coset leader modulo $n$ if and only if $u \le v+1$, that is, $u \le \frac{m}{2}$.

Finally, let $\de$ be a coset leader of the form $\ol{\de}=(0,2,a_{m-3},\ldots,a_{0})$ but not of the form (\ref{2:xi2}). From what we have proved, $\ol{\de}$ must be of the form
\begin{equation}\label{2:xi3}
\ol{\de}=(0,\underbrace{2,\ldots,2}_{u_1},\underbrace{1,\ldots,1}_{v_1},
0,\underbrace{2,\ldots,2}_{u_2},\underbrace{1,\ldots,1}_{v_2},
\cdots,0,\underbrace{2,\ldots,2}_{u_t},\underbrace{1,\ldots,1}_{v_t}),
\end{equation}
where $t \ge 2, u_i \ge 1, v_i \ge 0 $ for each $i$ and $u_1 \le u_i$ for each $2 \le i \le t$. In particular, we have $u_1 \le u_2$ and $u_1+u_2+2 \le m$, which implies that $u_1 \le \frac{m}{2}-1$.

From the argument above, we conclude that the largest two coset leaders $\de_1$ and $\de_2$ are given as below:
\begin{align*}
\ol{\de_1}=(0,\underbrace{2,\ldots,2}_{\frac{m-1}{2}},
\underbrace{1,\ldots,1}_{\frac{m-1}{2}}),
\ \ol{\de_2}=(0,
\underbrace{2,\ldots,2}_{\frac{m-3}{2}},
\underbrace{1,\ldots,1}_{\frac{m+1}{2}}), & \mbox{ if } m \ge 5 \mbox{ is odd;} \\
\ol{\de_1}=(0,\underbrace{2,\ldots,2}_{\frac{m}{2}},
\underbrace{1,\ldots,1}_{\frac{m}{2}-1}),
\ \ol{\de_2}=(0,\underbrace{2,\ldots,2}_{\frac{m}{2}-1},
\underbrace{1,\ldots,1}_{\frac{m}{2}}), &\mbox{ if } m \ge 4 \mbox{ is even.}
\end{align*}
Consequently, noting $q=3$, we have
\begin{spacing}{1.4}
\begin{align*}
\delta_1 &= \begin{cases}
               2\sum_{i=\frac{m-1}{2}}^{m-2} q^i+\sum_{i=0}^{\frac{m-3}{2}} q^i & \mbox{if $m \ge 5$ is odd} \\
               2\sum_{i=\frac{m}{2}-1}^{m-2} q^i+\sum_{i=0}^{\frac{m}{2}-2} q^i & \mbox{if $m \ge 4$ is even}
            \end{cases} \\
         &=
           \begin{cases}
               q^{m-1}-q^{\frac{m-1}{2}}+\frac{q^{\frac{m-1}{2}}-1}{q-1} & \mbox{if $m \ge 5$ is odd} \\
               q^{m-1}-q^{\frac{m}{2}-1}+\frac{q^{\frac{m}{2}-1}-1}{q-1} & \mbox{if $m \ge 4$ is even}
            \end{cases} \\
         &= q^{m-1}-1-\frac{q^{\lfloor (m-1)/2 \rfloor}-1}{q-1}.
\end{align*}
\end{spacing}
\noindent Since $q^{\frac{m}{2}}\de_1 \equiv \de_1 \pmod{n}$ when $m$ is even, we have
\begin{eqnarray*}
|C_{\delta_1}|=\left\{ \begin{array}{ll}
m & \mbox{ if $m$ is odd, } \\
\frac{m}{2}  & \mbox{ if $m$ is even.}
\end{array}
\right.
\end{eqnarray*}
As for $\de_2$ we have
\begin{spacing}{1.4}
\begin{align*}
\delta_2 &= \begin{cases}
               2\sum_{i=\frac{m+1}{2}}^{m-2} q^i+\sum_{i=0}^{\frac{m-1}{2}} q^i & \mbox{if $m \ge 5$ is odd} \\
               2\sum_{i=\frac{m}{2}}^{m-2} q^i+\sum_{i=0}^{\frac{m}{2}-1} q^i & \mbox{if $m \ge 4$ is even}
            \end{cases} \\
         &= \begin{cases}
               q^{m-1}-q^{\frac{m+1}{2}}+\frac{q^{\frac{m+1}{2}}-1}{q-1} & \mbox{if $m \ge 5$ is odd} \\
               q^{m-1}-q^{\frac{m}{2}}+\frac{q^{\frac{m}{2}}-1}{q-1} & \mbox{if $m \ge 4$ is even}
            \end{cases} \\
         &=q^{m-1}-1-\frac{q^{\lfloor (m+1)/2 \rfloor}-1}{q-1}.
\end{align*}
\end{spacing}
It can also be shown that $|C_{\de_2}|=m$. This completes the proof of Lemma \ref{lem-delta1}.
\end{proof}

\subsection{Some other coset leaders $\delta_i$}\label{sec-otherdeltai}

Let $\delta_i$ denote the $i$-th largest coset leader modulo $n=(q^m-1)/(q-1)$. In this subsection, we point out that some of the coset leaders $\delta_i$ can also be determined in the case $q=3$.

Let $\de$ be a coset leader, then according to the proof of Lemma~\ref{lem-delta1}, $\ol{\de}$ must have the form (\ref{2:xi2}) or (\ref{2:xi3}). Suppose $\ol{\de}$ has the form (\ref{2:xi3}), namely,
\[\ol{\de}=(0,\underbrace{2,\ldots,2}_{u_1},
\underbrace{1,\ldots,1}_{v_1},\ldots,0,\underbrace{2,\ldots,2}_{u_t},
\underbrace{1,\ldots,1}_{v_t}),\]
where $t \ge 2$, $u_i \ge 1,v_i \ge 0$ for each $1 \le i \le t$ and $\sum_{i=1}^t (u_i+v_i+1)=m$. Since $\de$ is a coset leader, we must have $u_1 \le u_i$ for each $2 \le i \le t$. Moreover, we have
\begin{align*}
\ol{q\de}&=(\underbrace{1,\ldots,1}_{u_1-1},0,\underbrace{2,\ldots,2}_{v_1+1},\underbrace{1,\ldots,1}_{u_2-1},\ldots,0,\underbrace{2,\ldots,2}_{v_t+1}),\\
\ol{q^{u_1}\de}&=(0,\underbrace{2,\ldots,2}_{v_1+1},\underbrace{1,\ldots,1}_{u_2-1},\ldots,0,\underbrace{2,\ldots,2}_{v_t+1},\underbrace{1,\ldots,1}_{u_1-1}).
\end{align*}
By the last equation, $\de$ is a coset leader only if $u_1 \le v_i+1$ for each $1 \le i \le t$. Since $\sum_{i=1}^t (u_i+v_i+1)=m$, we have $u_1+v_1+u_2+v_2+2 \le m$, which implies that $u_1 \le \lfloor \frac{m}{4} \rfloor$. This gives a strong restriction on a coset leader of the form (\ref{2:xi3}). On the other hand, for $1 \le i \le \lceil \frac{m-1}{2} \rceil$, define $\de_i^{\prime}$ to be the following coset leaders of the form (\ref{2:xi2}):
$$
\ol{\de_i^{\prime}}=(0,\underbrace{2,\ldots,2}_{\lceil \frac{m+1}{2} \rceil-i},\underbrace{1,\ldots,1}_{m-1-\lceil \frac{m+1}{2} \rceil+i}).
$$
Since $u_1 \le \lfloor \frac{m}{4} \rfloor$, when $\lceil \frac{m+1}{2} \rceil-i \ge \lfloor \frac{m}{4} \rfloor$, $\de_i^{\prime}$ is greater than any coset leader of the form (\ref{2:xi3}). Thus, for $1 \le i \le \lceil \frac{m}{4} \rceil$, $\de_i$ must be of the form (\ref{2:xi2}). Indeed, we have $\de_i=\de_i^{\prime}$ when $1 \le i \le \lceil \frac{m}{4} \rceil$. Therefore
$$
\de_i=q^{m-1}-1-\frac{q^{\lfloor \mmt+i \rfloor}-1}{q-1}, \quad 1 \le i \le \lceil \frac{m}{4} \rceil.
$$
We can see the condition $q=3$ plays an essential role in the proof. For the case $q>3$, we do not have a similar result.

\subsection{The ternary codes $\C_{(n,3,m,\delta_1)}$ and $\tilde{\C}_{(n,3,m,\delta_1)}$}

In this subsection, we study the ternary codes $\C_{(n,3,m,\delta_1)}$ and $\tilde{\C}_{(n,3,m,\delta_1)}$. We recall the following trace representations of cyclic codes, which is a direct consequence of Delsarte's Theorem \cite{Del}. Note that the length $n$ in the following Proposition is not necessarily of the form $\frac{q^m-1}{q-1}$.

\begin{proposition}\label{prop-trace}
Let $q$ be a prime power and $m=\ord_n(q)$. Let $\gamma$ be an $n$-th primitive root of unity in $\gf(q^m)$ and $\mathcal{C}$ be a cyclic code of length $n$ over $\gf(q)$. Suppose $\mathcal{C}$ has $s$ nonzeroes and let $\gamma^{i_1}, \gamma^{i_2}, \ldots, \gamma^{i_s}$ be the $s$ roots of its parity-check polynomial which are not conjugate with each other. Denote the size of the $q$-cyclotomic coset $C_{i_j}$ to be $m_j$, $1 \le j \le s$. Then $\mathcal{C}$ has the following trace representation
$$
\mathcal{C}=\{ c(a_1,a_2,\ldots,a_s) \mid a_j \in \gf(q^{m_j}), 1 \le j \le s \},
$$
where
$$
c(a_1,a_2,\ldots,a_s)=(\sum_{j=1}^s \Tr^{q^{m_j}}_{q}(a_j\gamma^{-li_j}))_{l=0}^{n-1}.
$$
\end{proposition}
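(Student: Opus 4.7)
The plan is to exhibit a $\gf(q)$-linear bijection from $\prod_{j=1}^{s}\gf(q^{m_j})$ onto $\mathcal{C}$ by verifying (i) the image is contained in $\mathcal{C}$ and (ii) the map is injective; the dimensions will then match automatically, since $\sum_{j=1}^s m_j = \deg h(x) = \dim \mathcal{C}$, where $h(x)=\prod_j m_{i_j}(x)$ is the parity-check polynomial.

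For step (i), I will identify a candidate vector $c = (c_0,\ldots,c_{n-1})$ with the polynomial $C(x) = \sum_{l=0}^{n-1} c_l x^l \in \gf(q)[x]/(x^n-1)$; membership of $c$ in $\mathcal{C}$ is then equivalent to $C(\gamma^k) = 0$ for every $k \in \Z_n \setminus \bigcup_{j=1}^{s} C_{i_j}$, since those $\gamma^k$ are precisely the zeros of the generator polynomial. Using the expansion $\Tr^{q^{m_j}}_q(y) = \sum_{t=0}^{m_j-1} y^{q^t}$ and interchanging the order of summation,
$$
C(\gamma^k) = \sum_{j=1}^s \sum_{t=0}^{m_j-1} a_j^{q^t} \sum_{l=0}^{n-1} \gamma^{l(k-i_j q^t)}.
$$
The innermost geometric sum equals $n$ when $k \equiv i_j q^t \pmod{n}$ and $0$ otherwise; the constant $n$ is nonzero in $\gf(q^m)$ because $\gcd(n,q)=1$. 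For $k$ outside every $C_{i_j}$, no pair $(j,t)$ contributes, so $C(\gamma^k)=0$ and $c \in \mathcal{C}$.

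For step (ii), I will assume $c(a_1,\ldots,a_s) = 0$, which forces $C(\gamma^k)=0$ for every $k \in \Z_n$. Fixing $j$ and any $u$ with $0 \le u \le m_j-1$, I will evaluate the same double sum at $k = i_j q^u$. By the pairwise disjointness of the cosets $C_{i_1},\ldots,C_{i_s}$ and the minimality of $m_j$ in the defining relation $i_j q^{m_j} \equiv i_j \pmod{n}$, the only index pair $(j',t)$ with $0 \le t \le m_{j'}-1$ satisfying $i_{j'} q^t \equiv i_j q^u \pmod{n}$ is $(j',t)=(j,u)$. Hence $C(\gamma^{i_j q^u}) = n\, a_j^{q^u}$, and vanishing of these values for every $u$ forces $a_j=0$. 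Combined with the dimension count, this completes the argument.

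The main technical obstacle is the careful bookkeeping in the double sum and the verification that each target residue is hit at most once; the two essential ingredients are the invertibility of $n$ in the ground field (via $\gcd(n,q)=1$), which keeps the geometric-sum contributions nonzero, and the disjointness of distinct $q$-cyclotomic cosets modulo $n$, which isolates the unique contributing pair in the injectivity computation.
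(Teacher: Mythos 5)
Your proof is correct. Note, however, that the paper does not actually prove Proposition~\ref{prop-trace}: it simply asserts the statement as ``a direct consequence of Delsarte's Theorem'' and cites \cite{Del}, so there is no detailed argument to match yours against. Your route is a self-contained elementary verification: you define the candidate map from $\prod_{j=1}^{s}\gf(q^{m_j})$ into $\gf(q)^n$, use the orthogonality of the geometric sums $\sum_{l=0}^{n-1}\gamma^{l(k-i_jq^t)}$ (nonzero only when $k\equiv i_jq^t\pmod n$, where the value $n$ is invertible because $\gcd(n,q)=1$) to check both containment in $\mathcal{C}$ and injectivity, and then close with the dimension count $\sum_j m_j=\deg h(x)=\dim\mathcal{C}$. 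All the delicate points are handled: $\gamma^{i_j}\in\gf(q^{m_j})$ so the traces are well defined, the elements $i_jq^t\bmod n$ for $0\le t\le m_j-1$ are pairwise distinct by minimality of $m_j$, and distinct cosets $C_{i_j}$ are disjoint, so exactly one pair $(j,t)$ contributes at $k=i_jq^u$. What the Delsarte route buys is brevity and a conceptual explanation (the trace representation is the image under the trace of the dual of the subfield subcode); what your computation buys is independence from that machinery and an explicit inverse-type formula $C(\gamma^{i_jq^u})=n\,a_j^{q^u}$ recovering the coefficients $a_j$ from the codeword, which is occasionally useful in its own right.
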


Given a codeword $c$, we use $w(c)$ to denote the Hamming weight of $c$. For the ternary code $\C_{(n,3,m,\delta_1)}$, we have the following theorem.

\begin{theorem}\label{thm-delta1}
Let $m \geq 3$. Then the ternary code $\C_{(n,3,m,\delta_1)}$ has parameters
$$
\left[ \frac{3^m-1}{2},\, k,\,   \delta_1 \right],
$$
where
\begin{eqnarray*}
k=\left\{ \begin{array}{ll}
m+1 & \mbox{ if $m$ is odd, } \\
\frac{m+2}{2}  & \mbox{ if $m$ is even.}
\end{array}
\right.
\end{eqnarray*}
In addition, $\C_{(n,3,m,\de_1)}$ is a three-weight code if $m \geq 4$ is even, and a four-weight code if $m \geq 3$ is odd. The weight distribution of $\C_{(n, 3, m, \delta_1)}$ is listed in Table~\ref{table1} and Table~\ref{table2}.
\end{theorem}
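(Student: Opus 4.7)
The plan is to (i) pin down the dimension from Lemma~\ref{lem-delta1}, (ii) trace-parameterise the codewords via Proposition~\ref{prop-trace}, and (iii) reduce the weight count to a quadratic-form character sum that Lemma~\ref{lemmaquad} evaluates in closed form.

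For the dimension, since $\delta_1$ is the largest nonzero coset leader modulo $n$ by Lemma~\ref{lem-delta1}, the only cyclotomic cosets omitted from the root set of $g_{(n,3,m,\delta_1)}(x)=\lcm(m_1(x),\ldots,m_{\delta_1-1}(x))$ are $\{0\}$ and $C_{\delta_1}$; hence the parity-check polynomial is $h(x)=(x-1)m_{\delta_1}(x)$ and $\dim\C_{(n,3,m,\delta_1)}=1+|C_{\delta_1}|$, which combined with Lemma~\ref{lem-delta1} gives the claimed $k$. Because $h(x)$ has exactly two non-conjugate roots $1$ and $\beta^{\delta_1}$, Proposition~\ref{prop-trace} writes each codeword as
\[
 c(a,b)=\bigl(a+\Tr^{3^{e}}_{3}(b\,\beta^{l\delta_1})\bigr)_{l=0}^{n-1},\qquad (a,b)\in\gf(3)\times\gf(3^{e}),\ e=|C_{\delta_1}|;
\]
when $m$ is even the identity $\Tr^{3^{m/2}}_{3}(x)=-\Tr^{3^m}_{3}(x)$ on $\gf(3^{m/2})$ lets us work uniformly with the full trace to $\gf(3)$ from $\gf(3^m)$.

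Next, the character-sum identity $\mathbf{1}[x=0]=\tfrac{1}{3}\sum_{y\in\gf(3)}\zeta_3^{yx}$ together with the symmetry $T(-b)=\overline{T(b)}$ gives
\[
 w(c(a,b))=\tfrac{2n}{3}-\tfrac{2}{3}\,\mathrm{Re}\bigl(\zeta_3^{a}\,T(b)\bigr),\qquad T(b):=\sum_{l=0}^{n-1}\zeta_3^{\Tr(b\,\beta^{l\delta_1})}.
\]
Since $\beta^{l\delta_1}$ is $n$-periodic in $l$ and $3^m-1=2n$, extending the index set to $\{0,1,\ldots,3^m-2\}$ just doubles the sum; the substitution $w=\alpha^{-l}$ together with the congruence $2\delta_1\equiv-(3^{m-1}+3^{t})\pmod{3^m-1}$ (which follows from Lemma~\ref{lem-delta1}, with $t=\lfloor(m-1)/2\rfloor$) and the further substitution $u=w^{3^{m-1}}$ (so $w=u^3$ and $w^{3^{m-1}+3^t}=u^{1+3^{t+1}}$) yields
\[
 T(b)=\tfrac{1}{2}\bigl(S_b-1\bigr),\qquad S_b:=\sum_{u\in\gf(3^m)}\zeta_3^{\Tr(b\,u^{1+3^{t+1}})}.
\]
The function $Q_b(u):=\Tr(b\,u^{1+3^{t+1}})$ is a Dembowski--Ostrom (hence quadratic) form on $\gf(3^m)$ regarded as a $\gf(3)$-space, so Lemma~\ref{lemmaquad} evaluates $S_b=\varepsilon_b(\sq)^{r_b}\,3^{m-r_b/2}$ in terms of its rank $r_b$ and a sign $\varepsilon_b\in\{\pm 1\}$.

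The main technical step is the rank-and-sign analysis of $Q_b$. The radical of its associated bilinear form is the kernel of the linearised polynomial $L_b(v)=b\,v^{3^{t+1}}+b^{3^{m-t-1}}v^{3^{m-t-1}}$; raising $L_b(v)=0$ to the $3^{t+1}$-th power reduces it to $v^{3^{2(t+1)}-1}=-b^{\,1-3^{t+1}}$ in $\gf(3^m)^{\ast}$. When $m$ is odd, $\gcd(t+1,m)=1$ forces $\gcd(3^{2(t+1)}-1,3^m-1)=2$; since $-1$ is a non-square in $\gf(3^m)^{\ast}$ while $b^{\,1-3^{t+1}}$ is always a square (its exponent being even), the equation has no nonzero solution and hence $r_b=m$ for every $b\ne 0$. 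When $m$ is even, $b\in\gf(3^{m/2})$ satisfies $b^{3^{m/2}}=b$, so $L_b(v)=-b\,v^{3^{m/2}}$ has trivial kernel and again $r_b=m$. The sign $\varepsilon_b$ is constant on each orbit of the rescaling $b\mapsto b\,\lambda^{1+3^{t+1}}$ ($\lambda\in\gf(3^m)^{\ast}$); this group action has two orbits on $\gf(3^m)^{\ast}$ when $m$ is odd (because $\gcd(1+3^{(m+1)/2},3^m-1)=2$), but only one orbit on $\gf(3^{m/2})^{\ast}$ when $m$ is even (because $(\gf(3^m)^{\ast})^{1+3^{m/2}}=\gf(3^{m/2})^{\ast}$). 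Plugging the resulting values $S_b=\pm(\sq)^{m}\,3^{m/2}$ back into the weight formula and letting $a$ range over $\gf(3)$ yields the tabulated three (even $m$) or four (odd $m$) nonzero Hamming weights together with their multiplicities, proving Tables~\ref{table1} and~\ref{table2}; the smallest weight so obtained equals $\delta_1$, which together with the BCH lower bound gives $d=\delta_1$.
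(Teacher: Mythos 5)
Your proposal is correct in its overall architecture and reaches the same tables, but it travels a genuinely different route from the paper's. The paper exploits the gcd computation $\gcd(3^{m-1}+3^{\lfloor (m-1)/2\rfloor},3^m-1)=2$ to replace the codeword exponent by $2$, so that $\C_{(n,3,m,\delta_1)}$ becomes permutation-equivalent to $\{(\Tr(a\al^{2j})+b)_j\}$; the weight count then reduces to the classical quadratic Gauss sum $G(\eta)$, evaluated explicitly by Lemma~\ref{lemmagauss} via the identity $\sum_{x\ne 0}\z^{\Tr(ax^2)}=\eta(a)G(\eta)-1$. You instead keep the Dembowski--Ostrom exponent $1+3^{t+1}$, compute the rank of $Q_b$ through the radical of its bilinear form (the kernel of a linearized polynomial), and obtain the value distribution of $S_b$ from Lemma~\ref{lemmaquad} plus an orbit count under $b\mapsto b\lambda^{1+3^{t+1}}$. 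Your method is essentially the one the paper deploys later for $\delta_2$ (Lemma~\ref{QF} and Lemma~\ref{QF2}); it is more robust (it would survive exponents whose gcd with $3^m-1$ is not $2$) at the cost of being heavier than the paper's direct Gauss-sum evaluation. Your reduction steps (the congruence $-2\delta_1\equiv 3^{m-1}+3^{t}$, the substitution $u=w^{3^{m-1}}$, the kernel computations giving $r_b=m$ in both parities, and the identification of the minimum weight with $\delta_1$) all check out.

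There is one concrete gap: you establish that $S_b=\pm(\sq)^{m}3^{m/2}$ and that the sign is constant on orbits, but you never determine \emph{which} sign occurs. For $m$ odd this is quickly repaired: the second assertion of Lemma~\ref{lemmaquad} gives $S_{-b}=\eta_0\bigl((-1)^{m}\bigr)S_b=-S_b$, and since $-1$ is a nonsquare in $\gf(3^m)$ for odd $m$, the elements $b$ and $-b$ lie in your two different orbits, so the two signs are realized equally often (and the symmetry $a\leftrightarrow -a$ makes the exact assignment irrelevant for Table~\ref{table2}). For $m$ even, however, there is a single orbit and a single sign, and the same lemma only yields $S_{-b}=S_b$, which is vacuous; without fixing the sign you cannot decide between $3^{m-1}+3^{m/2-1}$ and $3^{m-1}-3^{m/2-1}$ as the weight at $a=0$, so Table~\ref{table1} is not yet pinned down. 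The missing line is supplied, e.g., by the first moment $\sum_{b\in\gf(3^{m/2})}S_b=3^{m/2}$ (only $u=0$ survives the inner sum over $b$, since the norm $u\mapsto u^{1+3^{m/2}}$ vanishes only at $0$), which forces $S_b=-3^{m/2}$ for every $b\ne 0$; alternatively, note that $Q_b(u)=-\Tr_3^{3^{m/2}}(bN(u))$ with $N$ the surjective norm onto $\gf(3^{m/2})$ having fibres of size $3^{m/2}+1$, whence $S_b=1-(3^{m/2}+1)=-3^{m/2}$ directly.
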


\begin{proof}
The conclusion on the dimension of the code follows from Lemma \ref{lem-delta1}. As for the weight distribution, we consider only the case that $m \ge 3$ is odd. The case that $m \ge 4$ is even can be treated in the same way and we omit the details.

Note that $\C_{(n,3,m,\de_1)}$ has two nonzeroes. More precisely, $1$ and $\be^{\de_1}$ are two non-conjugate roots of its parity-check polynomial. Since
\begin{align*}
\frac{3^m-1}{2}-\de_1 &= \frac{3^m-1}{2}-3^{m-1}+1+\frac{3^{\frac{m-1}{2}}-1}{2} \\ &=\frac{3^{m-1}+3^{\frac{m-1}{2}}}{2},
\end{align*}
by Proposition~\ref{prop-trace}, we have
$$
\C_{(n,3,m,\de_1)}=\{ c_1(a,b) : a \in \gf(3^m), \ b \in \gf(3)\},
$$
where
\begin{align*}
c_1(a,b)=&\left(\Trtmt\left(a\be^{\frac{(3^{m-1}+3^{\frac{m-1}{2}})j}{2}}\right)+b\right)_{j=0}^{n-1} \\
              =&\left(\Trtmt\left(a\al^{(3^{m-1}+3^{\frac{m-1}{2}})j}\right)+b\right)_{j=0}^{n-1}.
\end{align*}
Here $\alpha$ is a primitive element of $\gf(3^m)$ and $\be=\al^2$. Note that
\begin{align*}
\gcd(3^{m-1}+3^{\mmo},3^m-1)&=\gcd(3^{m-1}+3^{\mmo},3^{\mmo}+1) \\
                            &=\gcd(3^{\mmo}+1,3^{\mpo}+1) \\
                            &=\gcd(3^{\mmo}+1,2) \\
                            &=2.
\end{align*}
Thus the code $\C_{(n,3,m,\de_1)}$ is permutation equivalent to the following code
\begin{eqnarray} \label{2:xi4}
\left\{c_2(a,b)=\left(\Trtmt\left(a\al^{2j}\right)+b\right)_{j=0}^{n-1}: \quad a \in \gf(3^m), \ b \in \gf(3)\right\}.
\end{eqnarray}
Because of the simple expression of the code (\ref{2:xi4}) and because the weight distribution problem has been studied extensively for many families of cyclic codes in recent years (see for example \cite{DLLZ} for a recent update and references therein), the weight distribution of the code in (\ref{2:xi4}) may be known, but we have not found it in the literature. For the sake of completeness and the illustration of ideas and methods which will be used later, we include a detailed computation below.

First, if $a=0$ and $0 \ne b \in \gf(3)$, it is easy to see that the Hamming weight $w(c_2(a,b))$ of the codeword $c_2(a,b)$ takes the value $\frac{3^m-1}{2}$ twice.

Second, if $0 \ne a \in \gf(3^m)$ and $b \in \gf(3)$, we have
\begin{align*}
w(c_2(a,b))&=n-|\{0 \le j \le n-1 \mid \Trtmt(a\al^{2j})+b=0 \}| \\
         &=n -\sum_{j=0}^{n-1}\frac13 \sum_{x \in \Ft} \z^{x(\Trtmt(a\al^{2j})+b)} \\
         &=\frac23 n -\frac13 \sum_{x \in \Ft^*} \z^{bx} \sum_{j=0}^{n-1} \z^{\Trtmt(ax\al^{2j})} \\
         &=\frac23 n -\frac16 \sum_{x \in \Ft^*} \z^{bx} \sum_{y \in \Ftm^*} \z^{\Trtmt(axy^2)}.
\end{align*}
By (\ref{pri:1}) we obtain
\begin{align*}
w(c_2(a,b))&=\frac23 n -\frac16 \sum_{x \in \Ft^*} \z^{bx} (\eta(ax)G(\eta)-1) \\
&=\frac23 n -\frac16 \eta(a)G(\eta) \left(\z^b+\eta(-1)\z^{-b}\right)+ \frac16 \left(\z^b+\z^{-b}\right), \end{align*}
where $\eta$ is the quadratic character of $\gf(3^m)$ and $G(\eta)$ is the quadratic Gauss sum over $\Ftm$.

Since $m$ is odd, $\eta(-1)=-1$. Using the values of $G(\eta)$ from Lemma \ref{lemmagauss}, we find that if $a\ne 0$ and $b=0$, $w(c_2(a,b))$ takes the value $3^{m-1}$ for $3^m-1$ times, and if $a \ne 0$ and $b \ne 0$, the weight $w(c_2(a,b))$ takes each of the values $3^{m-1}-\frac{1+(-1)^{(m+1)/2}3^{(m-1)/2}}{2}$ and $3^{m-1}-\frac{1+(-1)^{(m-1)/2}3^{(m-1)/2}}{2}$ for $3^m-1$ times. Therefore, we obtain the weight distribution for the case that $m \ge 3$ is odd, which is recorded in Table~\ref{table2}.
\end{proof}

\begin{table}
\ra{1.9}
\begin{center}
\caption{The weight distribution of $\C_{(n,q,m,\de_1)}$ when $m\ge4$ is even}
\begin{tabular}{|c|c|}
\hline
Weight  &  Frequency \\ \hline
$0$     &  $1$ \\ \hline
$3^{m-1}-\frac{3^{m/2-1}+1}{2}$ & $2(3^{m/2}-1)$ \\ \hline
$3^{m-1}+3^{m/2-1}$ & $3^{m/2}-1$ \\ \hline
$\frac{3^m-1}{2}$ & $2$ \\ \hline
\end{tabular}
\label{table1}
\end{center}
\end{table}

\begin{table}
\ra{1.9}
\begin{center}
\caption{The weight distribution of $\C_{(n,q,m,\de_1)}$ when $m\ge3$ is odd}
\begin{tabular}{|c|c|}
\hline
Weight  &  Frequency \\ \hline
$0$     &  $1$ \\ \hline
$3^{m-1}$ & $3^m-1$ \\ \hline
$3^{m-1}-\frac{1+(-1)^{(m-1)/2}3^{(m-1)/2}}{2}$ & $3^m-1$ \\ \hline
$3^{m-1}-\frac{1+(-1)^{(m+1)/2}3^{(m-1)/2}}{2}$ & $3^m-1$ \\ \hline
$\frac{3^m-1}{2}$ & $2$ \\ \hline
\end{tabular}
\label{table2}
\end{center}
\end{table}

\begin{example}
Let $(q, m)=(3,4)$. Then the code $\C_{(n,q, m, \delta_1)}$ of Theorem \ref{thm-delta1} has parameters
$[40, 3, 25]$, and weight enumerator $1+16z^{25}+8z^{30}+2z^{40}$. This code is the best cyclic code
according to {\rm \cite[p. 305]{Dingbk15}}.
\end{example}

\begin{example}
Let $(q, m)=(3,5)$. Then the code $\C_{(n, q, m, \delta_1)}$ of Theorem \ref{thm-delta1} has parameters
$[121, 6, 76]$, and weight enumerator $1+242z^{76}+242z^{81}+242z^{85}+2z^{121}$.
\end{example}

\begin{theorem}\label{thm-delta11}
Let $m \geq 3$. Then the ternary code  $\tilde{\C}_{(n, 3, m, \delta_1)}$ has parameters
$$
\left[ \frac{3^m-1}{2},\, k,\, d \right],
$$
where
\begin{eqnarray*}
k=\left\{ \begin{array}{ll}
m & \mbox{ if $m$ is odd, } \\
\frac{m}{2}  & \mbox{ if $m$ is even,}
\end{array}
\right.
\end{eqnarray*}
and
\begin{eqnarray*}
d=\left\{ \begin{array}{ll}
3^{m-1} & \mbox{ if $m$ is odd, } \\
3^{m-1}+3^{\frac{m}{2}-1}  & \mbox{ if $m$ is even.}
\end{array}
\right.
\end{eqnarray*}
In addition, $\tilde{\C}_{(n, 3, m, \delta_1)}$ has only one nonzero weight.
\end{theorem}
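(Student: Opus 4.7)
The plan is to exploit the trace representation of $\C_{(n,3,m,\delta_1)}$ already derived inside the proof of Theorem~\ref{thm-delta1} and identify $\tilde{\C}_{(n,3,m,\delta_1)}$ as the subspace of those codewords whose parameter $b$ equals $0$. The dimension is immediate: since $\tilde{\C}_{(n,3,m,\delta_1)}$ is by definition the even-like subcode of $\C_{(n,3,m,\delta_1)}$, we have $\dim \tilde{\C}_{(n,3,m,\delta_1)} = \dim \C_{(n,3,m,\delta_1)} - 1$, and the claimed formula for $k$ follows from Theorem~\ref{thm-delta1}.

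Next, I would observe that in the representation $c_1(a,b)$, the coordinate sum equals
\begin{align*}
\sum_{j=0}^{n-1}\!\left(\Tr(a\beta^{\delta_1 j})+b\right) \;=\; \Tr\!\left(a\sum_{j=0}^{n-1}\beta^{\delta_1 j}\right) + nb \;=\; b
\end{align*}
in $\gf(3)$, because $\beta^{\delta_1}\ne 1$ forces the geometric sum to vanish and because $n=(3^m-1)/2\equiv 1\pmod 3$. Hence $c_1(a,b)\in \tilde{\C}_{(n,3,m,\delta_1)}$ iff $b=0$, so $\tilde{\C}_{(n,3,m,\delta_1)}=\{c_1(a,0)\}$, and it suffices to show that $w(c_1(a,0))$ is independent of $a\ne 0$ and equals the asserted value.

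For odd $m$, specializing the weight formula obtained inside the proof of Theorem~\ref{thm-delta1} to $b=0$ and using $\eta(-1)=-1$ kills the Gauss-sum term, yielding
\begin{align*}
w(c_1(a,0)) \;=\; \tfrac{2n}{3} + \tfrac{1}{3} \;=\; 3^{m-1}
\end{align*}
for every $a\ne 0$, which is the one-weight claim.

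For even $m$, $\beta^{\delta_1}$ lies in $\gf(3^{m/2})$ since $|C_{\delta_1}|=m/2$, and the representation becomes $c_1(a,0)=(\Tr^{3^{m/2}}_3(a\beta^{\delta_1 j}))_{j=0}^{n-1}$ with $a\in \gf(3^{m/2})$. The hard part will be to verify that $\beta^{\delta_1}$ is in fact a primitive element of $\gf(3^{m/2})^*$. I would establish this by a short Euclidean-algorithm reduction on $\gcd(3^m-1,\,2\delta_1)$, starting from $\delta_1 = 3^{m-1}-1-(3^{m/2-1}-1)/2$ and using $3^{m-1}+3^{m/2-1}=3^{m/2-1}(3^{m/2}+1)$, to obtain $\gcd(n,\delta_1)=(3^{m/2}+1)/2$; the order of $\beta^{\delta_1}$ is therefore $n/\gcd(n,\delta_1)=3^{m/2}-1$. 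Consequently, as $j$ ranges over $\{0,\ldots,n-1\}$, the element $\beta^{\delta_1 j}$ hits every element of $\gf(3^{m/2})^*$ exactly $(3^{m/2}+1)/2$ times. Since $|\{x\in \gf(3^{m/2})^* : \Tr^{3^{m/2}}_3(ax)=0\}|=3^{m/2-1}-1$ for $a\ne 0$, a direct count gives
\begin{align*}
w(c_1(a,0)) \;=\; n - \tfrac{3^{m/2}+1}{2}\bigl(3^{m/2-1}-1\bigr) \;=\; 3^{m-1}+3^{m/2-1}.
\end{align*}
This is independent of $a\ne 0$, confirming that $\tilde{\C}_{(n,3,m,\delta_1)}$ is one-weight with $d=3^{m-1}+3^{m/2-1}$. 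Apart from the primitivity verification just described, every step is a direct consequence of the machinery already assembled for Theorem~\ref{thm-delta1}.
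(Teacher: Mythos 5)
Your proposal is correct, and for odd $m$ it is essentially the paper's argument: both reduce to the trace representation with the single nonzero $\be^{\de_1}$ and evaluate the weight with the quadratic Gauss sum, the term $\z^{b}+\eta(-1)\z^{-b}$ vanishing at $b=0$ because $\eta(-1)=-1$. Two points of genuine difference are worth noting. First, you reach the representation by slicing $\C_{(n,3,m,\de_1)}$ at $b=0$ and verifying via the coordinate-sum computation $\sum_j c_1(a,b)_j=nb=b$ that this slice is exactly the even-like subcode, whereas the paper applies Delsarte's theorem (Proposition~\ref{prop-trace}) directly to $\tilde{\C}_{(n,3,m,\de_1)}$; the two routes land in the same place, but your check is a nice independent confirmation that the even-like subcode really is the $b=0$ part. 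Second, and more substantively, the paper dismisses the even-$m$ case with ``can be treated in the same way'' and gives no details, while you supply a complete and \emph{more elementary} argument there: you compute $\gcd(n,\de_1)=(3^{m/2}+1)/2$, conclude that $\be^{\de_1}$ is a primitive element of $\gf(3^{m/2})^{*}$ hit $(3^{m/2}+1)/2$ times each as $j$ varies, and count kernel elements of the linear functional $x\mapsto\Tr^{3^{m/2}}_3(ax)$ to get $w=n-\tfrac{3^{m/2}+1}{2}(3^{m/2-1}-1)=3^{m-1}+3^{m/2-1}$ with no exponential sums at all (the code is, up to repetition of coordinates, a simplex-type code over $\gf(3^{m/2})$). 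I checked the gcd reduction via $n-\de_1=3^{m/2-1}(3^{m/2}+1)/2$ and the final arithmetic; both are right, so your write-up actually fills a gap the paper leaves to the reader.
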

\begin{proof}
The conclusion on the dimension of the code follows from Lemma \ref{lem-delta1}. Note that $\tilde{\C}_{(n,3,m,\de_1)}$ has one nonzero and $\be^{\de_1}$ is a root of its parity-check polynomial. By using Proposition~\ref{prop-trace}, we have
$$
\tilde{\C}_{(n,3,m,\de_1)}=\{c_1(a) : a \in \gf(3^m)\},
$$
where
\begin{align*}
c_1(a)&=\left(\Trtmt\left(a\be^{\frac{(3^{m-1}+3^{\lfloor \frac{m-1}{2} \rfloor})j}{2}}\right)\right)_{j=0}^{n-1} \\
    &=\left(\Trtmt\left(a\al^{(3^{m-1}+3^{\lfloor \frac{m-1}{2} \rfloor})j}\right)\right)_{j=0}^{n-1}.
\end{align*}
We consider only the case that $m \ge 3$ is odd. The case that $m \ge 4$ is even can be treated in the same way. In this case, as was shown in the proof of Theorem~\ref{thm-delta1}, since $\gcd(3^{m-1}+3^{\mmo},3^m-1)=2$, $\tilde{\C}_{(n,3,m,\de_1)}$ is permutation equivalent to the following code
\begin{eqnarray}\label{2:xi5}
\left\{c_2(a)=\left(\Trtmt\left(a\al^{2j}\right)\right)_{j=0}^{n-1}, \quad a \in \gf(3^m)\right\}.
\end{eqnarray}
Again the weight distribution of the code of (\ref{2:xi5}) may be known, but we have not found it in the literature. We provide details of the computation as below.

Clearly, $c_2(0)$ is the zero codeword. If $0 \ne a \in \gf(3^m)$, using (\ref{pri:1}), we have
\begin{align*}
w(c_2(a))&=n-|\{0 \le j \le n-1 \mid \Trtmt(a\al^{2j})=0 \}| \\
         &=n -\sum_{j=0}^{n-1}\frac13 \sum_{x \in \Ft} \z^{x(\Trtmt(a\al^{2j}))} \\
         &=\frac23 n -\frac13 \sum_{x \in \Ft^*} \sum_{j=0}^{n-1} \z^{\Trtmt(ax\al^{2j})} \\
         &=\frac23 n -\frac16 \sum_{x \in \Ft^*} \sum_{y \in \Ftm^*} \z^{\Trtmt(axy^2)} \\
         &=\frac23 n -\frac16 \sum_{x \in \Ft^*} (\eta(ax)G(\eta)-1) \\
         &=3^{m-1}-\frac16 G(\eta) \sum_{x \in \Ft^*} \eta(ax),
\end{align*}
where $\eta$ is the quadratic character of $\gf(3^m)$ and $G(\eta)$ is the quadratic Gauss sum over $\Ftm$. Since $\eta(1)=1$ and $\eta(-1)=-1$, we have $\sum_{x \in \Ft^*} \eta(ax)=0$ for each $a\ne0$. Therefore, $w(c_2(a))=3^{m-1}$ for each $a \ne 0$. Namely, $\C_{(n,3,m,\de_1)}$ has only one nonzero weight $3^{m-1}$.
\end{proof}

We remark that, as it could be easily verified, the code $\tilde{\C}_{(n, 3, m, \delta_1)}$ of Theorem \ref{thm-delta11} meets the
Griesmer bound for linear codes, and hence is optimal.



\subsection{The ternary codes $\C_{(n,3,m,\delta_2)}$ and $\tilde{\C}_{(n,3,m,\delta_2)}$}

The determination of the weight distributions of the ternary codes $\C_{(n,3,m,\delta_2)}$ and $\tilde{\C}_{(n,3,m,\delta_2)}$ is more complicated and depends heavily on the theory of quadratic forms over finite fields (see \cite{LN}). We first do some preparations.

For reasons which will be explained later, let $m \ge 3$ be an odd integer. For $a,b \in \Ftm$, define the quadratic form
$$
Q_{a,b}(x)=\Trtmt\left(ax^{3^{\mmo}+1}+bx^{3^{\mmt}+1}\right).
$$
Let $r_{a,b}$ be the rank of the quadratic form $Q_{a,b}$.

\begin{lemma}\label{QF}
Let $m$ be odd, $a,b \in \Ftm$ and $(a,b) \ne (0,0)$. The quadratic form
$$
Q_{a,b}(x)={\rm Tr_3^{3^{m}}}\left(ax^{3^{\mmo}+1}+bx^{3^{\mmt}+1}\right)
$$
has rank $r_{a,b} \in \{m,m-1,m-2,m-3\}$.
\end{lemma}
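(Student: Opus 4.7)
My plan is to realize the rank deficiency of $Q_{a,b}$ as the dimension over $\Ft$ of the kernel of an explicit $3$-linearized polynomial on $\Ftm$. Since the characteristic is odd, the associated symmetric bilinear form
\begin{equation*}
B(x,y) := Q_{a,b}(x+y) - Q_{a,b}(x) - Q_{a,b}(y)
\end{equation*}
satisfies $r_{a,b} = m - \dim_{\Ft}\mathrm{rad}(B)$, so it is enough to show $\dim_{\Ft}\mathrm{rad}(B)\le 3$ whenever $(a,b)\ne (0,0)$.

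Writing $t_1=3^{\mmo}$ and $t_2=3^{\mmt}$, the characteristic-$3$ identity $(x+y)^{t_i+1}=x^{t_i+1}+xy^{t_i}+x^{t_i}y+y^{t_i+1}$ yields
\begin{equation*}
B(x,y) = \Trtmt\bigl(a(xy^{t_1}+x^{t_1}y)+b(xy^{t_2}+x^{t_2}y)\bigr).
\end{equation*}
I would then use $\Trtmt(z)=\Trtmt(z^{3^k})$ together with $x^{3^m}=x$ on $\Ftm$ to rewrite the ``twisted'' terms. Since $t_1\cdot 3^{\mpo}=t_2\cdot 3^{\mpt}=3^m$, one obtains $\Trtmt(ax^{t_1}y)=\Trtmt(a^{3^{\mpo}}xy^{3^{\mpo}})$ and $\Trtmt(bx^{t_2}y)=\Trtmt(b^{3^{\mpt}}xy^{3^{\mpt}})$, which presents $B$ as $B(x,y)=\Trtmt(xL(y))$ with
\begin{equation*}
L(y)=by^{3^{\mmt}}+ay^{3^{\mmo}}+a^{3^{\mpo}}y^{3^{\mpo}}+b^{3^{\mpt}}y^{3^{\mpt}}.
\end{equation*}
By nondegeneracy of the trace pairing, $\mathrm{rad}(B)=\ker L$.

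To count $|\ker L|$ I would substitute $z=y^{3^{\mmt}}$, a Frobenius automorphism of $\Ftm$ and hence a bijection. This converts $L(y)=0$ into
\begin{equation*}
\widetilde{L}(z):=bz+az^3+a^{3^{\mpo}}z^9+b^{3^{\mpt}}z^{27}=0.
\end{equation*}
When $(a,b)\ne (0,0)$, at least one of the four coefficients of $\widetilde{L}$ is nonzero, so $\widetilde{L}$ is a nonzero polynomial of (ordinary) degree at most $3^3=27$, with at most $27$ roots in $\Ftm$. Therefore $\dim_{\Ft}\ker L\le 3$ and $r_{a,b}\ge m-3$; together with the trivial bound $r_{a,b}\le m$ this gives $r_{a,b}\in\{m,m-1,m-2,m-3\}$.

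The main obstacle is careful exponent bookkeeping: one has to pick exactly the right Frobenius twists $3^{\mpo}$ and $3^{\mpt}$ in the second step so that $x^{3^m}=x$ collapses the $x$-exponents to $1$, and the degree reduction in the third step works only because the four exponents $\mmt,\mmo,\mpo,\mpt$ form four consecutive integers spanning a range of exactly $3$, which is precisely what pins the rank defect at most $3$.
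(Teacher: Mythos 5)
Your proof is correct and follows essentially the same route as the paper: both compute the associated bilinear form, identify its radical with the kernel of the linearized polynomial $by^{3^{\mmt}}+ay^{3^{\mmo}}+a^{3^{\mpo}}y^{3^{\mpo}}+b^{3^{\mpt}}y^{3^{\mpt}}$, and reduce via a Frobenius substitution to a nonzero polynomial of degree at most $27$, giving $\dim_{\Ft}\ker L\le 3$. The only cosmetic difference is that you justify the link between the rank defect and the solution count of $L(y)=0$ via nondegeneracy of the trace pairing, whereas the paper cites this as a known fact.
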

\begin{proof}
Let $B_{a,b}(x,y)$ be the symmetric bilinear form associated with the quadratic form $Q_{a,b}$, that is,
\begin{align*}
B_{a,b}(x,y)&=Q_{a,b}(x+y)-Q_{a,b}(x)-Q_{a,b}(y)\\
      &=\Trtmt\left(\left(b^{3^{\mpt}}x^{3^{\mpt}}+a^{3^{\mpo}}x^{3^{\mpo}}
      +ax^{3^{\mmo}}+bx^{3^{\mmt}}\right)y\right).
\end{align*}
Recall that the equation
$$
b^{3^{\mpt}}x^{3^{\mpt}}+a^{3^{\mpo}}x^{3^{\mpo}}+ax^{3^{\mmo}}+bx^{3^{\mmt}}=0
$$
has $3^r$ solutions $x \in \Ftm$ if and only if the rank of $Q_{a,b}$ equals $m-r$ (see \cite[p. 81]{ZD}, for instance). Note that the number of solutions of the above equation equals the number of solutions of the following one:
$$
b^{3^{\mpt}}x^{3^{3}}+a^{3^{\mpo}}x^{3^{2}}+ax^{3}+bx=0.
$$
This has at most 27 solutions, thus $r \le 3$ and $r_{a,b} \in \{m,m-1,m-2,m-3\}$.
\end{proof}

For $a,b \in \Ftm$, define
$$
T(a,b)=\sum_{x \in \Ftm} \z^{\Trtmt(ax^{3^{\mmo}+1}+bx^{3^{\mmt}+1})}.
$$
Denote by $\eta_0$ the quadratic character over $\Ft$. Then, by Lemma~\ref{lemmaquad}, we have
\begin{align}
S(a,b) &:=\sum_{y \in \Ft^*} T(ya,yb) \notag \\
       &=\sum_{y \in \Ft^*} \eta_0(y^{r_{a,b}})T(a,b) \notag \\
       &= T(a,b) \left(1+(-1)^{r_{a,b}} \right) \label{eqn-ST}
\end{align}

We will need the following lemma concerning power moment identities.

\begin{lemma}\label{moment}
For the $S(a,b)$ defined above, we have the following:
\begin{itemize}
\item[i)] $\sum_{a,b \in \Ftm} S(a,b)=2 \times 3^{2m}$.
\item[ii)] $\sum_{a,b \in \Ftm} S(a,b)^2=4\times3^{3m}$.
\item[iii)] $\sum_{a,b \in \Ftm} S(a,b)^3=32\times3^{3m}-24\times3^{2m}$.
\item[iv)] $\sum_{a,b \in \Ftm} T(a,b)^2=3^{2m}$.
\end{itemize}
\end{lemma}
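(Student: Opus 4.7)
The plan is to prove all four identities by expanding $S(a,b)$ and $T(a,b)$ into character sums, swapping the orders of summation, and using the orthogonality relation $\sum_{a \in \gf(3^m)} \zeta_3^{\Tr(ay)} = 3^m \cdot [y=0]$. Each moment then becomes $3^{2m}$ times a count of solutions to a polynomial system in $\gf(3^m)$. Throughout, write $d_1 = 3^{(m-1)/2}+1$ and $d_2 = 3^{(m-3)/2}+1$, so that $Q_j(x)=x^{d_j}$. A basic fact used repeatedly is that $\gcd(d_i, 3^m-1) = 2$, so the image of $x \mapsto x^{d_i}$ on $\gf(3^m)^*$ is the set of nonzero squares, and $-1$ is a nonsquare for $m$ odd.

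Identities (i), (ii), and (iv) then follow from routine counting. For (i), the inner $(a,b)$-sum forces $Q_1(x)=Q_2(x)=0$, hence $x=0$, and each of the two $y \in \gf(3)^*$ contributes $3^{2m}$. For (iv), $\sum_{a,b}T(a,b)^2 = 3^{2m} \cdot |\{(x_1,x_2):x_1^{d_j}+x_2^{d_j}=0,\ j=1,2\}|$; any nontrivial solution would demand $(x_1/x_2)^{d_j}=-1$, impossible by the basic fact. For (ii), stratifying by $c = -y_2/y_1 \in \gf(3)^*$ gives $N(c=1) = 2\cdot 3^m-1$ (from $x_1=\pm x_2$) and $N(c=-1)=1$, summing over the four $(y_1,y_2)$ pairs to produce $4 \cdot 3^{3m}$.

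The real obstacle is (iii). Writing $S(a,b) = T(a,b)+T(-a,-b)$ and expanding gives $\sum_{a,b}S(a,b)^3 = 3^{2m}(2N_1 + 6N_2)$, where $N_1$ counts triples $(x_1,x_2,x_3)$ with $\sum_i x_i^{d_j} = 0$ for both $j$, and $N_2$ counts those with $x_1^{d_j}+x_2^{d_j} = x_3^{d_j}$ for both $j$. Stratifying by the zero pattern of the $x_i$'s (and using from (iv) that $x^{d_j}+y^{d_j}=0$ with $xy \neq 0$ has no solutions), one reduces to $N_1 = 1 + (3^m-1)K_1$ and $N_2 = 1 + 4(3^m-1) + (3^m-1)K_2$, where
\[
K_\epsilon := \bigl|\{(u,v) \in (\gf(3^m)^*)^2 : u^{d_1}+v^{d_1} = u^{d_2}+v^{d_2} = \epsilon\}\bigr|,
\]
with $\epsilon=-1$ for $K_1$ and $\epsilon=+1$ for $K_2$. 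A direct computation shows (iii) is equivalent to $K_1 = 4$ and $K_2 = 0$.

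Evaluating $K_\epsilon$ is the heart of the argument. Substituting $u=wv$ turns the system into $v^{d_j}(1+w^{d_j}) = \epsilon$ for $j=1,2$ (with $1+w^{d_j} \neq 0$ since $-1$ is a nonsquare). Using the characterization that $(A,C) \in (\gf(3^m)^*)^2$ lies in the image of $u \mapsto (u^{d_1}, u^{d_2})$ iff both are nonzero squares and $A^{d_2} = C^{d_1}$, compatibility of the system reduces to the polynomial equation $(1+w^{d_2})^{d_1} = (1+w^{d_1})^{d_2}$. Applying Freshman's dream $(1+x)^3 = 1+x^3$ in characteristic $3$, each side collapses into a four-term polynomial, and subtraction yields the key factorization
\[
(1+w^{d_2})^{d_1}-(1+w^{d_1})^{d_2} = (w^{2e}-1) \cdot w^{d_2} \cdot (w^{3^{m-2}-1}-1), \qquad e := 3^{(m-3)/2}.
\]
For $m$ odd, $\gcd(2e,3^m-1) = \gcd(3^{m-2}-1,3^m-1) = 2$, so the only nonzero roots of the right-hand side are $w \in \{\pm 1\}$. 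A finite check then shows these produce admissible $(u,v)$ only when $\epsilon=-1$, and all such solutions lie in $\{\pm 1\}^2$, giving $K_1 = 4$ and $K_2=0$. Discovering and verifying this compact polynomial identity, which depends essentially on characteristic $3$, is the crux of the proof; everything else is bookkeeping.
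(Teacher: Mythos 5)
Your proposal is correct, and parts i), ii), iv) follow essentially the same route as the paper: orthogonality turns each moment into $3^{2m}$ times the number of solutions of a pair of diagonal equations, and the counts are settled by $\gcd(3^{\mmo}+1,3^m-1)=\gcd(3^{\mmt}+1,3^m-1)=2$ together with $-1$ being a nonsquare for $m$ odd. The genuine divergence is in iii), which is the heart of the lemma. The paper counts the full system $ux^{d_1}+vy^{d_1}+wz^{d_1}=ux^{d_2}+vy^{d_2}+wz^{d_2}=0$ over $(u,v,w)\in(\gf(3)^*)^3$ in one pass: in the all-nonzero case it raises the two equations to suitable powers of $3$, takes differences, and eliminates to force $(x/y)^8=1$, hence $x/y=\pm1$ since $\gcd(8,3^m-1)=2$. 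You instead write $S(a,b)=T(a,b)+T(-a,-b)$ and reduce to the two counts $N_1,N_2$ (this is the same total count, merely organized by the sign pattern of $(u,v,w)$, so $N_3=2N_1+6N_2$ in the paper's notation), then dehomogenize to the two-variable system $u^{d_1}+v^{d_1}=u^{d_2}+v^{d_2}=\epsilon$ and resolve it via the factorization $(1+w^{d_2})^{d_1}-(1+w^{d_1})^{d_2}=(w^{2e}-1)\,w^{d_2}\,(w^{3^{m-2}-1}-1)$ with $e=3^{\mmt}$, $d_1=3e+1$, $d_2=e+1$. I checked this identity (it is exactly two applications of the Frobenius $(1+x)^{3^k}=1+x^{3^k}$), and the $\gcd$ computations $\gcd(2e,3^m-1)=\gcd(3^{m-2}-1,3^m-1)=2$ do confine the nonzero roots to $\pm1$; the image characterization of $v\mapsto(v^{d_1},v^{d_2})$ that you invoke is also correct, precisely because $-1$ is a nonsquare. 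Your route carries slightly more bookkeeping up front but replaces the paper's ad hoc elimination with a mechanical, verifiable polynomial identity; both arguments ultimately rest on the same two pillars, the characteristic-$3$ Frobenius and the relevant $\gcd$'s all equalling $2$.
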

\begin{proof}
i) The identity is trivially true.

ii) Define $N_2$ to be the number of pairs $(u,v) \in \Ft^* \times \Ft^*$ and $(x,y) \in \Ftm \times \Ftm$, which satisfy the following two equations:
$$
\begin{cases}
  ux^{3^{\mmo}+1}+vy^{3^{\mmo}+1}=0,  \\
  ux^{3^{\mmt}+1}+vy^{3^{\mmt}+1}=0.
\end{cases}
$$
It is easy to see that $\sum_{a,b \in \Ftm} S(a,b)^2=3^{2m}N_2$. Thus, it suffices to determine $N_2$.

If one of $x,y$ is zero, then the other must also be zero. When $x=y=0$, we have four choices of the pair $(u,v)$. When $x \ne 0$ and $y \ne 0$, the system above is equivalent to
$$
\begin{cases}
  (\frac{x}{y})^{3^{\mmo}+1}=-\frac{v}{u}, \\
  (\frac{x}{y})^{3^{\mmt}+1}=-\frac{v}{u}.
\end{cases}
$$
As shown in the proof of Theorem~\ref{thm-delta1}, we have $\gcd(3^{\mmo}+1,3^m-1)=2$. Similarly,
\begin{align*}
\gcd(3^{\mmt}+1,3^m-1)&=\gcd(3^{\mmt}+1,3^{\mpt}+1) \\
                      &=\gcd(3^{\mmt}+1,3^3-1) \\
                      &=\begin{cases}
                          \gcd(3^{\mmt}-1+2,3^3-1) & \mbox{if $\mmt \equiv 0\bmod3$} \\
                          \gcd(3^{\mmt}-3+4,3^3-1) & \mbox{if $\mmt \equiv 1\bmod3$} \\
                          \gcd(3^{\mmt}-9+10,3^3-1) & \mbox{if $\mmt \equiv 2\bmod3$}
                        \end{cases}\\
                      &=\begin{cases}
                          \gcd(2,3^3-1) & \mbox{if $\mmt \equiv 0\bmod3$} \\
                          \gcd(4,3^3-1) & \mbox{if $\mmt \equiv 1\bmod3$} \\
                          \gcd(10,3^3-1) & \mbox{if $\mmt \equiv 2\bmod3$}
                        \end{cases}\\
                      &=2
\end{align*}
Thus, $-\frac{v}{u} \in \Ft^*$ must be a square. Namely, $-\frac{v}{u}=1$. Thus, we have two choices for the pair $(u,v)$. Meanwhile, $\frac{x}{y}=\pm1$ are precisely all solutions to the following system of equations:
$$
\begin{cases}
  (\frac{x}{y})^{3^{\mmo}+1}=1, \\
  (\frac{x}{y})^{3^{\mmt}+1}=1.
\end{cases}
$$
Thus, we have $2(3^m-1)$ choices for the pair $(x,y)$. To sum up, we have $4(3^m-1)$ choices for the pairs $(u,v)$ and $(x,y)$ when $x \ne 0$ and $y \ne 0$. In total, we have $N_2=4+4(3^m-1)=4\times3^m$.

iii) Define $N_3$ to be the number of triples $(u,v,w) \in \Ft^* \times \Ft^* \times \Ft^*$ and $(x,y,z) \in \Ftm \times \Ftm \times \Ftm$, which satisfy the following two equations:
$$
\begin{cases}
  ux^{3^{\mmo}+1}+vy^{3^{\mmo}+1}+wz^{3^{\mmo}+1}=0, \\
  ux^{3^{\mmt}+1}+vy^{3^{\mmt}+1}+wz^{3^{\mmt}+1}=0.
\end{cases}
$$
It is easy to see that $\sum_{a,b \in \Ftm} S(a,b)^3=3^{2m}N_3$. Thus, it suffices to determine $N_3$.

If two of $x,y,z$ are zero, then the third one must also be zero. When $x=y=z=0$, we have eight choices of the triple $(u,v,w)$. When exactly one of $x,y,z$ equals $0$, we can use the result of ii). For instance, if $x=0$, then $u \in \{1,2\}$ and the system degenerates to
$$
\begin{cases}
  vy^{3^{\mmo}+1}+wz^{3^{\mmo}+1}=0, \\
  vy^{3^{\mmt}+1}+wz^{3^{\mmt}+1}=0.
\end{cases}
$$
By ii), we have $4(3^m-1)$ choices for the pairs $(v,w)$ and $(y,z)$. Thus, we have $8(3^m-1)$ choices for the triples $(u,v,w)$ and $(0,y,z)$. In total, when exactly one of $x,y,z$ equals $0$, we have $24(3^m-1)$ choices.

When $x,y,z$ are all nonzero, we have
\begin{eqnarray}
  u(\frac{x}{z})^{3^{\mmo}+1}+v(\frac{y}{z})^{3^{\mmo}+1}+w=0,  \label{eqn1} \\
  u(\frac{x}{z})^{3^{\mmt}+1}+v(\frac{y}{z})^{3^{\mmt}+1}+w=0.  \label{eqn2}
\end{eqnarray}
Raising to the $3^{\mpt}$-th power both sides of (\ref{eqn1}) and (\ref{eqn2}), we obtain
\begin{eqnarray}
  u(\frac{x}{z})^{3^{\mpt}+3}+v(\frac{y}{z})^{3^{\mpt}+3}+w=0,  \label{eqn11} \\
  u(\frac{x}{z})^{3^{\mpt}+1}+v(\frac{y}{z})^{3^{\mpt}+1}+w=0.  \label{eqn22}
\end{eqnarray}
Taking the difference between (\ref{eqn11}) and (\ref{eqn22}), we have
$$
u((\frac{x}{z})^{3^{\mpt}+3}-(\frac{x}{z})^{3^{\mpt}+1})+v((\frac{y}{z})^{3^{\mpt}+3}-(\frac{y}{z})^{3^{\mpt}+1})=0.
$$
Dividing $(\frac{y}{z})^{3^{\mpt}+1}$ on both sides of the above equation, we can easily obtain
\begin{equation}
(\frac{x}{y})^{3^{\mpt}+1}=-\frac{v(y^2-z^2)}{u(x^2-z^2)}. \label{eqn3}
\end{equation}
Raising to the third power  both sides of (\ref{eqn2}), we obtain
\begin{equation}
  u(\frac{x}{z})^{3^{\mmo}+3}+v(\frac{y}{z})^{3^{\mmo}+3}+w=0.  \label{eqn4}
\end{equation}
Taking the difference between (\ref{eqn4}) and (\ref{eqn1}), we have
$$
u((\frac{x}{z})^{3^{\mmo}+3}-(\frac{x}{z})^{3^{\mmo}+1})+v((\frac{y}{z})^{3^{\mmo}+3}-(\frac{y}{z})^{3^{\mmo}+1})=0
$$
Dividing $(\frac{y}{z})^{3^{\mmo}+1}$ on both sides of the above equation, we can easily obtain
\begin{equation}
(\frac{x}{y})^{3^{\mmo}+1}=-\frac{v(y^2-z^2)}{u(x^2-z^2)}. \label{eqn5}
\end{equation}
By (\ref{eqn3}) and (\ref{eqn5}), we have $(\frac{x}{y})^{3^{\mpt}+1}=(\frac{x}{y})^{3^{\mmo}+1}$, which is equivalent to
$(\frac{x}{y})^8=1$.
Since $\gcd(8,3^m-1)=2$, we have $(\frac{x}{y})^2=1$. By the symmetry of $x,y,z$, we also conclude that $(\frac{x}{z})^2=1$ and $(\frac{y}{z})^2=1$. Hence, the original system degenerates to
$$
u+v+w=0, \ \  u,v,w \in \gf(3)^*.
$$
It is easy to verify that there are two choices of the triple $(u,v,w)$ and $4(3^m-1)$ choices of the triple $(x,y,z)$. In total, there are $8(3^m-1)$ choices when $x,y,z$ are all nonzero. Hence, $N_3=8+24(3^m-1)+8(3^m-1)=32\times3^m-24$.

iv) The proof is very similar to that of ii), and thus omitted here.
\end{proof}

For $j\in\{0,2\}$, define
$$
n_j=\left|\left\{(a,b) \in \Ftm \times \Ftm : T(a,b)=\pm 3^{\frac{m+j}{2}}\sq\right\}\right|.
$$


For $j\in\{1,3\}$ and $\ep \in \{1,-1\}$, define
$$
n_{\ep,j}=\left|\left\{(a,b) \in \Ftm \times \Ftm : T(a,b)=\ep 3^{\frac{m+j}{2}}\right\}\right|.
$$

Under the assumption $n_{1,3}=0$, which will be verified later (see the proof of Theorem~\ref{thm-delta22} below),
we obtain the value distribution of $T(a,b)$ and $S(a,b)$.

\begin{lemma}\label{QF2}
Let $m \ge 3$ be an odd integer. Assume that $n_{1,3}=0$.
\begin{itemize}
\item[(1)] The value distribution of $T(a,b)$ is as follows: \\
\begin{center}
\ra{1.8}
\begin{tabular}{|c|c|c|}
\hline
\emph{Rank} $r_{a,b}$ & \emph{Value} $T(a,b)$ &  \emph{Multiplicity} \\ \hline
$m$ & $ 3^{\hm}\sq$     &  $\frac{(3^m-1)(8\times 3^m-9\times 3^{m-1}+9)}{8}$ \\ \hline
$m$ & $-3^{\hm}\sq$     &  $\frac{(3^m-1)(8\times 3^m-9\times 3^{m-1}+9)}{8}$ \\ \hline
$m-1$ & $3^{\mpo}$ & $\frac{(3^{m-1}+3^{\mmo})(3^m-1)}{2}$ \\ \hline
$m-1$ & $-3^{\mpo}$ & $\frac{(3^{m-1}-3^{\mmo})(3^m-1)}{2}$ \\ \hline
$m-2$ & $ 3^{\hm+1}\sq$ & $\frac{(3^m-1)(3^{m-1}-1)}{8}$ \\ \hline
$m-2$ & $-3^{\hm+1}\sq$ & $\frac{(3^m-1)(3^{m-1}-1)}{8}$ \\ \hline
$0$ & $3^{m}$ & $1$ \\ \hline
\end{tabular}
\end{center}
\vspace{5pt}
\item[(2)] The value distribution of $S(a,b)$ is as follows:\\
\begin{center}
\ra{1.8}
\begin{tabular}{|c|c|c|}
\hline
\emph{Rank} $r_{a,b}$ & \emph{Value} $S(a,b)$  &  \emph{Multiplicity} \\ \hline
$m, m-2$ & $0$     &  $(3^m-3^{m-1}+1)(3^m-1)$ \\ \hline
$m-1$ & $2\times3^{\mpo}$ & $\frac{(3^{m-1}+3^{\mmo})(3^m-1)}{2}$ \\ \hline
$m-1$ & $-2\times3^{\mpo}$ & $\frac{(3^{m-1}-3^{\mmo})(3^m-1)}{2}$ \\ \hline
$0$ & $2\times3^{m}$ & $1$ \\ \hline
\end{tabular}
\end{center}
\end{itemize}
\end{lemma}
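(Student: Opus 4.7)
My plan is to first nail down the distribution of $T(a,b)$ and then read off the distribution of $S(a,b)$ from the relation $S(a,b)=T(a,b)(1+(-1)^{r_{a,b}})$ of (\ref{eqn-ST}). By Lemma~\ref{QF}, $r_{a,b}\in\{0,m-3,m-2,m-1,m\}$, with rank $0$ occurring only at $(a,b)=(0,0)$, where $T(0,0)=3^m$. Since $q=3\equiv 3\pmod 4$ and $m$ is odd, Lemma~\ref{lemmaquad} pins down $|T(a,b)|$ from the rank and further tells us that the odd ranks $m,m-2$ produce purely imaginary values $\pm 3^{(m+j)/2}\sq$ (with $j=0,2$), while the even ranks $m-1,m-3$ produce real values $\pm 3^{(m+j)/2}$ (with $j=1,3$). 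Using the scaling identity in Lemma~\ref{lemmaquad} with $y=-1$ together with $\eta_0(-1)=-1$, the map $(a,b)\mapsto(-a,-b)$ negates $T(a,b)$ at every odd rank; this forces the $+$ and $-$ multiplicities at ranks $m$ and $m-2$ to coincide, which is why only the combined counts $n_0$ and $n_2$ need be determined there. Incorporating the assumption $n_{1,3}=0$, the remaining unknowns are $n_0,n_{1,1},n_{-1,1},n_2,n_{-1,3}$, five in total.

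To pin these down I would combine the four moment identities in Lemma~\ref{moment} with the total count $\sum_{a,b}1=3^{2m}$, giving five linear equations. The crucial reduction is that $S(a,b)$ vanishes on the odd ranks, so identities (i)--(iii) only see ranks $0$, $m-1$ and $m-3$; after imposing $n_{1,3}=0$ they become equations in $n_{1,1}\pm n_{-1,1}$ and $n_{-1,3}$. The key cancellation is obtained by dividing identity (iii) by $3^m$ and subtracting three times identity (i): the coefficient of $n_{1,1}-n_{-1,1}$ cancels, and one is left with $(3^2-1)\,3^{(m+5)/2}\,n_{-1,3}=0$, forcing $n_{-1,3}=0$. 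Once this is known, identities (i) and (ii) yield $n_{1,1}-n_{-1,1}=3^{(m-1)/2}(3^m-1)$ and $n_{1,1}+n_{-1,1}=3^{m-1}(3^m-1)$, so that $n_{1,1}$ and $n_{-1,1}$ are determined.

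Finally, identity (iv) for $\sum T(a,b)^2$ together with the counting identity forms a $2\times 2$ linear system in $n_0$ and $n_2$ (note that $T^2$ contributes $-3^m$ at rank $m$ and $-3^{m+2}$ at rank $m-2$ because of the $\sq$ factor), whose solution produces the remaining multiplicities. The distribution of $S(a,b)$ is then immediate from $S=T(1+(-1)^{r_{a,b}})$: both odd ranks collapse into $S=0$ with combined multiplicity $n_0+n_2=(3^m-3^{m-1}+1)(3^m-1)$; the rank-$(m-1)$ contributions give $S=\pm 2\cdot 3^{(m+1)/2}$ with multiplicities $n_{1,1}$ and $n_{-1,1}$; and the origin contributes $S=2\cdot 3^m$ once. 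The main obstacle is really just bookkeeping: carefully accounting for the $\sq$ factors inside $T^2$ in identity (iv), and verifying that the linear combination of (i) and (iii) yields the clean cancellation that forces $n_{-1,3}=0$.
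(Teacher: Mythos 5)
Your proposal is correct and follows essentially the same route as the paper: the relation $S(a,b)=T(a,b)\left(1+(-1)^{r_{a,b}}\right)$, the four moment identities of Lemma~\ref{moment}, the counting identity $n_0+n_2+n_{1,1}+n_{-1,1}+n_{1,3}+n_{-1,3}=3^{2m}-1$, and the hypothesis $n_{1,3}=0$ are assembled into the same solvable linear system, with your combination of the first and third moments eliminating $n_{1,1}-n_{-1,1}$ and forcing $n_{-1,3}=0$ being exactly the cancellation hidden in the paper's simplified equations. Your extra observation that $(a,b)\mapsto(-a,-b)$ preserves the rank and negates $T$ at odd ranks is a genuine (and welcome) supplement, since it justifies splitting $n_0$ and $n_2$ evenly between the two conjugate imaginary values, a step the paper's table takes for granted.
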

\begin{proof}
By (\ref{eqn-ST}), we have
$$
S(a,b)=\begin{cases}
          0 & \mbox{if $r_{a,b} \in \{m,m-2\}$,} \\
          2T(a,b) & \mbox{if $r_{a,b} \in \{m-1,m-3\}$.}
       \end{cases}
$$
Therefore, the value distribution of $S(a,b)$ easily follows from that of $T(a,b)$. Note that $T(0,0)=3^m$ and $S(0,0)=2\times3^m$. Below, we are going to determine the value distribution of $T(a,b)$, employing the four moment identities in Lemma~\ref{moment}. Indeed, the moment identities lead to the following four equations:
\begin{align*}
\sum_{a,b\in\gf(3^m)}S(a,b)&=S(0,0)+2\sum_{r_{a,b} \in \{m-1,m-3\}}T(a,b) \\
&=2\times3^m+2(3^{\mpo}(n_{1,1}-n_{-1,1})+3^{\mpt}(n_{1,3}-n_{-1,3}))=2\times3^{2m}. \\
\sum_{a,b\in\gf(3^m)}S(a,b)^2&=S(0,0)^2+\sum_{r_{a,b} \in \{m-1,m-3\}}(2T(a,b))^2 \\ &=4\times3^{2m}+4(3^{m+1}(n_{1,1}+n_{-1,1})+3^{m+3}(n_{1,3}+n_{-1,3}))=4\times3^{3m}. \\
\sum_{a,b\in\gf(3^m)}S(a,b)^3&=S(0,0)^3+\sum_{r_{a,b} \in \{m-1,m-3\}}(2T(a,b))^3 \\ &=8\times3^{3m}+8(3^{\frac{3(m+1)}{2}}(n_{1,1}-n_{-1,1})+3^{\frac{3(m+3)}{2}}(n_{1,3}-n_{-1,3}))=32\times3^{3m}-24\times3^{2m}.\\ \sum_{a,b\in\gf(3^m)}T(a,b)^2&=T(0,0)^2+\sum_{(a,b)\ne(0,0)}T(a,b)^2 \\
&=3^{2m}-3^m n_0+3^{m+1}(n_{1,1}+n_{-1,1})-3^{m+2}n_2+3^{m+3}(n_{1,3}+n_{-1,3})=3^{2m}.
\end{align*}

Simplifying the above four equations leads to
\begin{align*}
n_{1,1}-n_{-1,1}+3(n_{1,3}-n_{-1,3})=3^{\mmo}(3^m-1), \\
n_{1,1}+n_{-1,1}+9(n_{1,3}+n_{-1,3})=3^{m-1}(3^m-1), \\
n_{1,1}-n_{-1,1}+27(n_{1,3}-n_{-1,3})=3^{\mmo}(3^m-1),\\
-n_0+3(n_{1,1}+n_{-1,1})-9n_2+27(n_{1,3}+n_{-1,3})=0.
\end{align*}
Moreover, by definition, we have
$$
n_0+n_2+n_{1,1}+n_{-1,1}+n_{1,3}+n_{-1,3}=3^{2m}-1.
$$
Together with $n_{1,3}=0$, we already have six linear equations with respect to $n_0, n_2, n_{1,1}, n_{-1,1}, n_{1,3}$ and $n_{-1,3}$, from which the value distribution of $T(a,b)$ easily follows.
\end{proof}

%
%
%

\begin{theorem}\label{thm-delta22}
Let $m \geq 3$. Then the ternary code $\tilde{\C}_{(n,3, m, \delta_2)}$ has parameters
$$
\left[ \frac{3^m-1}{2},\, k,\,   d  \right],
$$
where
\begin{eqnarray*}
k=\left\{ \begin{array}{ll}
2m & \mbox{ if $m$ is odd, } \\
\frac{3m}{2}  & \mbox{ if $m$ is even},
\end{array}
\right.
\end{eqnarray*}
and $d=3^{m-1}-3^{\lfloor \mmo \rfloor}$. Moreover, the weight distribution is presented in Table~\ref{table3} when $m$ is even and in Table~\ref{table4} when $m$ is odd.
\end{theorem}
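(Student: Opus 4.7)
The proof combines the trace representation of cyclic codes (Proposition \ref{prop-trace}) with the value distribution of $S(a,b)$ from Lemma \ref{QF2}, using the BCH bound to eliminate the rank $m-3$ case.

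Since the nonzeros of $\tilde{\C}_{(n,3,m,\delta_2)}$ are exactly $\beta^{\delta_1}$ and $\beta^{\delta_2}$, Lemma \ref{lem-delta1} gives the claimed dimension. By Proposition \ref{prop-trace}, every codeword has the form
\[
c_l(a,b) \;=\; \mathrm{Tr}_{3}^{3^{|C_{\delta_1}|}}(a\beta^{-l\delta_1}) + \Trtmt(b\beta^{-l\delta_2}), \quad 0\le l\le n-1.
\]
For odd $m$, I would expand $w(c)$ via the usual orthogonality identity and then substitute $\beta^{-l}=y^2$, $y\in\Ftm^*$ (each $n$-th root of unity is hit twice because $\langle\beta\rangle=(\Ftm^*)^2$). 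A direct computation gives $-2\delta_1\equiv 3^{m-1}+3^{\mmo}$ and $-2\delta_2\equiv 3^{m-1}+3^{\mpo}\pmod{3^m-1}$. Using Frobenius-invariance $\Trtmt(z)=\Trtmt(z^{3^k})$ to shift these exponents into the cyclotomic cosets of $3^{\mmo}+1$ and $3^{\mmt}+1$ respectively, and absorbing the powers of Frobenius via the bijections $A=a^{3^{\mpo}}$, $B=b^{3^{\mmo}}$ on $\Ftm$, the inner sum becomes exactly $T(tA,tB)-1$, and summing over $t\in\Ft^*$ yields
\[
w(c(a,b)) \;=\; 3^{m-1} - \frac{S(A,B)}{6}.
\]
Reading off Lemma \ref{QF2} now gives the four weights $\{0,\,3^{m-1}-3^{\mmo},\,3^{m-1},\,3^{m-1}+3^{\mmo}\}$ with the multiplicities of Table \ref{table4}, and in particular $d=3^{m-1}-3^{\mmo}$.

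To apply Lemma \ref{QF2} one must verify the hypothesis $n_{1,3}=0$; this is precisely where the BCH bound enters. If instead some $(a,b)$ had rank $r_{a,b}=m-3$ with $T(a,b)=3^{\mpt}$, the weight formula would produce a codeword of weight $3^{m-1}-3^{\mpo}$. But
\[
3^{m-1}-3^{\mpo} \;<\; 3^{m-1}-\frac{3^{\mpo}-1}{2} \;=\; \delta_2+1,
\]
violating the BCH lower bound $d\ge\delta_2+1$ for the even-like code (for $m=3$ the value equals $0$, which would create a second zero codeword and contradict the injectivity of the trace parametrization). Hence $n_{1,3}=0$, and the identity $n_{1,3}=n_{-1,3}$ obtained by comparing the first and third moment equations of Lemma \ref{moment} forces $n_{-1,3}=0$ as well.

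For even $m$, one has $|C_{\delta_1}|=m/2$ and $a\in\gf(3^{m/2})$, giving dimension $3m/2$. The reductions now produce $-2\delta_1\equiv 3^{m-1}+3^{m/2-1}$ and $-2\delta_2\equiv 3^{m-1}+3^{m/2}\pmod{3^m-1}$; pushing all traces through $\gf(3^{m/2})$ via transitivity and running the analogous quadratic-form / power-moment computation yields Table \ref{table3}, again with the BCH bound excluding the spurious low-rank contribution. The most delicate part throughout is the careful bookkeeping of the Frobenius substitutions required to make the character sum coincide exactly with $S(A,B)$ (or its even-$m$ analogue); once this alignment is in place, the BCH bound closes the argument cleanly.
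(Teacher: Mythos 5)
Your proposal follows the paper's own proof essentially step for step in the odd-$m$ case, which is where the real work lies: trace representation via Proposition~\ref{prop-trace}, the substitution $\beta^{-l}=y^2$ and Frobenius twists to arrive at $w(c(a,b))=3^{m-1}-\frac{1}{6}S(a,b)$, and then the BCH/subcode bound to rule out weight $3^{m-1}-3^{\mpo}$ and force $n_{1,3}=0$ so that Lemma~\ref{QF2} applies. Your explicit remark that subtracting the first and third moment identities gives $n_{1,3}=n_{-1,3}$ (hence $n_{-1,3}=0$ too) is a nice touch that the paper leaves implicit in solving its linear system, and your separate treatment of $m=3$ via injectivity of the parametrization is if anything slightly more careful than the paper's wording. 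The one place you genuinely diverge is the even-$m$ case: the paper does \emph{not} rerun the quadratic-form/power-moment machinery there; it observes that the length-$n$ code sits in one-to-one weight-halving correspondence with a length-$(3^m-1)$ code whose weight distribution is already known from Luo--Tang--Wang, and simply imports Table~\ref{table3} from that result. Your plan to ``run the analogous computation'' is not supported by the lemmas as stated --- the relevant form $\Tr_3^{3^{m/2}}(ax^{3^{m/2}+1})+\Trtmt(bx^{3^{m/2-1}+1})$ mixes two different trace maps, and neither Lemma~\ref{QF2} nor the moment identities of Lemma~\ref{moment} cover it, so you would need to develop a separate value-distribution lemma from scratch. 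That is doable in principle but is real additional work your sketch elides; the paper's reduction to the known primitive-length result is the cheaper route.
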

\begin{proof}
The conclusion on the dimension of the code follows from Lemma \ref{lem-delta1}. $\tilde{\C}_{(n, 3, m, \delta_2)}$ has two nonzeroes and $\be^{\de_1}$, $\be^{\de_2}$ are two non-conjugate roots of its parity-check polynomial. Note that
\begin{align*}
\frac{3^m-1}{2}-\de_1 &= \frac{3^m-1}{2}-3^{m-1}+1+\frac{3^{\lfloor\frac{m-1}{2}\rfloor}-1}{2} \\ &=\frac{3^{m-1}+3^{\lfloor\frac{m-1}{2}\rfloor}}{2}, \\
\frac{3^m-1}{2}-\de_2 &= \frac{3^m-1}{2}-3^{m-1}+1+\frac{3^{\lfloor\frac{m+1}{2}\rfloor}-1}{2} \\ &=\frac{3^{m-1}+3^{\lfloor\frac{m+1}{2}\rfloor}}{2}.
\end{align*}
Define
\begin{align*}
c_1(a,b)&=\begin{cases}
\left(\Trtmt\left(a\be^{\frac{(3^{m-1}+3^{\frac{m-1}{2}})j}{2}}+b\be^{\frac{(3^{m-1}+3^{\frac{m+1}{2}})j}{2}}\right)\right)_{j=0}^{n-1} & \mbox{if $m \ge 3$ is odd} \\
\left(\Tr_3^{3^{\hm}}\left(a\be^{\frac{(3^{m-1}+3^{ \frac{m}{2}-1})j}{2}}\right)+\Trtmt\left(b\be^{\frac{(3^{m-1}+3^{\frac{m}{2}})j}{2}}\right)\right)_{j=0}^{n-1}
& \mbox{if $m \ge 4$ is even}
\end{cases}\\
&=\begin{cases}
\left(\Trtmt\left(a\al^{(3^{m-1}+3^{\frac{m-1}{2}})j}+b\al^{(3^{m-1}+3^{\frac{m+1}{2}})j}\right)\right)_{j=0}^{n-1} & \mbox{if $m \ge 3$ is odd} \\
\left(\Tr_3^{3^{\hm}}\left(a\al^{(3^{m-1}+3^{\frac{m}{2}-1})j}\right)+\Trtmt\left(b\al^{(3^{m-1}+3^{\frac{m}{2} })j}\right)\right)_{j=0}^{n-1} & \mbox{if $m \ge 4$ is even}
\end{cases}
\end{align*}
where $a,b \in \gf(3^m)$ for $m$ being odd and $a \in \gf(3^{\hm})$, $b \in \gf(3^m)$ for $m$ being even. By Proposition~\ref{prop-trace}, we have
$$
\tilde{\C}_{(n, 3, m, \delta_2)}=\{ c_1(a,b) : a,b \in \Ftm \}
$$
when $m$ is odd and
$$
\tilde{\C}_{(n, 3, m, \delta_2)}=\{ c_1(a,b) : a \in \Fthm, b \in \Ftm \}
$$
when $m$ is even.

When $m$ is even, we have
\begin{align*}
\Tr_3^{3^{\hm}}\left(a\al^{(3^{m-1}+3^{\frac{m}{2}-1})j}\right)&=\Tr_3^{3^{\hm}}\left(a^3\al^{(3^{\frac{m}{2}}+1)j}\right),\\
\Trtmt\left(b\al^{(3^{m-1}+3^{\frac{m}{2}})j}\right)&=\Trtmt\left(b^{3^{\hm}}\al^{(3^{\frac{m}{2}-1}+1)j}\right).
\end{align*}
Since $x\rightarrow x^3$ is a permutation of $\Fthm$ and $x\rightarrow x^{3^{\hm}}$ is a permutation of $\Ftm$, $\tilde{\C}_{(n, 3, m, \delta_2)}$ has the same weight distribution with the following code:
\begin{equation}\label{2:xi6}
\left\{\left(\Tr_3^{3^{\hm}}\left(a\al^{(3^{\hm}+1)j}\right)+\Trtmt\left(b\al^{(3^{\hm-1}+1)j}\right)\right)_{j=0}^{n-1}, \quad a \in \gf(3^{\frac{m}{2}}),b \in \gf(3^m)\right\}.
\end{equation}
We remark that the weight distribution of the following cyclic code has been studied in \cite[Theorem 2]{LTW}:
\begin{equation}\label{2:xi7}
\left\{\left(\Tr_3^{3^{\hm}}\left(a\al^{(3^{\hm}+1)j}\right)+\Trtmt\left(b\al^{(3^{\hm-1}+1)j}\right)\right)_{j=0}^{2n-1}, \quad a \in \gf(3^{\frac{m}{2}}),b \in \gf(3^m)\right\}.
\end{equation}
Note that there is a one-to-one correspondence between the codewords of (\ref{2:xi6}) and (\ref{2:xi7}). Indeed, given a codeword of (\ref{2:xi6}), the concatenation with its copy produces a codeword of (\ref{2:xi7}). Hence, the weight of a codeword in code (\ref{2:xi6}) is half of its corresponding codeword in code (\ref{2:xi7}). Therefore, the weight distribution of code (\ref{2:xi6}) easily follows from that of code (\ref{2:xi7}) \cite[Theorem 2]{LTW}. Thus, the weight distribution of $\tilde{\C}_{(n, 3, m, \delta_2)}$ is obtained and presented in Table~\ref{table3}. It is also known that the minimum distance is $d=3^{m-1}-3^{\hm-1}$.

When $m$ is odd, we need some extra work. Note that
\begin{align*}
\Trtmt\left(a\al^{(3^{m-1}+3^{\frac{m-1}{2}})j}\right)&=\Trtmt\left(a^{3^{\mpo}}\al^{(3^{\mmo}+1)j}\right),\\
\Trtmt\left(b\al^{(3^{m-1}+3^{\frac{m+1}{2}})j}\right)&=\Trtmt\left(b^{3^{\mmo}}\al^{(3^{\mmt}+1)j}\right).
\end{align*}
Since $x\rightarrow x^{3^{\mpo}}$ and $x\rightarrow x^{3^{\mmo}}$ are both permutations of $\Ftm$,
$\tilde{\C}_{(n, 3, m, \delta_2)}$ has the same weight distribution with the following code
$$
\left\{c_2(a,b) : a,b \in \Ftm \right\}.
$$
where
$$
c_2(a,b)=\left(\Trtmt\left(a\al^{(3^{\mmo}+1)j}+b\al^{(3^{\mmt}+1)j}\right)\right)_{j=0}^{n-1}.
$$
Note that
\begin{align}
w(c_2(a,b))&=n-|\{0 \le j \le n-1 \mid \Trtmt(a\al^{(3^{\mmo}+1)j}+b\al^{(3^{\mmt}+1)j})=0 \}| \notag\\
         &=n-\sum_{j=0}^{n-1}\frac13 \sum_{y \in \Ft} \z^{y(\Trtmt(a\al^{(3^{\mmo}+1)j}+b\al^{(3^{\mmt}+1)j}))}\notag\\
         &=n-\frac16\sum_{j=0}^{2n-1}\sum_{y \in \Ft} \z^{y(\Trtmt(a\al^{(3^{\mmo}+1)j}+b\al^{(3^{\mmt}+1)j}))}\notag\\
         &=n-\frac16\sum_{y \in \Ft}\sum_{x\in\Ftm^*} \z^{y(\Trtmt(ax^{3^{\mmo}+1}+bx^{3^{\mmt}+1}))}\notag\\
         &=n+\frac12-\frac16\sum_{y \in \Ft}\sum_{x\in\Ftm} \z^{\Trtmt(yax^{3^{\mmo}+1}+ybx^{3^{\mmt}+1})}\notag\\
         &=\frac{3^m-1}{2}+\frac12-\frac{3^m}{6}-\frac16\sum_{y \in \Ft^*}\sum_{x\in\Ftm} \z^{\Trtmt(yax^{3^{\mmo}+1}+ybx^{3^{\mmt}+1})}\notag\\
         &=3^{m-1}-\frac16\sum_{y \in \Ft^*} T(ya,yb)\notag\\
         &=3^{m-1}-\frac16 S(a,b) \label{eqn-wt}
\end{align}
Hence, we obtain the weight distribution of $\tilde{\C}_{(n, 3, m, \delta_2)}$ (see Table \ref{table4}) directly from the value distribution of $S(a,b)$ in Lemma \ref{QF2}, provided that $n_{1,3}=0$.

We finally prove that $n_{1,3}=0$. By (\ref{eqn-wt}) and Lemma~\ref{QF2}, $n_{1,3}$ is the frequency of codewords of $\tilde{\C}_{(n, q, m, \delta_2)}$ which have the weight $3^{m-1}-3^{\mpo}$. However, $\tilde{\C}_{(n, 3, m, \delta_2)}$ is a subcode of $\C_{(n, 3,m,\de_2)}$, whose minimum distance satisfies $d \ge \de_2=3^{m-1}-1-\frac{3^{\mpo}-1}{2}>3^{m-1}-3^{\mpo}$. Therefore, there must be no codeword of weight $3^{m-1}-3^{\mpo}$, which forces $n_{1,3}=0$. This concludes the proof of Theorem \ref{thm-delta22}.
\end{proof}

\begin{table}
\ra{1.9}
\begin{center}
\caption{The weight distribution of $\tilde{\C}_{(n,3,m,\de_2)}$ when $m\ge4$ is even}
\begin{tabular}{|c|c|}
\hline
Weight  &  Frequency \\ \hline
$0$     &  $1$ \\ \hline
$3^{m-1}-3^{\hm-1}$ & $\frac{3(3^{\hm}-1)(3^{\hm}+1)^2}{8}$ \\ \hline
$3^{m-1}$ & $3^{\hm-1}(3^m-1)$ \\ \hline
$3^{m-1}+3^{\hm-1}$ & $\frac{3(3^{\hm}-1)(3^{m-1}+1)}{4}$ \\ \hline
$3^{m-1}+3^{\hm}$ & $\frac{(3^{\hm-1}-1)(3^m-1)}{8}$ \\ \hline
\end{tabular}
\label{table3}
\end{center}
\end{table}

\begin{table}
\ra{1.9}
\begin{center}
\caption{The weight distribution of $\tilde{\C}_{(n,3,m,\de_2)}$ when $m\ge3$ is odd}
\begin{tabular}{|c|c|}
\hline
Weight  &  Frequency \\ \hline
$0$     &  $1$ \\ \hline
$3^{m-1}-3^{\mmo}$ & $\frac{(3^{m-1}+3^{\mmo})(3^m-1)}{2}$ \\ \hline
$3^{m-1}$ & $(3^m-3^{m-1}+1)(3^m-1)$ \\ \hline
$3^{m-1}+3^{\mmo}$ & $\frac{(3^{m-1}-3^{\mmo})(3^m-1)}{2}$ \\ \hline
\end{tabular}
\label{table4}
\end{center}
\end{table}

\begin{example}
Let $(q, m)=(3,4)$. Then the code $\tilde{\C}_{(n, q, m, \delta_2)}$ of Theorem \ref{thm-delta22} has parameters
$[40, 6, 24]$, and weight enumerator
$
1+300z^{24} + 240z^{27} + 168z^{30} + 20z^{36}.
$
This is the best cyclic code and optimal according to {\rm \cite[p. 305]{Dingbk15}}.
\end{example}

\begin{example}
Let $(q, m)=(3,5)$. Then the code $\tilde{\C}_{(n, q, m, \delta_2)}$ of Theorem \ref{thm-delta22} has parameters
$[121, 10, 72]$, and weight enumerator
$
1+10890z^{72} +39446z^{81}+8712z^{90}.
$
This code has the same parameters as the best ternary linear code known in the Database.
\end{example}

\begin{theorem}\label{thm-delta2}
Let $m \geq 3$. Then the ternary code  $\C_{(n, 3, m, \delta_2)}$ has parameters
$$
\left[ \frac{3^m-1}{2},\, k,\, \de_2 \right],
$$
where
\begin{eqnarray*}
k=\left\{ \begin{array}{ll}
2m+1 & \mbox{ if $m$ is odd, } \\
\frac{3m+2}{2}  & \mbox{ if $m$ is even.}
\end{array}
\right.
\end{eqnarray*}
In addition, the weight distribution is presented in Table~\ref{table5} when $m$ is even and in Table~\ref{table6} when $m$ is odd.
\end{theorem}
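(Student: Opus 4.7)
The plan is to reduce to Theorem~\ref{thm-delta22} by observing that $\C_{(n,3,m,\delta_2)}$ differs from its even-like subcode $\tilde\C_{(n,3,m,\delta_2)}$ only in containing the all-ones vector. Since $g_{(n,3,m,\delta_2)}(x) = \tilde g_{(n,3,m,\delta_2)}(x)/(x-1)$, the dimension is one larger than in Theorem~\ref{thm-delta22}, giving the claimed $k$. Because $(x-1) \nmid g_{(n,3,m,\delta_2)}(x)$, the all-ones vector $\mathbf{1}_n$ lies in $\C_{(n,3,m,\delta_2)}$, and since $n \equiv 1 \pmod 3$ it is not even-like, so
\[
\C_{(n,3,m,\delta_2)} = \tilde\C_{(n,3,m,\delta_2)} + \Ft\,\mathbf{1}_n.
\]
Every codeword is thus of the form $\tilde c(a,b) + c\,\mathbf{1}_n$ with the triple $(a,b,c)$ unique, where $\tilde c(a,b)$ is parametrized as in the proof of Theorem~\ref{thm-delta22}.

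For the weight computation, let $f_j(a,b)$ denote the $j$-th coordinate of $\tilde c(a,b)$ and set $N_u(a,b) = |\{0 \le j \le n-1 : f_j(a,b) = u\}|$ for $u \in \Ft$. Then $w(\tilde c(a,b) + c\,\mathbf{1}_n) = n - N_{-c}(a,b)$. The case $c=0$ recovers Tables~\ref{table3} and \ref{table4}. For $c \ne 0$, character sum orthogonality combined with the same reduction used in the proof of Theorem~\ref{thm-delta22} gives, for $(a,b) \ne (0,0)$,
\[
N_u(a,b) = \frac{n}{3} + \frac{1}{6}\left[\z^{-u}T(a,b) + \z^u T(-a,-b) - \z^{-u} - \z^u\right],
\]
where $T(a,b)$ is as defined preceding Lemma~\ref{QF2}.

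For odd $m$, I would plug the value distribution of $T(a,b)$ from Lemma~\ref{QF2} into this formula. When $T(a,b)$ is purely imaginary (ranks $m$ and $m-2$) one has $T(-a,-b) = -T(a,b)$ and $N_0(a,b) = (n-1)/3$, while $N_1$ and $N_2$ are symmetric integer shifts of $n/3$ controlled by $\mathrm{Im}(T(a,b))/\sqrt{3} \in \{\pm 3^{\mmo}, \pm 3^{\mpo}\}$; when $T(a,b) \in \{\pm 3^{\mpo}\}$ is real (rank $m-1$) one has $T(-a,-b)=T(a,b)$, so $N_1(a,b) = N_2(a,b)$ and both differ from $N_0$ by a quantity linear in $T(a,b)$. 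Summing the resulting weights as $(a,b)$ ranges over the classes in Lemma~\ref{QF2} and $c$ over $\Ft$, augmented by the trivial contributions from $(a,b) = (0,0)$ (weight $0$ if $c=0$ and weight $n$ if $c \ne 0$), yields Table~\ref{table6}. The even $m$ case is handled analogously using the finer value distribution implicit in \cite[Theorem~2]{LTW} rather than only its weight enumerator, producing Table~\ref{table5}. In each case the smallest nonzero weight attained equals $\delta_2$, matching the BCH lower bound.

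The main obstacle is recovering the full value distribution of $T(a,b)$ --- including signs and imaginary parts --- rather than the coarser distribution of $S(a,b) = T(a,b) + T(-a,-b)$ that sufficed in Theorem~\ref{thm-delta22}: separating $N_1$ from $N_2$ in the formula above requires distinguishing $T(a,b)$ from $\overline{T(a,b)}$. For odd $m$ this is exactly what Lemma~\ref{QF2} supplies, the involution $(a,b) \mapsto (-a,-b)$ accounting for the equal multiplicities of $\pm T$ when $T$ is imaginary; for even $m$ the analogous finer information has to be unpacked from \cite{LTW}. Once the per-rank triples $(N_0,N_1,N_2)$ are in hand, the tabulation is routine, and an explicit $\delta_2$-weight codeword --- arising from the smallest-rank case with $c$ chosen to maximize $N_{-c}$ --- verifies $d = \delta_2$.
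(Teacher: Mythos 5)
Your proposal is correct and follows essentially the same route as the paper: the decomposition $\C_{(n,3,m,\delta_2)}=\tilde{\C}_{(n,3,m,\delta_2)}\oplus\Ft\,\mathbf{1}_n$ is just a repackaging of the paper's trace representation $c_1(a,b,c)$ with the extra constant $c$, and your formula for $N_u(a,b)$ in terms of $\z^{-u}T(a,b)+\z^{u}T(-a,-b)$ is algebraically identical to the paper's weight expression $3^{m-1}+\frac{1}{2}(\de_{0,c}-1)-\frac{1}{6}\sum_{y\in\Ft^*}\z^{yc}\eta_0(y)^{r_{a,b}}T(a,b)$, with the same key inputs (the full value distribution of $T(a,b)$ from Lemma~\ref{QF2} for odd $m$, and the value distribution of $U(a,b)$ from \cite{LTW} --- in fact their Theorem~1, not Theorem~2 --- for even $m$). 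The only cosmetic difference is that the paper invokes Lemma~\ref{lemmaquad} to write $T(ya,yb)=\eta_0(y)^{r_{a,b}}T(a,b)$ where you use the conjugation $T(-a,-b)=\overline{T(a,b)}$; these are equivalent.
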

\begin{proof}
The conclusion on the dimension of the code follows from Lemma \ref{lem-delta1}. Note that $\C_{(n, 3, m, \delta_2)}$ has three nonzeroes and $1$, $\be^{\de_1}$, $\be^{\de_2}$ are three roots of its parity-check polynomial, such that every two of them are non-conjugate. Define
\begin{align*}
c_1(a,b,c)&=\begin{cases}
\left(\Trtmt\left(a\be^{\frac{(3^{m-1}+3^{\frac{m-1}{2}})j}{2}}+b\be^{\frac{(3^{m-1}+3^{\frac{m+1}{2}})j}{2}}\right)+c\right)_{j=0}^{n-1} & \mbox{if $m \ge 3$ is odd} \\
\left(\Tr_3^{3^{\hm}}\left(a\be^{\frac{(3^{m-1}+3^{ \frac{m}{2}-1})j}{2}}\right)+\Trtmt\left(b\be^{\frac{(3^{m-1}+3^{\frac{m}{2}})j}{2}}\right)+c\right)_{j=0}^{n-1}
& \mbox{if $m \ge 4$ is even}
\end{cases}\\
&=\begin{cases}
\left(\Trtmt\left(a\al^{(3^{m-1}+3^{\frac{m-1}{2}})j}+b\al^{(3^{m-1}+3^{\frac{m+1}{2}})j}\right)+c\right)_{j=0}^{n-1} & \mbox{if $m \ge 3$ is odd} \\
\left(\Tr_3^{3^{\hm}}\left(a\al^{(3^{m-1}+3^{\frac{m}{2}-1})j}\right)+\Trtmt\left(b\al^{(3^{m-1}+3^{\frac{m}{2} })j}\right)+c\right)_{j=0}^{n-1} & \mbox{if $m \ge 4$ is even}
\end{cases}
\end{align*}
where $a,b \in \gf(3^m)$, $c\in\Ft$ for $m$ being odd and $a \in \gf(3^{\hm})$, $b \in \gf(3^m)$, $c\in\Ft$ for $m$ being even. Similar to the proof of Theorem~\ref{thm-delta22}, by using Proposition~\ref{prop-trace}, we have
$$
\C_{(n, 3, m, \delta_2)}=\{ c_1(a,b,c) : a,b \in \Ftm, c\in\Ft \}
$$
when $m$ is odd and
$$
\C_{(n, 3, m, \delta_2)}=\{ c_1(a,b,c) : a \in \Fthm, b \in \Ftm, c\in\Ft \}
$$
when $m$ is even.

When $m$ is even, with the same argument as in the proof of Theorem~\ref{thm-delta22}, we know that $\C_{(n, q, m, \delta_2)}$ has the same weight distribution with the following code
$$
\left\{c_2(a,b,c) : a \in \Fthm, b \in \Ftm, c \in \Ft\right\},
$$
where
$$
c_2(a,b,c)=\left(\Tr_3^{3^{\hm}}\left(a\al^{(3^{\hm}+1)j}\right)+\Trtmt\left(b\al^{(3^{\hm-1}+1)j}\right)+c\right)_{j=0}^{n-1}.
$$

Recall that $\eta_0$ is the quadratic character of $\Ft$. With the help of Lemma~\ref{lemmaquad}, we have
\begin{align}
w(c_2(a,b,c))&=n-|\{0 \le j \le n-1 \mid \Tr_3^{3^{\hm}}(a\al^{(3^{\hm}+1)j})+\Trtmt(b\al^{(3^{\hm-1}+1)j})+c=0 \}| \notag\\
         &=n-\sum_{j=0}^{n-1}\frac13 \sum_{y \in \Ft} \z^{y(\Tr_3^{3^{\hm}}(a\al^{(3^{\hm}+1)j})+\Trtmt(b\al^{(3^{\hm-1}+1)j})+c)}\notag\\
         &=n-\frac16\sum_{j=0}^{2n-1}\sum_{y \in \Ft} \z^{y(\Tr_3^{3^{\hm}}(a\al^{(3^{\hm}+1)j})+\Trtmt(b\al^{(3^{\hm-1}+1)j})+c)}\notag\\
         &=n-\frac16\sum_{y \in \Ft}\sum_{x\in\Ftm^*} \z^{y(\Tr_3^{3^{\hm}}(ax^{3^{\hm}+1})+\Trtmt(bx^{3^{\hm-1}+1})+c)}\notag\\
         &=n+\frac16\sum_{y \in \Ft}\z^{yc}-\frac16\sum_{y \in \Ft}\sum_{x\in\Ftm} \z^{\Tr_3^{3^{\hm}}(yax^{3^{\hm}+1})+\Trtmt(ybx^{3^{\hm-1}+1})+yc}\notag\\
         &=\frac{3^m-1}{2}+\frac12\de_{0,c}-\frac{3^m}{6}-\frac16\sum_{y \in \Ft^*}\sum_{x\in\Ftm} \z^{\Tr_3^{3^{\hm}}(yax^{3^{\hm}+1})+\Trtmt(ybx^{3^{\hm-1}+1})+yc}\notag\\
         &=3^{m-1}+\frac12(\de_{0,c}-1)-\frac16\sum_{y \in \Ft^*}\z^{yc}\sum_{x\in\Ftm} \z^{\Tr_3^{3^{\hm}}(yax^{3^{\hm}+1})+\Trtmt(ybx^{3^{\hm-1}+1})}\notag\\
         &=3^{m-1}+\frac12 \left(\de_{0,c}-1\right)-\frac16 \sum_{y \in \Ft^*} \z^{yc} U(ya,yb)\notag\\
         &=3^{m-1}+\frac12 \left(\de_{0,c}-1\right)-\frac16 \sum_{y \in \Ft^*} \z^{yc} \eta_0(y)^{r_{a,b}'} U(a,b) \notag
\end{align}
where
\begin{align*}
U(a,b)&=\sum_{x \in \Ftm} \z^{\Trtmt(ax^{3^{\hm}+1}+bx^{3^{\hm-1}+1})}, \\
\de_{0,c}&=\begin{cases}
  1 & \mbox{if $c=0$,} \\
  0 & \mbox{if $c\ne0$,}
\end{cases}
\end{align*}
and $r_{a,b}'$ is the rank of the quadratic form $\Trtmt(ax^{3^{\hm}+1}+bx^{3^{\hm-1}+1})$. Thanks to \cite[Theorem 1]{LTW}, the value distribution of $U(a,b)$ is already known which is presented in the following table:
\begin{center}
\ra{1.75}
\begin{tabular}{|c|c|c|}
\hline
Rank $r_{a,b}'$ & Value $U(a,b)$ &  Multiplicity \\ \hline
$m$ & $3^{\hm}$     &  $\frac{3(3^{\hm}-1)(3^{\hm}+1)^2}{8}$ \\ \hline
$m$ & $-3^{\hm}$ & $\frac{3(3^{\hm}-1)(3^{m-1}+1)}{4}$ \\ \hline
$m-1$ & $ 3^{\frac{m+1}{2}}i$ & $\frac{3^{\hm-1}(3^m-1)}{2}$ \\ \hline
$m-1$ & $-3^{\frac{m+1}{2}}i$ & $\frac{3^{\hm-1}(3^m-1)}{2}$ \\ \hline
$m-2$ & $-3^{\hm+1}$ & $\frac{(3^{\hm-1}-1)(3^{m}-1)}{8}$ \\ \hline
$0$ & $3^{m}$ & $1$ \\ \hline
\end{tabular}
\end{center}
It is easy to see that the weight distribution of $\C_{(n, q, m, \delta_2)}$ (see Table \ref{table5}) follows directly from the value distribution of $U(a,b)$ when $m$ is even. For example, according to the table above, when $r_{a,b}'=m$, $U(a,b)$ takes the value $3^{\hm}$ for $\frac{3(3^{\hm}-1)(3^{\hm}+1)^2}{8}$ times. Thus, if $c=0$,  $w(c_2(a,b,c))$ takes the value $3^{m-1}-3^{\hm-1}$ for $\frac{3(3^{\hm}-1)(3^{\hm}+1)^2}{8}$ times, and if $c=1$ or $2$,  $w(c_2(a,b,c))$ takes the value $3^{m-1}+\frac12 \left(3^{\hm-1}-1\right)$ for $\frac{3(3^{\hm}-1)(3^{\hm}+1)^2}{4}$ times.

When $m$ is odd, with the same argument as in the proof of Theorem~\ref{thm-delta22}, we know that $\C_{(n, q, m, \delta_2)}$ shares the same weight distribution with the following code
$$
\left\{c_3(a,b,c) : a,b \in \Ftm, c \in \Ft\right\},
$$
where
$$
c_3(a,b,c)=\left(\Trtmt\left(a\al^{(3^{\mmo}+1)j}+b\al^{(3^{\mmt}+1)j}\right)+c\right)_{j=0}^{n-1}.
$$
Very similar to the above case of $w(c_2(a,b,c))$, we can show that
$$
w(c_3(a,b,c))=3^{m-1}+\frac12 \left(\de_{0,c}-1\right)-\frac16 \sum_{y \in \Ft^*} \z^{yc}\eta_0(y)^{r_{a,b}} T(a,b).
$$
Employing the value distribution of $T(a,b)$ in Lemma~\ref{QF2}, we obtain the weight distribution of $\C_{(n, 3, m, \delta_2)}$ (see Table~\ref{table6}) directly when $m$ is odd. This completes the proof of Theorem \ref{thm-delta2}.
\end{proof}

\begin{table}
\ra{1.8}
\begin{center}
\caption{The weight distribution of $\C_{(n,3,m,\de_2)}$ when $m \ge 4$ is even}
\begin{tabular}{|c|c|}
\hline
Weight  &  Frequency \\ \hline
$0$     &  $1$ \\ \hline
$\frac{3^m-3^{m-1}-3^{\hm}-1}{2}$ & $\frac{(5\times3^{\hm-1}-1)(3^m-1)}{4}$ \\ \hline
$\frac{3^m-3^{m-1}-2\times3^{\hm-1}}{2}$ & $\frac{3(3^{\hm}-1)(3^{\hm}+1)^2}{8}$ \\ \hline
$\frac{3^m-3^{m-1}-3^{\hm-1}-1}{2}$ & $\frac{3(3^{\hm}-1)(3^{m-1}+1)}{2}$ \\ \hline
$\frac{3^m-3^{m-1}}{2}$ & $3^{\hm-1}(3^m-1)$ \\ \hline
$\frac{3^m-3^{m-1}+3^{\hm-1}-1}{2}$ & $\frac{3(3^{\hm}-1)(3^{\hm}+1)^2}{4}$ \\ \hline
$\frac{3^m-3^{m-1}+2\times3^{\hm-1}}{2}$ & $\frac{3(3^{\hm}-1)(3^{m-1}+1)}{4}$ \\ \hline
$\frac{3^m-3^{m-1}+3^{\hm}-1}{2}$ & $3^{\hm-1}(3^m-1)$ \\ \hline
$\frac{3^m-3^{m-1}+2\times3^{\hm}}{2}$ & $\frac{(3^{\hm-1}-1)(3^m-1)}{8}$ \\ \hline
$\frac{3^m-1}{2}$ & $2$ \\ \hline
\end{tabular}
\label{table5}
\end{center}
\end{table}

\begin{table}
\ra{1.8}
\begin{center}
\caption{The weight distribution of $\C_{(n, 3, m, \de_2)}$ when $m\ge3$ is odd}
\begin{tabular}{|c|c|}
\hline
Weight  &  Frequency \\ \hline
$0$     &  $1$ \\ \hline
$\frac{3^m-3^{m-1}-3^{\mpo}-1}{2}$ & $\frac{(3^{m-1}-1)(3^m-1)}{8}$ \\ \hline
$\frac{3^m-3^{m-1}-2\times3^{\mmo}}{2}$ & $\frac{(3^{m-1}+3^{\mmo})(3^{m}-1)}{2}$ \\ \hline
$\frac{3^m-3^{m-1}-3^{\mmo}-1}{2}$ & $\frac{(8\times3^m-3^{m-1}-8\times3^{\mmo}+9)(3^{m}-1)}{8}$ \\ \hline
$\frac{3^m-3^{m-1}}{2}$ & $(3^{m}-3^{m-1}+1)(3^m-1)$ \\ \hline
$\frac{3^m-3^{m-1}+3^{\mmo}-1}{2}$ & $\frac{(8\times3^m-3^{m-1}+8\times3^{\mmo}+9)(3^{m}-1)}{8}$ \\ \hline
$\frac{3^m-3^{m-1}+2\times3^{\mmo}}{2}$ & $\frac{(3^{m-1}-3^{\mmo})(3^{m}-1)}{2}$ \\ \hline
$\frac{3^m-3^{m-1}+3^{\mpo}-1}{2}$ & $\frac{(3^{m-1}-1)(3^m-1)}{8}$ \\ \hline
$\frac{3^m-1}{2}$ & $2$ \\ \hline
\end{tabular}
\label{table6}
\end{center}
\end{table}

\begin{example}
Let $(q, m)=(3,4)$. Then the code $\C_{(n, q, m, \delta_2)}$ of Theorem \ref{thm-delta2} has parameters
$[40, 7, 22]$, and weight enumerator
$$
1+280z^{22} + 300z^{24} + 336z^{25} + 240z^{27} + 600 z^{28} + 168z^{30} + 240z^{31} + 20z^{36} + 2z^{40}.
$$
According to {\rm \cite[p. 305]{Dingbk15}}, this is the best ternary cyclic code, and has the same parameters as the best ternary
linear code in the Database. Note that for each ternary linear code with parameters $[40, 7, d]$, we have $d \leq 23$.
\end{example}

\begin{example}
Let $(q, m)=(3,5)$. Then the code $\C_{(n, q, m, \delta_2)}$ of Theorem \ref{thm-delta2} has parameters
$[121, 11, 67]$, and weight enumerator
$$
1+2420z^{67}+10890z^{72}+54450z^{76}+39446z^{81}+58806z^{85}+8712z^{90}+2420z^{94}+2z^{121}.
$$
The best ternary linear code known in the Database has parameters $[121, 11, 68]$, which is not known to be cyclic.
\end{example}

\section{Some narrow-sense BCH codes with special designed distances}\label{sec-otherforms}

In this section, for general prime power $q$, we focus on narrow-sense BCH codes of length $n=\frac{q^m-1}{q-1}$ with special designed distances. As stated in Section~\ref{sec-pre}, some information about coset leaders modulo $\frac{q^m-1}{q-1}$ can be obtained via the NDS decomposition. Employing Proposition~\ref{prop-Bose}, we can obtain the Bose distance if the designed distance is of certain special form.

\begin{theorem}\label{thm-spedes}
Let $2 \le \de \le n$ be an integer. Suppose $E(\de)=\ul{V_1}\ul{V_2}\ldots\ul{V_r}$.
\begin{itemize}
\item[i)] Suppose $\ol{\de}$ has only $0$ and $1$ as components. If $\ul{V_1}=\ul{V_2}=\cdots=\ul{V_r}$ or $\ul{V_1}=\ul{V_2}=\cdots=\ul{V_j}<\ul{V_k}$ for some $j$ and all $k$ such that $j < k \le r$, then $\C_{(n,q,m,\de)}$ has Bose distance $d_B=\de$. In particular, if $E(\de)=\ul{V_1}$, then $\C_{(n,q,m,\de)}$ has Bose distance $d_B=\de$.
\item[ii)] Suppose $\ul{V_1}$ has length $l$ and has components either $0$ or $1$. Let $\ul{V_1}>\ul{V_2}$ and $m=al+b$, where $0 \le b \le l-1$.
    \begin{itemize}
    \item[(1)] If $b=0$, then $\C_{(n,q,m,\de)}$ has Bose distance
$$
d_B = E^{-1}(\underbrace{\ul{V_1}\ul{V_1}\ldots\ul{V_1}}_{a}).
$$
\item[(2)]
If $1 \le b \le l-1$, then $\C_{(n,q,m,\de)}$ has Bose distance
$$
d_B \ge E^{-1}(\underbrace{\ul{V_1}\ul{V_1}\ldots\ul{V_1}}_{a}S(T_b(\ul{V_1}))).
$$
In particular, if the last component of $S(T_b(\ul{V_1}))$ is $1$, then the equality holds.
\end{itemize}
\end{itemize}
\end{theorem}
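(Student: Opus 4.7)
The plan is to reduce the theorem to the explicit formulas for $q$-cyclotomic coset leaders supplied by Lemmas~\ref{lem-NDS1} and \ref{lem-NDS2}. The key preliminary observation, which I would establish first, is that the Bose distance $d_B$ of $\C_{(n,q,m,\de)}$ equals $M(\de)$, the smallest coset leader modulo $n$ that is greater than or equal to $\de$. For $\de' \ge \de$, the codes $\C_{(n,q,m,\de')}$ and $\C_{(n,q,m,\de)}$ share the same root set $\bigcup_{i=1}^{\de'-1} C_i$ and $\bigcup_{i=1}^{\de-1} C_i$ respectively if and only if no integer $j \in [\de,\de'-1]$ is a coset leader: any non-leader $j$ in that interval satisfies $j \in C_{j^*}$ with $j^* < \de$ (otherwise $j^*$ itself would be a coset leader in the range), and $\be^j$ is already accounted for. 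The largest such $\de'$ is exactly $M(\de)$, and this matches Proposition~\ref{prop-Bose}, which also confirms that $d_B$ must be a coset leader and equals $\de$ whenever $\de$ itself is one.

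Given this reduction, part i) is immediate. Under the stated assumption that $\ol{\de}$ has only $0$ and $1$ as components and that the NDS decomposition satisfies either $\ul{V_1}=\cdots=\ul{V_r}$ or $\ul{V_1}=\cdots=\ul{V_j}<\ul{V_k}$ for all $k>j$, Lemma~\ref{lem-NDS1}(ii) yields $\de^*=\de$, so $M(\de)=\de=d_B$. The special sub-case $E(\de)=\ul{V_1}$ (i.e. $r=1$) is then either the degenerate instance of this argument or a direct application of Lemma~\ref{lem-NDS1}(iii).

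Part ii) is handled by substituting $s=\de$ into Lemma~\ref{lem-NDS2}, whose hypotheses are precisely those listed in ii). When $b=0$, the lemma supplies the exact value
\[
M(\de) = E^{-1}(\underbrace{\ul{V_1}\ul{V_1}\ldots\ul{V_1}}_{a}),
\]
and when $1 \le b \le l-1$ it supplies the lower bound
\[
M(\de) \ge E^{-1}(\underbrace{\ul{V_1}\ul{V_1}\ldots\ul{V_1}}_{a} S(T_b(\ul{V_1}))),
\]
with equality when the last component of $S(T_b(\ul{V_1}))$ equals $1$. The identity $d_B=M(\de)$ converts these statements into the claimed conclusions about $d_B$.

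The main obstacle, conceptually, is the initial reduction $d_B=M(\de)$; once it is carefully justified (in particular, the observation that no coset leader of an index $j$ in $[\de,\de'-1]$ can lie below $\de$ if we want the root sets to coincide), the rest of the argument is essentially a translation of the NDS decomposition lemmas into the language of Bose distances. A minor point to verify in case (2) of part ii) is that the lower bound produced by Lemma~\ref{lem-NDS2} is indeed a bound on $M(\de)$ itself and not on some larger coset leader, so that one genuinely obtains the asserted lower bound on $d_B$ rather than a weaker statement.
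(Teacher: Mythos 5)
Your proposal is correct and follows essentially the same route as the paper, whose proof simply cites Proposition~\ref{prop-Bose}, Lemma~\ref{lem-NDS1} and Lemma~\ref{lem-NDS2}; your explicit justification of the identity $d_B=M(\de)$ (via the nested root sets $\bigcup_{i<\de'}C_i$) correctly supplies the small bridging step the paper leaves implicit, and the applications of the two NDS lemmas are exactly as intended.
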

\begin{proof}
These results are immediate consequences of Proposition~\ref{prop-Bose}, Lemma~\ref{lem-NDS1} and Lemma~\ref{lem-NDS2}.
\end{proof}

The theorem above is very powerful in determining the Bose distance of $\C_{(n,q,m,\de)}$ for some special $\de$. In the following, we give several examples.

\begin{example}
For $q=3$, $m=6$ and $\de=110$, consider the code $\C_{(364, 3,6,110)}$. Note that
$$
\ol{\de}=(0,1,1,0,0,2)=\ul{V_1}\ul{V_2},
$$
where $\ul{V_1}=(0,1,1)$ and $\ul{V_2}=(0,0,2)$. Since $\ul{V_1}>\ul{V_2}$, by ii) of Theorem~\ref{thm-spedes},
the Bose distance
$$
d_B = E^{-1}(\ul{V_1}\ul{V_1})=E^{-1}(0,1,1,0,1,1)=112.
$$
Indeed, the smallest coset leader no less than $\de=110$ is just $112$, where $C_{112}=\{112,280,336\}$.
\end{example}

\begin{example}
For $q=3$, $m=5$ and $\de=29$, consider the code $\C_{(121,3,5,29)}$. Note that
$$
\ol{\de}=(0,1,0,0,2)=\ul{V_1}\ul{V_2},
$$
where $\ul{V_1}=(0,1)$ and $\ul{V_2}=(0,0,2)$. Since $\ul{V_1}>\ul{V_2}$, by ii) of Theorem~\ref{thm-spedes},
the Bose distance
$$
d_B = E^{-1}(\ul{V_1}\ul{V_1}S(T_1(\ul{V_1})))=E^{-1}(0,1,0,1,1)=31.
$$
Indeed, the smallest coset leader no less than $\de=29$ is just $31$, where $C_{31}=\{31,37,91,93,111\}$.
\end{example}

\begin{example}
For $q=7$, $m=5$ and $\de=393$, consider the code $\C_{(2801,7,5,393)}$. Note that
$$
\ol{\de}=(0,1,1,0,1)=\ul{V_1}\ul{V_2},
$$
where $\ul{V_1}=(0,1,1)$ and $\ul{V_2}=(0,1)$. Since $\ul{V_1}>\ul{V_2}$, by ii) of Theorem~\ref{thm-spedes},
the Bose distance
$$
d_B \ge E^{-1}(\ul{V_1}S(T_2(\ul{V_1})))=E^{-1}(0,1,1,0,2)=394.
$$
Indeed, the smallest coset leader no less than $\de=393$ is just $394$, where $C_{394}=\{394,694,2057,2500,2758\}$. Hence, we have the Bose distance $d_B=394$.
\end{example}

Below, we consider two special classes of designed distances. First, we consider BCH codes with designed distances $\frac{q^i-1}{q-1}$, where $1 \le i \le m-1$.

\begin{theorem}\label{thm-spedesdis1}
For $1 \le i \le m-1$, $\C_{(n,q,m,\frac{q^i-1}{q-1})}$ has Bose distance $d_B=\frac{q^i-1}{q-1}$. Furthermore, if $1 \le i \le \lceil \frac{m}{2} \rceil$, then the code $\C_{(n,q,m,\frac{q^i-1}{q-1})}$ has parameters
$$
\left[\frac{q^m-1}{q-1}, \, \frac{q^m-1}{q-1}-m(q^{i-1}-1), \, d\right],
$$
where $d \ge \frac{q^i-1}{q-1}$. In particular, if $i \mid m$, then $d=\frac{q^i-1}{q-1}$.
\end{theorem}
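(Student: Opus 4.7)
The plan is to deduce the three assertions of the theorem as direct applications of the machinery already developed in the paper, so the proof amounts to a careful bookkeeping exercise rather than new combinatorics.

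First I would establish the Bose distance. Write $\delta = \frac{q^i-1}{q-1} = 1+q+q^2+\cdots+q^{i-1}$, so the $q$-adic sequence is
$$
\ol{\delta} = (\underbrace{0,0,\ldots,0}_{m-i},\underbrace{1,1,\ldots,1}_{i}).
$$
Reading from left to right the entries are nondecreasing, so $\ol{\delta}$ is itself a single NDS, i.e.\ $E(\delta)=\ul{V_1}$ with $r=1$. By part iii) of Lemma~\ref{lem-NDS1}, $\delta$ is a coset leader modulo $n$, and Proposition~\ref{prop-Bose} then yields $d_B=\delta$. (Equivalently, this is the special case of part i) of Theorem~\ref{thm-spedes}.)

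Next I would read off the dimension from Theorem~\ref{thm-dim}. I need to check that both hypotheses of that theorem are satisfied when $1\le i\le \lceil m/2\rceil$. Since $n=\tfrac{q^m-1}{q-1}>q^{m-1}\ge q^{\lfloor m/2\rfloor}$ (for $m\ge 2$), the first hypothesis holds. For the second, one computes $\frac{nq^{\lceil m/2\rceil}}{q^m-1}=\frac{q^{\lceil m/2\rceil}}{q-1}$, and $\delta=\frac{q^i-1}{q-1}\le\frac{q^{\lceil m/2\rceil}-1}{q-1}$ is within the allowed range. Theorem~\ref{thm-dim} then gives
$$
k = n - m\Bigl\lceil (\delta-1)\bigl(1-\tfrac{1}{q}\bigr)\Bigr\rceil
  = n - m\Bigl\lceil \tfrac{q(q^{i-1}-1)}{q-1}\cdot \tfrac{q-1}{q}\Bigr\rceil
  = n - m(q^{i-1}-1),
$$
exactly as claimed. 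The BCH bound immediately gives $d\ge\delta$.

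Finally, for the case $i\mid m$, write $m=ij$ and factor
$$
q^m-1 = (q^i-1)\bigl(q^{i(j-1)}+q^{i(j-2)}+\cdots+q^i+1\bigr),
$$
so that dividing by $q-1$ shows $\delta \mid n$. Lemma~\ref{lem-mineqdes} then forces $d=\delta$, completing the proof.

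There is no real obstacle here; every step is a direct invocation of a previously established result. The only mildly delicate point is verifying the range condition in Theorem~\ref{thm-dim}, which is why I would spell out the inequality $\delta\le \frac{q^{\lceil m/2\rceil}}{q-1}$ explicitly.
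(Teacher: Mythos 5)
Your proposal is correct and follows exactly the paper's route: NDS decomposition plus Lemma~\ref{lem-NDS1}(iii) and Proposition~\ref{prop-Bose} for the Bose distance, Theorem~\ref{thm-dim} for the dimension, and Lemma~\ref{lem-mineqdes} via $\delta\mid n$ for the case $i\mid m$. The only difference is that you spell out the verification of the range hypothesis of Theorem~\ref{thm-dim} and the factorization $q^m-1=(q^i-1)(q^{i(j-1)}+\cdots+1)$, which the paper leaves implicit; these details are accurate.
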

\begin{proof}
Note that the sequence
$$
\ol{\frac{q^i-1}{q-1}}=(\underbrace{0,\ldots,0}_{m-i},1,\ldots,1)
$$
is an NDS. By iii) of Lemma~\ref{lem-NDS1}, $\frac{q^i-1}{q-1}$ is a coset leader. Then, by Proposition~\ref{prop-Bose}, we have $d_B=\frac{q^i-1}{q-1}$. When $1 \le i \le \lceil \frac{m}{2} \rceil$, the conclusion on dimensions follows from Theorem~\ref{thm-dim}. If $i \mid m$, by Lemma~\ref{lem-mineqdes}, we have $d=\frac{q^i-1}{q-1}$.
\end{proof}

It would be good if the following open problem can be settled.

\begin{problem}
Dose the code $\C_{(n, q, m, (q^i-1)/(q-1))}$ have minimum distance $d=(q^i-1)/(q-1)$, where $1 \leq i \leq m-1$ ?
\end{problem}

Note that Theorem~\ref{thm-spedesdis1} gives an affirmative answer to the open problem, when $i \mid m$.
Next, we consider BCH codes with designed distances $q^i+l$, where $1 \le i \le \lceil \frac{m}{2} \rceil-1$ and $1 \le l \le q-1$.

\begin{theorem}\label{thm-spedesdis2}
For $1 \le i \le \lceil \frac{m}{2} \rceil-1$ and $1 \le l \le q-1$, $\C_{(n,q,m,q^i+l)}$ has Bose distance $d_B=q^i+l$. Furthermore, the code $\C_{(n,q,m,q^i+l)}$ has parameters
$$
\left[\frac{q^m-1}{q-1}, \, \frac{q^m-1}{q-1}-m\left\lceil (q^i+l-1)\left(1-\frac{1}{q}\right) \right\rceil, \, d\right],
$$
where $d \ge q^i+l$. In particular, if $(q^i+l) \mid n$, then $d=q^i+l$.
\end{theorem}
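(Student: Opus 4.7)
The plan is to mirror the structure of Theorem~\ref{thm-spedesdis1}: establish the Bose distance, then the dimension, then the minimum distance.

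For the Bose distance, by Proposition~\ref{prop-Bose} it suffices to show that $\delta := q^i + l$ is itself a coset leader modulo $n$. The $q$-adic expansion $\overline{\delta}$ has a $1$ in position $i$ and $l$ in position $0$, with zeros elsewhere. When $i = 1$, $\overline{\delta} = (0,\ldots,0,1,l)$ is a single NDS; when $i \geq 2$, it decomposes as $\ul{V_1}\,\ul{V_2}$ with $\ul{V_1} = (0,\ldots,0,1)$ of length $m-i$ and $\ul{V_2} = (0,\ldots,0,l)$ of length $i$. If $l = 1$ then all components lie in $\{0,1\}$ and a direct comparison under the NDS ordering gives $\ul{V_1} < \ul{V_2}$ (the shorter block hits its leading $1$ sooner when read from the left), so Theorem~\ref{thm-spedes}(i) applies and yields $d_B = \delta$. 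For $l \geq 2$, Lemma~\ref{lem-NDS1} no longer applies directly, so I would verify by hand that $\delta q^j \bmod n \geq \delta$ for each $1 \leq j \leq m-1$. For small $j$ the product $\delta q^j = q^{i+j} + l q^j$ stays below $n$ and strictly exceeds $\delta$ because its leading digit has moved to a higher position; when $j \geq m - i$, one uses $q^m \equiv 1 \pmod n$ to fold the leading power back into the low-order range, and the hypothesis $i \leq \lceil m/2 \rceil - 1$ guarantees the reduced residue still dominates $\delta$.

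For the dimension, once $\delta$ is a coset leader, $g_{(n,q,m,\delta)}(x) = \mathrm{lcm}(m_1(x), \ldots, m_{\delta-1}(x))$. Since $\delta - 1 \leq q^{\lceil m/2 \rceil - 1} + q - 2$ keeps the index range within the regime where the [AKS, Lemma 8] bound applies, every $C_j$ for $1 \leq j \leq \delta - 1$ has cardinality $m$. The distinct cosets in $\{C_1, \ldots, C_{\delta-1}\}$ are indexed by those $j$ with $q \nmid j$ (since $C_{qj} = C_j$ and distinct representatives in this small range yield distinct cosets), and their count equals $\lceil (\delta-1)(1 - 1/q) \rceil$. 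Each contributes $m$ to $\deg g_{(n,q,m,\delta)}$, giving the stated dimension. The bound $d \geq \delta$ is the BCH bound by definition of the designed distance; when $(q^i + l) \mid n$, Lemma~\ref{lem-mineqdes} upgrades this to $d = \delta$.

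The main obstacle will be the coset-leader verification for $l \geq 2$, since Lemma~\ref{lem-NDS1} assumes digits in $\{0,1\}$ and does not cover the digit $l \geq 2$ appearing in $\overline{\delta}$. A careful direct computation of $\delta q^j \bmod n$ over $j = 1, \ldots, m-1$, with the single wraparound case tracked explicitly and the hypothesis $i \leq \lceil m/2 \rceil - 1$ playing its essential role precisely there, is the core computational content of the argument.
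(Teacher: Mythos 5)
Your overall strategy is the paper's: reduce the Bose-distance claim to showing $\de=q^i+l$ is a coset leader via Proposition~\ref{prop-Bose}, get the dimension from the cyclotomic coset sizes, and finish with the BCH bound plus Lemma~\ref{lem-mineqdes}. Your treatment of $l=1$ through the NDS machinery (Lemma~\ref{lem-NDS1}, with $\ul{V_1}<\ul{V_2}$ because $m-i>i$) is correct and is a slightly cleaner route than the paper, which handles all $l$ uniformly by computing the shifted digit strings $\ol{q^j\de}$ for $1\le j\le m-1$ directly, including an explicit reduced string for $j=m-1$. The dimension count is fine in spirit, but the assertion that distinct $j\le\de-1$ with $q\nmid j$ yield distinct cosets of size $m$ is precisely the nontrivial content of the cited results behind Theorem~\ref{thm-dim}; the paper simply invokes that theorem, and you should too rather than re-assert it.

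The genuine gap is the case $l\ge2$, which you explicitly defer as ``the core computational content,'' and your sketch of how it would go is wrong exactly where the argument is delicate. You claim reduction modulo $n$ is needed only for shifts $j\ge m-i$. That is true for $i\ge2$, but for $i=1$ and $l\ge2$ one already has $q^{m-2}\de=q^{m-1}+lq^{m-2}>n$, so the fold-back occurs at $j=m-1-i$ as well, and the reduced residue there, $(l-1)q^{m-2}-\frac{q^{m-2}-1}{q-1}$, must be compared with $\de$ by hand. This is not a bookkeeping quibble: for $(q,m,i,l)=(3,3,1,2)$ one gets $q\de=15\equiv 2\pmod{13}$, which is smaller than $\de=5$, so $\de$ is not a coset leader and the asserted conclusion fails at that boundary point. (The paper's own proof shares this blind spot: its displayed formula for $\ol{q^{m-1}\de}$ when $2\le l\le q-1$ presupposes $i\ge2$, and its claim that $\ol{q^j\de}>\ol{\de}$ for all $1\le j\le m-2$ is false in this example.) So the deferred verification cannot be waved through by appealing to $i\le\lceil m/2\rceil-1$: you must carry out the digit-string comparison for every shift, track the extra wraparound at $j=m-2$ when $i=1$ and $l\ge2$, and you will find this forces either $m\ge4$ or an additional hypothesis in that subcase.
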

\begin{proof}
The conclusion on the dimension follows from Theorem~\ref{thm-dim}. For $1 \le i \le \lceil \frac{m}{2} \rceil-1$ and $1 \le l \le q-1$, let $\de=q^i+l$. To prove that the Bose distance is equal to $\de$, by Proposition~\ref{prop-Bose}, it suffices to show that $\de$ is a coset leader. Note that
$$
\ol{\de}=(\underbrace{0,\ldots,0}_{m-i-1},1,\underbrace{0,\ldots,0}_{i-1},l).
$$
We are going to show that $\de$ is a coset leader by analyzing $\ol{\de}$. Direct computation shows that for $1 \le i \le m-2$,
$\ol{q^i\de}>\ol{\de}$. Moreover,
$$
\ol{q^{m-1}\de}=\begin{cases}
    (1,\underbrace{0,\ldots,0}_{m-i-1},1,\underbrace{0,\ldots,0}_{i-1}) & \mbox{if $l=1$},  \\
    (0,\underbrace{q-l,\ldots,q-l}_{m-i-1},q-l+1,\underbrace{q-l,\ldots,q-l}_{i-2},q-l+1) & \mbox{if $2 \le l \le q-1$,}
    \end{cases}
$$
which implies that $\ol{q^{m-1}\de} > \ol{\de}$. Consequently, $\de$ is a coset leader modulo $n$. In addition, if $(q^i+l) \mid n$, by Lemma~\ref{lem-mineqdes}, we have $d=q^i+l$.
\end{proof}

\begin{example}
Let $(q, m)=(3,3)$. Then the code $\C_{(n,q, m, q+1)}$ has parameters
$[13, 7, 4]$. The optimal linear code in the Database has parameters $[13,7,5]$, which is not known to be cyclic.
\end{example}

\begin{example}
Let $(q, m)=(3,4)$. Then the code $\C_{(n,q, m, q+1)}$ has parameters
$[40, 32, 4]$, and is the best ternary cyclic code according to {\rm \cite[p. 306]{Dingbk15}}.
The optimal linear code in the Database has parameters $[40,32,5]$, which is not known to be cyclic.
\end{example}

\begin{example}
Let $(q, m)=(3,4)$. Then the code $\C_{(n, q, m, q+2)}$ has parameters
$[40, 28, 5]$. The best linear code in the Database has parameters $[40,28,6]$, which is not known to be cyclic.
\end{example}

\begin{example}
Let $(q, m)=(3,5)$. Then the code $\C_{(n, q, m, q+2)}$ has parameters
$[121, 106, 6]$. The best linear code in the Database has the same parameters, and is not known to be cyclic.
\end{example}

It would be nice if the following open problem on the minimum distance of $\C_{(n, q, m, q+1)}$ could
be settled.

\begin{problem}
Does the code $\C_{(n, q, m, q+1)}$ have minimum distance $d=q+1$?
\end{problem}

Note that Theorem~\ref{thm-spedesdis2} provides an affirmative answer to the open problem in the case that $m>2$ is even.

\section{Open problems and concluding remarks}

Although BCH codes are introduced in almost every book on coding theory, a very small number of results about them are
available in the literature (see \cite{ACS92, AS94, Charp90, Charp98, DDZ15} for information). In general, it is a hard problem
to determine the dimension of a BCH code, and it is much harder to find its minimum distance.

The known results on BCH codes are almost entirely for the primitive length $n=q^m-1$. To our knowledge, there are only a few papers on BCH codes with non-primitive lengths in the literature. This is because it is harder to deal with BCH codes with non-primitive lengths. This paper initializes the study of narrow-sense BCH codes of length $n=(q^m-1)/(q-1)$, and has the following contributions:
\begin{itemize}
\item The parameters of some narrow-sense BCH codes with large dimensions were determined in Section
         \ref{sec-largedim}.
\item The parameters of some narrow-sense BCH ternary codes with small dimensions were settled in Section
         \ref{sec-smalldim}. Specifically, we determined the weight distributions of the ternary BCH codes
         $\C_{(n,q,m,\delta_1)}$, $\tilde{\C}_{(n, q, m, \delta_1)}$, $\C_{(n,q,m,\delta_2)}$ and
         $\tilde{\C}_{(n, q, m, \delta_2)}$ in Theorems \ref{thm-delta1}, \ref{thm-delta11}, \ref{thm-delta2} and
         \ref{thm-delta22}, respectively.  A class of optimal BCH ternary codes were identified.
\item The parameters of some narrow-sense BCH codes with designed distances of special forms were settled
          in Section \ref{sec-otherforms}.
\end{itemize}

This paper only initialized the investigation of narrow-sense BCH codes of length $\frac{q^m-1}{q-1}$ over finite fields. There are many open problems on these codes. Below we mention a few open problems regarding these codes.

\begin{problem}
For $q=3$, determine the parameters of
$\C_{(n,q,m,\delta_i)}$ and $\tilde{\C}_{(n, q, m, \delta_i)}$ for $3 \leq i \leq \lceil m/4 \rceil$.
\end{problem}

For the case $q=3$, we did find $\delta_i$ for all $i$ with
$3 \leq i \leq \lceil m/4 \rceil$ in Section \ref{sec-otherdeltai}.
The dimensions of $\C_{(n,3,m,\delta_i)}$ and $\tilde{\C}_{(n, 3, m, \delta_i)}$ can thus
be determined for $i$ with $3 \leq i \leq \lceil m/4 \rceil$ recursively, given the dimensions of
$\C_{(n,3,m,\delta_2)}$ and $\tilde{\C}_{(n, 3, m, \delta_2)}$ computed earlier in this paper.
Specifically, for $i \geq 1$ we have
$$
\dim(\C_{(n,3,m,\delta_{i+1})})=\dim(\C_{(n,3,m,\delta_{i})})+|C_{\delta_i}| \mbox{ and }
\dim(\tilde{\C}_{(n,3,m,\delta_{i+1})})=\dim(\tilde{\C}_{(n,3,m,\delta_{i})})+|C_{\delta_i}|.
$$
The remaining task is to compute the cardinalities of the cyclotomic cosets $C_{\delta_i}$.
But it would be very hard to determine the
minimum distances of the ternary BCH codes $\C_{(n,3,m,\delta_i)}$ and $\tilde{\C}_{(n, 3, m, \delta_i)}$ for $i\ge 3$.

The case that $q \geq 4$ is much more complicated. We have the following open problems for this case.

\begin{problem}
For $q>3$, find the largest coset leader $\delta_1$ and determine the parameters of
$\C_{(n,q,m,\delta_1)}$ and $\tilde{\C}_{(n, q, m, \delta_1)}$.
\end{problem}


\begin{problem}
For $q>3$, find the second largest coset leader $\delta_2$ and determine the parameters of
$\C_{(n,q,m,\delta_2)}$ and $\tilde{\C}_{(n, q, m, \delta_2)}$.
\end{problem}


The codes $\C_{(n, q, m, (q^i-1)/(q-1))}$ have
also very good parameters according to our Magma examples. Hence, it is worthy to attack the following open problem.

\begin{problem}
Determine the dimension of $\C_{(n, q, m, (q^i-1)/(q-1))}$ for
$\lceil \frac{m}{2} \rceil < i  \leq  m-1$.
\end{problem}

It is possible to find the dimension of the code $\C_{(n, q, m, (q^{m-1}-1)/(q-1))}$. Examples of this code
suggest that the dimension of this code is lower bounded by
$$
\binom{p+m-2}{m-1}^s+1,
$$
where $q=p^s$ and $p$ is a prime. In general, the dimension of this code is much larger than the lower bound above. The reader is very welcome to settle the open problems in this paper.

While primitive narrow-sense BCH codes contain many good linear codes \cite{Dingbk15,DDZ15}, as shown by many examples in this paper, narrow-sense BCH codes of length $n=(q^m-1)/(q-1)$ also include many optimal linear codes. Very recently, new infinite families of $2$-designs and $3$-designs from linear codes are presented in \cite{D,DL}. The ternary narrow-sense BCH codes studied in Section~\ref{sec-smalldim} can be employed for  constructing $2$-designs and
some Steiner systems \cite{DL}. These nice applications are some of the motivations for studying narrow-sense BCH
codes of length $n=(q^m-1)/(q-1)$.

Finally, we point out an application of some of the ternary codes of this paper in secret sharing. Every linear code over
$\gf(q)$ can be employed to construct secret sharing schemes \cite{ADHK,CDY05,Mass93,YD06}. In order to make such
secret sharing scheme to have interesting access structures, we need a linear code $\C$ over $\gf(q)$ such that
\begin{eqnarray}\label{eqn-sss}
\frac{w_{min}}{w_{max}}>\frac{q-1}{q},
\end{eqnarray}
where $w_{max}$ and $w_{min}$ denote the maximum and minimum nonzero weight in $\C$, respectively.

The ternary codes of Theorems \ref{thm-delta1} and \ref{thm-delta22} satisfy the inequality in (\ref{eqn-sss})
when $m \geq 5$, and the codes of Theorem \ref{thm-delta11} have only one nonzero weight and obviously
satisfy the inequality in (\ref{eqn-sss}). Therefore, all the codes in Theorems \ref{thm-delta1},  \ref{thm-delta11},
and \ref{thm-delta22} can be employed to obtain secret sharing schemes with interesting access structures using
the framework documented in \cite{ADHK,CDY05,Mass93,YD06}.

\section*{Acknowledgements}

The authors are very grateful to the reviewers and the Associate Editor, Dr. Jyrki Lahtonen, for their detailed comments and suggestions that much improved the presentation and quality of this paper. The second author thanks Dr. Pascale Charpin for providing him with helpful information on narrow-sense primitive BCH codes.

\end{document}